\documentclass[11pt]{article}
\usepackage[top=1in, bottom=1in, left=1in, right=1in]{geometry}
\usepackage{amsmath, amsfonts, amssymb, amsthm}

\usepackage{cite}

\usepackage[dvipsnames]{xcolor}
\usepackage{bbm}
\usepackage{tikz} 
\usepackage{graphicx}
\usepackage{authblk}
\usepackage{caption}
\usepackage{subcaption}

\usepackage{complexity}

\usepackage{epstopdf}
\usepackage{wrapfig}
\usepackage{hyperref}

\usepackage{xcolor}
\usepackage{tikz} 

\usepackage{graphicx}
\usepackage{caption}
\usepackage{subcaption}
\usepackage{mathtools}
\usepackage{mathrsfs}

\usepackage{listings}
\usepackage[framed, numbered]{matlab-prettifier}

\usepackage{hyperref}
\usepackage[shortlabels]{enumitem}

\usepackage{algorithm}
\usepackage{algpseudocode}

\usepackage{booktabs}
\usetikzlibrary{arrows.meta,calc,positioning,fit,backgrounds}

\newcommand{\zo}{\{0,1\}}
\newcommand{\ugets}{\stackrel{u}{\gets}}

\DeclareMathOperator{\Rec}{\mathsf{Rec}}

\DeclareMathOperator{\Supp}{Supp}

\newcommand{\abs}[1]{\left| #1 \right|} 

\theoremstyle{definition}

\newtheorem*{definition}{Definition}

\theoremstyle{plain}  
\newtheorem{theorem}{Theorem}
\newtheorem{lemma}[theorem]{Lemma}

\newtheorem{corollary}[theorem]{Corollary}

\theoremstyle{remark}  
\newtheorem{rem}[theorem]{Remark}

\def\ddefloop#1{\ifx\ddefloop#1\else\ddef{#1}\expandafter\ddefloop\fi}

\def\ddef#1{\expandafter\def\csname bb#1\endcsname{\ensuremath{\mathbb{#1}}}}
\ddefloop ABCDEFGHIJKLMNOPQRSTUVWXYZ\ddefloop

\def\ddef#1{\expandafter\def\csname c#1\endcsname{\ensuremath{\mathcal{#1}}}}
\ddefloop ABCDEFGHIJKLMNOPQRSTUVWXYZ\ddefloop

\def\ddef#1{\expandafter\def\csname tt#1\endcsname{\ensuremath{\mathtt{#1}}}}
\ddefloop ABCDEFGHIJKLMNOPQRSTUVWXYZ\ddefloop

\def\ddef#1{\expandafter\def\csname scr#1\endcsname{\ensuremath{\mathscr{#1}}}}
\ddefloop ABCDEFGHIJKLMNOPQRSTUVWXYZ\ddefloop

\def\ddef#1{\expandafter\def\csname v#1\endcsname{\ensuremath{\boldsymbol{#1}}}}
\ddefloop ABCDEFGHIJKLMNOPQRSTUVWXYZabcdefghijklmnopqrstuvwxyz\ddefloop

\def\ddef#1{\expandafter\def\csname v#1\endcsname{\ensuremath{\boldsymbol{\csname #1\endcsname}}}}
\ddefloop {alpha}{beta}{gamma}{delta}{epsilon}{varepsilon}{zeta}{eta}{theta}{vartheta}{iota}{kappa}{lambda}{mu}{nu}{xi}{pi}{varpi}{rho}{varrho}{sigma}{varsigma}{tau}{upsilon}{phi}{varphi}{chi}{psi}{omega}{Gamma}{Delta}{Theta}{Lambda}{Xi}{Pi}{Sigma}{varSigma}{Upsilon}{Phi}{Psi}{Omega}{ell}\ddefloop

\newcommand{\N}{{\mathbb{N}}}

\definecolor{gray85}{gray}{0.85} 
\definecolor{gray8}{gray}{0.8} 
\definecolor{gray7}{gray}{0.7} 
\definecolor{gray6}{gray}{0.6} 
\definecolor{gray5}{gray}{0.5} 
\definecolor{gray4}{gray}{0.4} 
\definecolor{gray35}{gray}{0.35} 

\definecolor{forestgreen}{rgb}{0.13, 0.55, 0.13}

\definecolor{darkgray}{rgb}{.55,.55,.55}
\definecolor{lightgray}{rgb}{.8,.8,.8}
\definecolor{verylightgray}{rgb}{1,1,1} 

\graphicspath{{Figures/}}

\newcommand{\vladimir}[1]{\textcolor{blue}{\bf !}\marginpar{\parbox{20mm}{\scriptsize\raggedright\textcolor{blue}{Vladimir: #1}}}}
\newcommand{\marshall}[1]{\textcolor{red}{\bf !}\marginpar{\parbox{20mm}{\scriptsize\raggedright\textcolor{red}{Marshall: #1}}}}
\newcommand{\morgan}[1]{\textcolor{violet}{\bf !}\marginpar{\parbox{20mm}{\scriptsize\raggedright\textcolor{violet}{Morgan: #1}}}}

\renewcommand{\vladimir}[1]{}
\renewcommand{\marshall}[1]{}
\renewcommand{\morgan}[1]{}

\newcommand{\OR}{\mathsf{OR}}
\newcommand{\AND}{\mathsf{AND}}
\newcommand{\MAJ}{\mathsf{MAJ}}
\newcommand{\XOR}{\mathsf{XOR}}
\newcommand{\EMAJ}{\mathsf{EMAJ}}
\newcommand{\Cube}{\mathtt{Cube}}
\newcommand{\Image}{\mathrm{Im}}

\newcommand{\ZZ}{\mathbb{Z}}

\newcommand{\CRT}{\mathsf{CRT}}

\title{On the Complexity of Symmetric Boolean Functions \\in Multiparty Private Simultaneous Messages}
        \author[1]{Marshall Ball}
	\author[1]{Peter Crawford-Kahrl}
        \author[2]{Vladimir Podolskii}
        \author[2]{Morgan Prior}
        \affil[1]{New York University (\texttt{marshall@cs.nyu.edu}, \texttt{pjc8515@nyu.edu})}
        \affil[2]{Tufts University (\texttt{vladimir.podolskii@tufts.edu}, \texttt{morgan.prior@tufts.edu})}
\date{}

\begin{document}
	
	\maketitle
    \begin{abstract}

        Multiparty Private Simultaneous Messages (PSM), also known as Decomposable Randomized Encoding (DRE) or information-theoretic garbling, is a model for non-interactive secure computation. Each party encodes its input using shared randomness and sends a message to an evaluator who does not know that randomness. The evaluator must recover the function output without learning anything else about the inputs. Perfect privacy requires identical transcript distributions on inputs with the same output. We also study the weaker notion of \emph{support privacy}, which requires only that these distributions have the same support.

        We study the communication complexity of symmetric Boolean functions when each of the $n$ parties holds one input bit.

        {\it Lower bounds.} We prove a superlinear lower bound for support-private PSM protocols computing $\OR_n$. Our technique gives $\Omega(n\log n)$ lower bounds for $\MAJ_n$ and $\EMAJ_n$ (with even $n$), and an $\Omega(n\log q)$ lower bound for $\MOD_q$ when $q\leq(1-c)n$ for a fixed constant $c>0$.

        {\it Upper bounds.} We construct perfectly private PSM protocols of total communication $\widetilde{O}(n^{3/2})$ for every symmetric Boolean function, improving on the known quadratic bound. For $\MAJ_n$, known branching-program constructions yield an upper bound of $O(n\log^3 n\log\log n)$. For $\EMAJ_n$ and $\MOD_q$ with prime $q$, the lower bounds are tight up to constant factors in the stated regimes. Under support privacy, we obtain $\widetilde{O}(n)$ communication for every symmetric Boolean function.

        Earlier lower-bound techniques of Ball et al.\ (ITCS 2020) yield only trivial linear bounds for symmetric Boolean functions. Concurrent and independent work by Holmgren and Wu proves a $\omega(n)$ lower bound for $\OR_n$ using the same combinatorial core. Both works used AI assistance.

    \end{abstract}

    \paragraph{AI usage disclosure.} Work on this project began in spring 2023. The authors made initial progress on the complexity of $\OR_n$ for small constant $n$ and suspected that extremal combinatorics might provide the key to stronger bounds. On April 29, 2026, ChatGPT-5.5 Pro was used to find a viable superlinear lower bound for the $\OR$ function, indeed using classic results by Ramsey and Higman. The authors subsequently generalized and extended the lower bound without AI. Upper bounds were obtained with partial AI assistance, exploring branching program results, and parameter verification in composition constructions. AI tools were used for proof verification and editorial assistance. The authors take full responsibility for the contents.
    
\section{Introduction}
    Feige, Kilian, and Naor~\cite{FKN} introduced a minimal model for secure computation, later named Private Simultaneous Messages (PSM) by Ishai and Kushilevitz~\cite{ishai1997private}. The model is closely related to Decomposable Randomized Encodings (DRE)~\cite{ishai2002perfect,applebaum2006cryptography,ishai2013randomization} and information-theoretic garbling~\cite{yao1986generate}. We refer to~\cite{bellare2012foundations,ishai2013randomization,applebaum2017garbled} for surveys of these objects and their cryptographic applications.

    In a PSM protocol for a function $f$, each of $n$ parties holds an input $x_i$ and independently computes a message $m_i$ using a shared random string $r$. The parties send their messages to an evaluator, who has access to neither the inputs nor the shared randomness and must compute $f(x_1,\ldots,x_n)$. Privacy requires that the evaluator learn nothing else about the inputs. We study two information-theoretic notions of privacy, both defined through the \emph{transcript}: the tuple of messages sent by all parties. Perfect privacy requires identical transcript distributions whenever $f(x)=f(x')$. Support privacy requires only that these distributions have the same support.

    A central goal in this setting is to understand \emph{the cost of privacy}. We study this cost through communication complexity. For a Boolean function $f$, we distinguish the individual message complexity $C_{\text{indiv}}(f)$ from the total communication complexity $C(f)$. The former measures the logarithm of the largest local message alphabet, minimized over protocols for $f$. The latter measures the length of the longest transcript, again minimized over protocols. We give the precise definitions in Section~\ref{subsec:metrics-lifting}.
    
    Without privacy, each party can send its input directly to the evaluator. When each $x_i\in\{0,1\}$, this costs $n$ bits in total. Only a few functions, such as $\XOR_n$, are known to admit private protocols with just $n$ bits of communication.
    
    In general, Beimel et al.~\cite{beimel2014cryptographic} showed that any function on $n$-bits can be computed with communication complexity $O(2^{n/2})$. This was later slightly improved by Assouline and Liu~\cite{assouline2021multi} when the input of each player is large. On the other hand, Ball et al.~\cite{ball2020complexity} have shown that almost all functions require communication complexity $\Omega(n^2/\log n)$, and they exhibited explicit functions with such cost (e.g.~the distinct elements function). Closing the gap between upper and lower bounds for the ``worst'' functions remains a major open question. 
    
    We study this cost for \emph{symmetric Boolean functions}: $f:\zo^n\to\zo$ such that
    \[
    f(x_1,\ldots,x_n)=f(x_{\pi(1)},\ldots,x_{\pi(n)}) \mbox{ for any permutation }\pi:[n]\stackrel{\sim}{\to}[n],
    \]
    Equivalently, there exists $g:\{0,\ldots,n\}\to\zo$ such that for all $x\in\zo^n$,
    \[
        f(x) = g(\|x\|) \mbox{ where }\|x\| = \mbox{ \# of 1's in } x.
    \]
    We focus on the following symmetric Boolean functions:
    \begin{itemize}
        \item Disjunction/OR: $\OR(x) = x_1\vee\cdots\vee x_n$ (and, similarly, Conjunction/AND).
        \item Majority: $\MAJ(x)=1$ if and only if $\|x\|\geq n/2$.
        \item Exact Majority: $\EMAJ(x)=1$ if and only if $\|x\|=\lfloor n/2\rfloor$.
        \item Modulo $q$: $\MOD_q(x)=1$ if and only if $\|x\|\equiv 0\pmod q$.
    \end{itemize}

    \subsection{Our Results}

    \paragraph*{Lower Bounds.}
        Ball et al.~\cite{ball2020complexity} adapted Nechiporuk's method to prove lower bounds for multiparty PSMs with one input bit per party. Their method relies on the number of distinct subfunctions obtained by restricting input variables. For symmetric Boolean functions, it yields only the trivial lower bound: $n$.
    
    Instead, in this work, we introduce a new combinatorial technique to obtain better lower bounds. In particular, our lower bounds use the following framework:
    \begin{enumerate}
        \item \label{step:1} Use our combinatorial technique to prove a lower bound on the \emph{individual} message complexity $C_{\text{indiv}}(f)$. 
        \item \label{step:2} Lift the lower bound on $C_{\text{indiv}}(f)$ to a bound on the \emph{total} communication complexity $C(f)$ (see Lemma~\ref{lem:lifting} below). 
    \end{enumerate}

    Step~\ref{step:1} uses only support privacy.\footnote{Perfect privacy implies support privacy, so a lower bound under support privacy also applies to perfectly private protocols.} Section~\ref{sec:agreement} proves the ``agreement lemmas'' that supply the combinatorial bound. Section~\ref{sec:lower-bounds} translates these lemmas into individual message lower bounds, and the lifting lemma in Section~\ref{subsec:metrics-lifting} converts them into total communication lower bounds. We obtain:
    \begin{theorem}[same as Corollary~\ref{cor:or-lower-bound}]
        For any perfectly correct, support-private PSM protocol computing $\OR$, the communication complexity  $C(\OR) = \omega(n)$. (The same bound holds for $\AND$.)
    \end{theorem}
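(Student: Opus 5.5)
Following the two-step framework laid out above, I would first show that the individual message complexity $C_{\text{indiv}}(\OR_n)$ tends to infinity with $n$, and then apply the lifting lemma (Lemma~\ref{lem:lifting}). That lemma converts an individual bound on sub-instances into a total bound: roughly, from an $n$-party protocol one obtains protocols for $\OR$ on fewer parties by fixing the other inputs to $0$. Summing over blocks then gives $C(\OR_n)\ge \Omega(n)\cdot C_{\text{indiv}}(\OR_{m})$ for $m$ growing with $n$. So $C_{\text{indiv}}(\OR_m)\to\infty$ yields $C(\OR_n)=\omega(n)$. The case of $\AND$ follows by negating all inputs, since $\AND(x)=\neg\OR(\bar x)$ and the parties may flip their own bits locally.

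For the individual bound, suppose every party's message alphabet has size at most $k$. For each shared randomness $r$, record each party's pair of messages $(m_i(0,r),m_i(1,r))$. Support privacy applies to the all-zero input, which is the only input with $\OR=0$ and so is unconstrained. More usefully, it applies to the inputs $e_i$ and $e_j$, which both have output $1$. Every transcript arising from $e_i$ must also arise, under some other $r'$, from $e_j$. Perfect correctness requires that no transcript of a nonzero input coincide with the transcript of $0^n$ under any randomness.

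I would encode each $r$ as a word over the finite alphabet $[k]^2$ (the per-party message pairs), and ask which patterns of "which parties deviate from their $0$-message" are consistent. Then I would apply a Ramsey-type argument: color pairs or tuples of parties by the local behavior of the messages. Combined with Higman's lemma on the well-quasi-ordering of words over a finite alphabet, as the disclosure hints, this should force two randomness strings $r,r'$ and an input $x\ne 0$ such that the transcript of $(x,r)$ equals that of $(0^n,r')$.

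Concretely, I would use support privacy to build, for each prefix of parties, a chain of randomness strings realizing transcripts of the form "ones on a set $S$". Higman's lemma gives an embedding between two such words. Reading that embedding would yield a zero-input transcript matching a nonzero-input transcript, contradicting correctness once $n$ exceeds some function of $k$. This is the "agreement lemma" from Section~\ref{sec:agreement}, and I expect it to be the main obstacle. The difficulty is defining the right combinatorial object, so that the monotone or subsequence structure produced by Ramsey or Higman really translates into an equality of full transcripts, using only support (not distributional) privacy.

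The resulting dependence of $n$ on $k$ will be huge, of Ackermann type, so the bound obtained is only $\omega(n)$ rather than explicit. It does, however, suffice for the statement.
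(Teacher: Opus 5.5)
There is a genuine gap. Your architecture matches the paper's: an individual-alphabet bound for $\OR_m$ from Ramsey and Higman, then Lemma~\ref{lem:lifting} with $s=\lceil n/2\rceil$, restriction $z=0$ and an arbitrary fixed $B$, and $\AND$ by flipping inputs and output. Your target contradiction, a nonzero-input transcript equal to the zero-input transcript under other randomness, is also the right one. For the lifting step, rely on the lemma itself rather than your ``summing over blocks'' paraphrase. Long messages forced in different blocks may occur under different inputs and randomness, and the lemma's iterative conditioning on already-long coordinates is what makes them co-occur in one transcript.

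The gap is that the individual bound, which is the entire content of the theorem, is never proved. You say Ramsey plus Higman ``should force'' the collision, and you leave the right combinatorial object as an open obstacle. Your hints also point away from it:
\begin{itemize}
\item Singleton inputs $e_i,e_j$ are not enough; you need, for every nonempty $S$, a cube with a common vertex at its $S$-corner.
\item The per-$r$ words over $[k]^2$ all have length $n$, so Higman says nothing useful about them.
\item Working only with prefixes of parties fails, because the set produced by a subsequence embedding is not a prefix.
\end{itemize}

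The missing idea is to anchor all cubes at one common non-root transcript. Fix $r^*$ and $v=\Cube_{[n]}(r^*)$. Support privacy gives, for every nonempty $S$, some $r_S$ with $\Cube_S(r_S)=v$. Its root $\gamma^S=\Cube_\emptyset(r_S)$ satisfies $\gamma^S_i=v_i$ if and only if $i\notin S$. For $i\in S$, equality would make the $\{i\}$- and $\emptyset$-corners of $\Cube(r_S)$ coincide, contradicting correctness. So $\alpha_S(i):=\gamma^S_i$ for $i\in S$ takes values in a $(K-1)$-letter alphabet after injectively relabeling $\cM_i\setminus\{v_i\}$.

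Color each subset $A$ by the ordered string of its colors. Iterated Ramsey gives $B=\{b_1<\dots<b_m\}$ on which this string depends only on $|A|$; call it $w_{|A|}$. Higman then gives $t<s$ with $w_t$ embedded in $w_s$ at positions $j_1<\dots<j_t$. Take $S=\{b_1,\dots,b_s\}$ and $T=\{b_{j_1},\dots,b_{j_t}\}$. Homogeneity gives $\gamma^T_i=\gamma^S_i$ for $i\in T$; this is exactly where Ramsey is indispensable, since $T$ is not a prefix.

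Now the $(S\setminus T)$-corner of $\Cube(r_S)$ equals $\gamma^T$ coordinatewise. On $T$ it is $\gamma^S_i=\gamma^T_i$; on $S\setminus T$ it is $v_i=\gamma^T_i$; outside $S$ it is $\gamma^S_i=v_i=\gamma^T_i$. Since $S\setminus T\neq\emptyset$, a non-root equals a root, contradicting perfect correctness. The shared vertex $v$, and the resulting fact that each root differs from $v$ exactly on its $S$, is what turns a subsequence embedding into an equality of full transcripts. Without it the Ramsey/Higman step has nothing to act on.
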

    For comparison, the known upper bound for $\OR/\AND$ is $O(n\log n)$~\cite{ishai2013randomization}. Thus, settling the exact cost of securely computing $\OR$ remains open.
    
    \begin{theorem}[same as Corollary~\ref{cor:maj-emaj-modq-lower-bound}]
      For any perfectly correct, support-private PSM protocol computing $\MAJ$ or $\EMAJ$ (when $n$ is even), the communication complexity is $\Omega(n\log{n})$. For $\MOD_q$, the communication complexity of such protocols satisfies  $C(\MOD_q) = \Omega(n\log{q})$ when  $q \leq (1-c)n$ for some constant $c > 0$.
    \end{theorem}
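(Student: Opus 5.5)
The plan follows the two-step framework just described. First I prove a lower bound on the individual message complexity $C_{\text{indiv}}(f)$ for each of $\MAJ$, $\EMAJ$, $\MOD_q$. Then I convert it to a total bound via the lifting lemma (Lemma~\ref{lem:lifting}). The intended form of that lemma is: if every restriction of $f$ to a block of $m$ free variables (others fixed) contains a copy of some function $h$ on $k$ variables with $C_{\text{indiv}}(h)\ge b$, then $C(f)\ge \lfloor n/m\rfloor$ times that bound. Concretely, partition the $n$ parties into disjoint groups. For each group, fix the remaining inputs so the restriction behaves like a smaller symmetric function. One party in each group must then send messages of length $\Omega(\log n)$ (resp. $\Omega(\log q)$). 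Summing over groups gives the bound, since total transcript length is at least the sum of per-party lengths.

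For the individual bound I argue under support privacy alone, using a symmetric-function threshold structure. Take $\EMAJ$ with $n$ even. Fix a party $i$ and consider inputs $x$ with party $i$'s bit free and the others having weight $w$. The function distinguishes $w=n/2$ from $w=n/2-1$, and more generally the weight profile has $\Theta(n)$ "levels". The key combinatorial claim, supplied by the agreement lemmas of Section~\ref{sec:agreement}, is this: if party $i$'s alphabet is small, two distinct weight levels induce the same set of possible messages from party $i$ jointly with the others. Support privacy then lets us splice a transcript consistent with weight $w$ together with one consistent with weight $w'$. This produces a valid transcript in the support of some input whose output contradicts perfect correctness.

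To make this argument work, for $\MAJ$ I restrict to a window of $\Theta(n)$ parties around the threshold, giving many distinguishable prefix weights. For $\MOD_q$ the relevant distinct residues number $q$. The condition $q\le(1-c)n$ guarantees each residue is realized with enough slack: at least $cn$ free parties beyond one period. This gives $\Omega(\log q)$ per party over $\Omega(n)$ parties. I would aim for a bound of the form: any party's message alphabet has size at least the number $L$ of pairwise "separated" levels, hence $\log L$ bits.

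The main obstacle is the splicing step. Support privacy only guarantees equal supports, not a product structure. A hybrid transcript formed from one party's message under one input and the others' messages under another need not lie in any support, because of the shared randomness. The first thing I would try is a pigeonhole over a chain of inputs $x^{(0)}\le x^{(1)}\le\cdots$ obtained by flipping bits one at a time. At each step, compare the set of messages party $i$ can send given the others' fixed transcript. Agreement at two levels of the chain, combined with support privacy on equal-output pairs, forces a collision that violates correctness. This is precisely the role of the agreement lemmas, which I would invoke as given. The real work is checking that the chain for each function has $\Omega(n)$ or $\Omega(q)$ output-alternating or output-preserving steps in the right pattern.
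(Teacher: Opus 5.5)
Your lifting step is the decisive gap. Lemma~\ref{lem:lifting} is not a partition argument, and a partition cannot give the claimed bound. For a block of $m$ free variables, the individual bound the agreement lemmas yield is $\tfrac12\log m-O(1)$. So your total is $\lfloor n/m\rfloor\cdot O(\log m)=O(n)$ for every choice of $m$: getting $\Omega(\log n)$ per group forces $m=n^{\Omega(1)}$, which leaves only $n^{1-\Omega(1)}$ groups.

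Moreover, your sum does not bound $C(f)$ at all. The long message forced in one group and the one forced in another may occur for different $(x,r)$, whereas $C(f)$ is the length of a single worst-case transcript. Also, merely fixing the other inputs does not give a PSM for the restriction with the same alphabets, because the referee still needs the fixed parties' randomness-dependent messages.

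The paper's lemma works adaptively on overlapping sets. It requires $C_{\text{indiv}}(f|_{\bar I=z})\ge B+1$ for every $I$ with $|I|\ge s$. Here $s=\lceil n/2\rceil$ and $B=\Theta(\log n)$; for $\MOD_q$, $s=\lceil(1-c/2)n\rceil$ and $B=\Theta(\log q)$. Start from a transcript $\vv$, let $I$ be its $B$-short coordinates, and condition $R$ on $\mathsf{M}_i(z_i;R)=v_i$ for all $i\notin I$. This event has positive probability by support privacy, and conditioning on it yields a PSM for $f|_{\bar I=z}$ with the same local alphabets and a hard-wired decoder. Its forced long message gives a new transcript that keeps every previously long coordinate and gains one more. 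Iterating produces one transcript with $n-s+1$ coordinates of length greater than $B$, i.e.\ $(B+1)(n-s+1)=\Omega(n\log n)$, resp.\ $\Omega(n\log q)$.

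The individual bound is also not actually supplied. The agreement lemmas are about colorings $\alpha_S\colon S\to\Sigma$ of sets, and you never say what is colored. Comparing the message sets of a single party $i$ at two weight levels and splicing is, as you note yourself, not licensed by support privacy.

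The missing idea is to fix one $1$-transcript $v=\Cube_{[n]}(r^*)$. Support privacy then gives, for every $S$ with $|S|\ge p+1$, a cube with $\Cube_S(r_S)=v$. Color $S$ by that cube's root $\gamma^S=\Cube_\emptyset(r_S)$; by perfect correctness it differs from $v$ exactly on $S$, so $K-1$ colors suffice. Lemma~\ref{lem:match-maj} (Plotkin's bound on the chain $[p+1]\subsetneq\cdots\subsetneq[p+\ell]$) gives $T\subsetneq S$ whose roots agree on some $X\subseteq T$ with $|T\setminus X|=p$. Then $\Cube_{T\setminus X}(r_T)=\Cube_{S\setminus X}(r_S)$ is a single whole transcript reached at weights $p$ and $|S\setminus X|\ge p+1$, contradicting correctness for $\MAJ$. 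For $\EMAJ$ and $\MOD_q$ the same argument runs on the chain $S_i=[p+i]$, $i\le\ell$, via Lemma~\ref{lem:match-generalized}.

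The result is $\max_i|\cM_i|=\Omega(\sqrt n)$ (resp.\ $\Omega(\sqrt q)$). This bounds the largest alphabet, not each party's alphabet as in your per-party claim $K\ge L$, which nothing in the argument supports. Since $\log\sqrt n=\Theta(\log n)$, the max-alphabet bound is exactly what the lifting lemma needs.
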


        \paragraph*{Upper Bounds.}
        \label{subsec:upper-bounds}

    For perfectly private PSM protocols, we improve the known $O(n^2)$ communication bound for symmetric Boolean functions~\cite{Eriguchi2026} to $\widetilde{O}(n^{3/2})$. Section~\ref{sec:upper-bounds} proves:
               \begin{theorem}[same as Theorem~\ref{thm:upper-bound-perf-private-sym}]
        For any symmetric Boolean function $f \colon \{0,1\}^n \to \{0,1\}$ there is a perfectly correct, perfectly private PSM of total complexity $\widetilde{O}(n^{3/2})$.
    \end{theorem}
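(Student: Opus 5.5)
The plan is to compute the Hamming weight $\|x\|$ in a \emph{split} form that is cheap to produce, and then to finish with a PSM protocol for $g$ applied to that split form. The standard tool is that any function computed by a branching program of size $s$ has a perfectly private PSM whose cost grows polynomially in $s$. The cost comes from randomizing the matrix encoding of the program (the Ishai--Kushilevitz randomizing polynomials), with each party contributing the linear entries that depend on its bit.

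Applying this directly to the counting program of width $n+1$ and length $n$ gives roughly $O(n^2)$, which is the known quadratic bound. To get $n^{3/2}$, I will use a two-level composition with block size $\sqrt n$.

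First, partition the $n$ parties into $b=\sqrt n$ blocks of $\sqrt n$ parties each. Using shared randomness, each block's weight $w_j\in\{0,\dots,\sqrt n\}$ is represented additively over $\mathbb{Z}_{N}$ with $N>n$. Concretely, party $i$ in block $j$ sends $x_i+r_i \bmod N$ with correlated masks. In this form, block $j$'s contribution is $w_j + \rho_j$, where the masks $\rho_j$ sum to $0$ over all blocks. A cleaner variant I would use in practice is to have every party send $x_i + r_i$ with $\sum_i r_i=0$. That would reveal the total $\|x\|$, which is not private, so the sum must not be evaluated in the clear.

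Instead, I will treat the masked values as inputs to an outer PSM for the function $h(y_1,\dots,y_n)=g(\sum y_i \bmod N)$. The goal is to have only $b$ ``virtual parties'' of $O(\log n)$ bits each. Each block jointly holds $s_j=w_j+\rho_j \bmod N$, but no single party knows $s_j$. I will resolve this by nesting: the inner level is a PSM computing, for block $j$, the encoding message of virtual party $j$ in the outer protocol. That message is a function of $s_j$ and the shared randomness, and it is obtained through a modular-counter branching program of width $N$ and length $\sqrt n$.

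For the outer protocol, computing $g(\sum_j s_j \bmod N)$ with $b$ players holding elements of $\mathbb{Z}_N$, I will use the standard randomized-encoding trick. A random cyclic shift $\sigma$ is applied to the truth table of $g$, and the evaluator receives a permuted table entry. Equivalently, an $\mathbb{Z}_N$-sum followed by a table lookup admits a decomposable encoding in which each virtual party's message depends on $s_j$ through a function of $O(N)=O(n)$ output bits. The inner PSMs for the blocks use modular-counting branching programs. With width $N$ and length $\sqrt n$, each inner output bit costs about $\sqrt n\cdot\mathrm{polylog}$ with a suitable encoding, for example via the $\mathbb{Z}_N$ group-product randomization, so that each party in the block sends $\tilde O(1)$ per output bit.

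The main obstacle is balancing the parameters so that the total cost is $\tilde O(n^{3/2})$ rather than $n^2$. There are two kinds of cost. The outer messages have $\tilde O(\sqrt n)$ length per virtual party, provided the sum modulo $N$ is first compressed via CRT into small primes (this is what the $\CRT$ macro suggests), each of size $O(\log n)$, so that a lookup table over primes of size $\tilde O(1)$ suffices. The inner PSMs cost roughly the outer message length times the block size. Writing out this accounting, with $b$ blocks, $\tilde O(\sqrt n)$ outer bits per block, and $\sqrt n$ parties per block each paying $\tilde O(1)$ per bit, gives $\tilde O(n^{3/2})$.

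I expect the delicate step to be verifying privacy of the composition. The standard composition lemma for randomized encodings handles this: encoding the encoding remains private, provided the inner encodings use fresh randomness that is independent of the outer randomness.
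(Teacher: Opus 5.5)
Your final accounting rests entirely on the claim that each block's outer message has length $\widetilde{O}(\sqrt n)$, and nothing in the proposal produces such an outer protocol. The only outer encoding you actually describe is $\ZZ_N$-summation followed by a table lookup. That lookup has to be realized by chained masked random tables, since no single party knows the masked sum needed to pick one entry of a shifted table. It gives every virtual party a message of $\Theta(N)=\Theta(n)$ bits. With your own bookkeeping ($\sqrt n$ blocks, $\sqrt n$ parties per block, $\widetilde{O}(1)$ per bit) this yields $\widetilde{O}(n^2)$, which is no improvement.

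Invoking CRT does not fix this. $g\circ\CRT$ does not split into per-prime lookups; it is a single function on a domain of size $\prod_i p_i>n$, so there is no ``lookup table over primes of size $\widetilde{O}(1)$''. Separately, your inner claim of $\widetilde{O}(1)$ bits per party per output bit fails for a counter of width $N$, or even width $\sqrt n+1$. Group-product randomization over the abelian group $\ZZ_N$ only delivers a masked sum. Evaluating an arbitrary table at that sum, via Kilian's construction or chained tables, costs order $N$ bits per party per output bit. The $\widetilde{O}(1)$ claim holds only for width $\mathrm{polylog}(n)$, i.e., for residues modulo small primes. In that case the joint reconstruction of $g$ from the residues must happen in the outer protocol, and you give no cheap way to do it.

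The missing idea is a universal protocol with square-root savings. The paper takes distinct primes $p_1,\dots,p_k\le Q$ with $k,Q=\mathrm{polylog}(n)$, $\prod_i p_i>n$ and $Q^k\le n\,\mathrm{polylog}(n)$. It applies the Beimel--Kushilevitz--Nissim protocol (Theorem~\ref{thm:bkn-psm}) to $h=g\circ\CRT$ with $k$ virtual parties holding the residues $\|x\| \bmod p_i$. This costs $O(k^3Q^{k/2})=\widetilde{O}(\sqrt n)$ bits in total. Each outer message $m_i(\cdot,R)$ is a secret table $T\colon\ZZ_{p_i}\to\zo^{l_i}$ to be evaluated at $\sum_j x_j \bmod p_i$. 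All $n$ parties compute it directly with chained random tables, each sending $\lceil\log p_i\rceil+p_il_i$ bits, for a total of $nk\lceil\log Q\rceil+nQ\sum_i l_i=\widetilde{O}(n^{3/2})$.

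So the $\sqrt n$ comes from the $N^{k/2}$ exponent in the BKN bound, not from any block partition. Your $\sqrt n\times\sqrt n$ blocking plays no role and can be dropped. Your privacy reasoning, composition with fresh inner randomness independent of the outer randomness, is the right one and matches the paper's.
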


           For $\MAJ$, a combination of known results matches our lower bound up to polylogarithmic factors:

        \begin{corollary}[same as Corollary~\ref{cor:maj-upper-bound}]
        There is a perfectly private simultaneous messages protocol for $\MAJ$ with total communication $O(n \log^3 n \log\log n)$.
    \end{corollary}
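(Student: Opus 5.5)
The plan is to go through branching programs, using the standard connection between PSM and branching-program-based randomized encodings. The key known result, from Ishai--Kushilevitz and the later perfect randomized encodings of Applebaum--Ishai--Kushilevitz~\cite{ishai2002perfect,applebaum2006cryptography}, is this: if $f$ is computed by a (deterministic, or mod-2 counting) branching program of size $s$, then $f$ admits a perfectly correct, perfectly private decomposable randomized encoding. In this encoding each output entry is a degree-one function of a single input bit, with coefficients depending on the shared randomness. Concretely, the evaluator receives the entries of $R_1 L(x) R_2$, where $L(x)$ is the $(s-1)\times(s-1)$ adjacency-type matrix and $R_1,R_2$ are random unipotent matrices. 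Each entry of $L(x)$ is a constant or a literal; after randomization, each output entry is an affine function of the literals. It therefore splits into per-party contributions once extra additive masking is used to decompose sums. Hence a branching program whose variable-labeled edges are spread across parties yields a PSM in which party $i$'s message length is proportional to the number of matrix entries depending on $x_i$, times $O(s)$.

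A naive count would give $O(s^2)$ total, so I need the refined accounting. This is the one point I expect to be delicate: where the $\log^3 n\log\log n$ bound actually comes from. Its shape suggests the intended route is the formula/BP size of majority, not matrix randomization of a quadratic-size BP. I would use the known construction of majority via sorting-network-free counting. Specifically, I would take the Valiant-style or Paterson--Zwick bounds: $\MAJ_n$ has monotone formulas (or $\oplus$-branching programs) of size roughly $O(n\log^3 n\log\log n)$ over the basis where each leaf is a literal. This is obtained from carry-save-adder counting circuits of depth $O(\log n)$ combined with constant-size comparison at the end.

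Then I apply the Feige--Kilian--Naor/Ishai--Kushilevitz formula-based PSM (the information-theoretic garbling of formulas). For a formula over $\{\AND,\OR,\XOR,\NOT\}$ with $L$ leaves, it produces a perfectly private encoding where each leaf contributes a key of size depending on depth. To get near-linear cost, I would instead use the branching-program encoding with the observation that an edge-labeled BP of size $s$ yields total communication $O(s)$ bits per labeled edge only after the "local" randomization. Equivalently, I would use the decomposable encoding of mod-2 BPs with total output length linear in the number of variable-labeled entries, by randomizing only the relevant entries.

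The order of steps is as follows. First, I fix the branching program/formula size bound for $\MAJ_n$ of $O(n\log^3 n\log\log n)$ from the literature. Second, I invoke the corresponding perfect randomized encoding, whose total output length is linear in that size. Third, I verify decomposability with one bit per party, so that each party's message is the concatenation of the encoding entries depending on its literal. The main obstacle is the second step: ensuring that the encoding's communication is linear rather than quadratic in the program size. I would try first to use formula garbling with constant-size gate tables (Kolesnikov-style free-XOR information-theoretic garbling), which is linear in size when depth is $O(\log n)$ and leaf keys are $O(1)$ bits amortized. If this fails, I would instead check that the cited construction bounds the total communication by the number of literal occurrences.
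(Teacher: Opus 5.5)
Your high-level plan (pass through a branching-program model, then apply a decomposable randomized encoding) points in the right direction. There is a genuine gap, though: neither of your two key steps is realized, and the concrete candidates you name cannot give a near-linear bound.

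\emph{Step one.} The size bound you intend to ``fix from the literature'' does not exist in the form you state. Apply Khrapchenko's bound to inputs of weights $\lceil n/2\rceil$ and $\lceil n/2\rceil-1$: every de Morgan formula for $\MAJ_n$, in particular every monotone one, has $\Omega(n^2)$ leaves. The Valiant and Paterson--Zwick constructions you allude to are polynomially superlinear, since turning $O(\log n)$-depth carry-save circuits into formulas costs a polynomial blow-up. So any formula-garbling route, whose cost is at least the number of leaves, is stuck at $\Omega(n^2)$.

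\emph{Step two.} Suppose you had a branching program of near-linear size $s$. The Ishai--Kushilevitz/AIK encoding $R_1L(x)R_2$ you describe has $\Theta(s^2)$ output entries, and each entry depends on many parties' bits. It therefore yields at least quadratic communication. ``Randomizing only the relevant entries'' is not a known construction, and you leave this step open even though you correctly identify it as the crux.

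The missing idea is to use a \emph{permutation} branching program of small width together with Kilian's randomization. Its cost is linear in the program's length rather than quadratic in its number of nodes. The paper proceeds as follows.
\begin{enumerate}
  \item Take the Klauck--Podder strict permutation branching program for $\MAJ_n$, of width $w=O(\log n)$ and length $l=O(n\log^2 n)$.
  \item Sample independent uniform permutations $\rho_0,\ldots,\rho_l$ of $[w]$.
  \item The owner of the variable $x_{i_j}$ read at layer $j$ sends $E_j=\rho_jP_j(x_{i_j})\rho_{j-1}^{-1}$. One party also sends $\rho_0(s_0)$ and $\rho_l(A)$.
\end{enumerate}
The $E_j$ are independent uniform permutations. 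Given them, the pair $(\rho_0(s_0),\rho_0(P^{-1}A))$ is a uniformly relabeled point and set whose law depends only on the output, which gives perfect privacy.

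Each layer costs $\lceil\log w!\rceil=O(w\log w)$ bits, so the total is $O(wl\log w)=O(\log n\cdot n\log^2 n\cdot\log\log n)$. The $\log^3 n$ factor is the width times the length overhead, and the $\log\log n$ factor is $\log w$; neither comes from formula size.
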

           
           For $\EMAJ$ and for $\MOD_q$ with prime $q$, the lower bounds are tight up to constant factors in the stated regimes. The upper bounds use variants of the protocol in~\cite[Claim 4.5]{ishai2013randomization}, which we recall in Theorem~\ref{thm:folklore-or}.

           The universal support-private construction of~\cite{beimel2025weaklyprivate} gives $\widetilde{O}(n^2)$ communication. For the special case of symmetric Boolean functions, we obtain a quasilinear bound:

    \begin{theorem}[same as Theorem~\ref{thm:support-priv-upper-bound}]
        For any symmetric Boolean function $f \colon \{0,1\}^n \to \{0,1\}$ there is a perfectly correct, support-private PSM of total complexity $\widetilde{O}(n)$.
    \end{theorem}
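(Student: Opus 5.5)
The plan is to reduce support privacy for a symmetric function $f(x)=g(\|x\|)$ to something much weaker than perfect privacy. The evaluator need only be unable to rule out any input consistent with the output. I would build the protocol in two stages. First, a support-private protocol computes the weight $\|x\|$ hidden under a random permutation or encoding. Then the evaluator learns only $g$ applied to it.

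\emph{Step 1: a support-private encoding of the count.} Use the folklore $\OR$-style protocol of Theorem~\ref{thm:folklore-or} as a template, with a shared random one-to-one relabeling. The parties hold shared randomness specifying a uniformly random permutation $\sigma$ of $\{0,\dots,n\}$, or more economically a random cyclic shift $s\in\ZZ_{n+1}$. They also hold additive shares $r_1+\dots+r_n=0$ in $\ZZ_{n+1}$. Party $i$ sends $x_i+r_i$, where the shift $s$ is absorbed into $r_1$. The sum then reveals $\|x\|+s \bmod (n+1)$, which is a perfectly hiding encoding of the weight, and each message costs $O(\log n)$ bits. The evaluator still needs $g(\|x\|)$. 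For this, the shared randomness also includes a table $T$, namely the truth table of $g$ shifted by $s$. The table should be sent by a single party, since it has size $n+1$ bits, i.e.\ $O(n)$ total. The evaluator outputs $T[\|x\|+s]$.

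\emph{Step 2: privacy.} The table leaks the multiset structure of $g$ up to rotation, and under perfect privacy that would be a problem. Under support privacy we must check one thing. For any $x,x'$ with $f(x)=f(x')$, every transcript reachable from $x$ must be reachable from $x'$. A transcript consists of the message vector and $T$. Given any reachable transcript for $x$ with shift $s$, choose $s'=s+\|x\|-\|x'\|$. Then $T$ for $s'$ must equal $T$ for $s$, and this forces $g$ to be invariant under a rotation, which is false in general. So plain rotation fails. The fix is to replace the rotation by a random permutation $\pi$ of $\{0,\dots,n\}$ that maps $g^{-1}(0)$ and $g^{-1}(1)$ to fixed sets, i.e.\ $\pi(j)\in A_{g(j)}$ with $A_0,A_1$ public. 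Then $T$ is a fixed public table, and only $\pi(\|x\|)$ is revealed. Any $w,w'$ with $g(w)=g(w')$ can be swapped by some admissible $\pi$, which gives support (indeed perfect) privacy of the revealed value.

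\emph{Step 3: computing $\pi(\|x\|)$ with quasilinear communication.} This is the main obstacle. Computing a randomly permuted count decomposably is essentially the general problem, and a naive construction costs $O(n^2)$ or $O(n^{3/2})$. I would compose two tools. The first is a support-private protocol for the additive count, as in Step 1, whose output is a hidden value $\|x\|+s$. The second is a degree-one subsequent randomized encoding of the map $v\mapsto \pi(v-s)$ given as a lookup. The key point is that support privacy tolerates the evaluator learning the masked value $\|x\|+s$, provided every pair (masked value, final output) is consistent with every input of the same $f$-value. So I would have the evaluator learn $u=\|x\|+s$. A single party, who knows $s$, then sends a garbled lookup table: entry $u$ holds $g(u-s)$, protected so that only the entry at $u$ is revealed. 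The $n+1$ keys are selected by the count through a point-function/$\OR$-style gadget costing $O(\log n)$ per party, and the table costs $\widetilde{O}(n)$. Privacy is verified by exhibiting, for each reachable transcript of $x$, the randomness $s'=s+\|x\|-\|x'\|$ together with a re-keying of the table, which yields the same visible entry. This argument uses only support equality, which is why the $\widetilde{O}(n)$ bound is plausible here and not under perfect privacy.
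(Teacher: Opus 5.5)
There is a genuine gap in Step~3, at the point you yourself mark as the main obstacle. You assert that the $n+1$ table keys can be ``selected by the count through a point-function/$\OR$-style gadget costing $O(\log n)$ per party'', and nothing supports this. An $\OR$-style gadget, as in the paper's $\EMAJ$ protocol, tests a \emph{single} candidate value of the masked sum, at $O(\log n)$ bits per party per candidate. Releasing the key at position $u=\|x\|+s$ among $n+1$ candidates this way takes $n+1$ copies, i.e.\ $\Theta(n^2\log n)$ bits. The other available tool is the table-passing protocol that reveals $T(\sum_i x_i \bmod p)$ for a secret table. It costs $\Theta(npl)$, which is again quadratic when $p=n+1$.

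You also cannot expect a cheap gadget to exist off the shelf. Suppose a protocol reveals $(u,k_u)$, for uniform $s$ and a uniform key table $k$, and nothing else. Combine it with the padded table $c_j=k_j\oplus g(j-s)$ sent by one party. The result is already a \emph{perfectly} private protocol for $f$: the evaluator's view is a uniform $u$, a uniform $c$, and $k_u=c_u\oplus f(x)$. So Step~3 as stated would give $\widetilde{O}(n)$ under perfect privacy for every symmetric function, improving the paper's $\widetilde{O}(n^{3/2})$. The gadget is therefore carrying the entire difficulty. This also shows that your construction never uses the relaxation to support privacy: learning $\|x\|+s$ with $s$ uniform is harmless even under perfect privacy.

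The missing idea is to shrink the domain over which the weight must be looked up. The paper represents $\|x\|$ by its residues modulo $k=\mathrm{polylog}(n)$ distinct primes of size $\mathrm{polylog}(n)$. Their product exceeds $n$, and $Q^k\le n\,\mathrm{polylog}(n)$. The outer function $g\circ\CRT'$ then has only $L=O(\log n)$ input bits. The universal support-private protocol of Beimel--Ishai--Kushilevitz--Li therefore computes it with $\widetilde{O}(L^2)=\mathrm{polylog}(n)$ communication, and this is the only place support privacy is used.

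Each outer message $m_i(y_i,R)$ is a function of $y_i=\|x\|\bmod p_i$. It is delivered by the table-passing inner protocol, which uses masked bits $a_j=x_j+s_j$ and chained random tables of size $p_i\ell_i$. The total cost is $nk\lceil\log Q\rceil+nQ\sum_i\ell_i=\widetilde{O}(n)$. If the perfectly private BKN protocol is used as the outer layer instead, the outer messages have total length $\widetilde{O}(\sqrt n)$. That is where the paper's $\widetilde{O}(n^{3/2})$ bound comes from.
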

    This upper bound also limits lower-bound techniques that use only support privacy: such techniques cannot yield more than quasilinear bounds for symmetric Boolean functions. The constructions and proofs of these upper bounds appear in Section~\ref{sec:upper-bounds}.

    



\subsection{Overview of our Techniques}
We provide a new general combinatorial framework that reduces proving lower bounds in the PSM model for arbitrary functions to studying a certain combinatorial object: $f$-cube systems (see Section~\ref{subsec:cube-systems}). Privacy forces the same transcript to appear at many corners of different Boolean cubes, while correctness constrains which corners may coincide. We encode these constraints as local colorings and prove that a small alphabet forces an agreement that violates correctness. For $\OR$, the agreement lemma combines Higman's lemma with Ramsey's theorem. For $\MAJ$, $\EMAJ$, and $\MOD_q$, we use Plotkin's bound for error-correcting codes. We also relate this framework to the Nechiporuk method for PSM protocols~\cite{ball2020complexity}.

Our general upper bounds compose an outer universal protocol with an inner protocol. The outer protocol evaluates a symmetric function from residues of the Hamming weight modulo small primes. Each inner protocol privately computes one outer message from the parties' input bits, without revealing the corresponding residue. We use the perfectly private universal protocol of~\cite{BeimelKN18} for the $\widetilde{O}(n^{3/2})$ bound and the support-private construction of~\cite{beimel2025weaklyprivate} for the $\widetilde{O}(n)$ bound. The upper bound for $\MAJ$ instead follows by converting a known permutation branching program into a PSM protocol.

\subsection{Related Work}
\paragraph{Concurrent lower bounds.}
Independently of this work, Holmgren and Wu~\cite{HolmgrenW26} proved a $\omega(n)$ lower bound on the complexity of $\OR_n$ in the perfectly correct, perfectly private setting using very similar techniques. Their $\OR$ argument and ours share a common combinatorial core: privacy and correctness produce a family of locally colored subsets, while Ramsey's theorem and Higman's lemma force an incompatible agreement. Their work also characterizes symmetric functions admitting constant-size local alphabets in terms of constant periodicity, extends this characterization to sufficiently small statistical error, and proves quadratic lower bounds for nonexplicit Boolean functions. Under the additional assumption that the support of the all-zero encoding is invariant under coordinate permutations, they obtain an $\Omega(n\log n)$ lower bound for $\OR$.
Both our paper and~\cite{HolmgrenW26} obtained this result with substantial help from AI.

\paragraph{Earlier lower-bound methods.}
Ball, Holmgren, Ishai, Liu, and Malkin~\cite{ball2020complexity} adapted Nechiporuk's method to DREs, obtaining $\Omega(n^2/\log n)$ lower bounds for explicit Boolean functions. For symmetric Boolean functions, restrictions remain symmetric, and the subfunction-counting method gives only a trivial linear bound. Ball and Randolph~\cite{DBLP:conf/citc/BallR22} extended this approach to many-party PSMs with longer local inputs. Shinagawa and Nuida~\cite{ShinagawaNuida2024} developed a different combinatorial representation using abstract simplicial complexes, yielding explicit, nonasymptotic lower bounds for concrete functions.

\paragraph{Weak privacy.}
The qualitative requirement that an adversary cannot rule out a possible secret was studied in secret sharing by Beimel and Franklin~\cite{BeimelFranklin2007}. Beimel, Ishai, Kushilevitz, and Li~\cite{beimel2025weaklyprivate} developed a quantitative notion of $p$-weak privacy that bounds likelihood ratios and includes the equal-support notion considered here when no quantitative lower bound on $p$ is imposed. They also showed that the DRE lower-bound method of~\cite{ball2020complexity} extends to weak privacy, and constructed universal weakly private encodings. We use their construction as a black box for our support-private upper bound. Our term \emph{support privacy} emphasizes equality of transcript supports, without requiring that the probabilities of those transcripts be close.

\paragraph{Upper bounds.}
Universal PSM constructions for arbitrary functionalities include those of Beimel, Kushilevitz, and Nissim~\cite{BeimelKN18} and Assouline and Liu~\cite{assouline2021multi}. For symmetric functions with general local input alphabets, Eriguchi and Shinagawa~\cite{EriguchiShinagawa2025} developed constructions using universal reconstruction, and Nuida~\cite{Nuida2026} improved the histogram encodings underlying this approach. Eriguchi~\cite{Eriguchi2026} subsequently related the communication complexities of symmetric and general functionalities in both PSM and CDS, including improved PSM upper bounds when the number of parties exceeds the local alphabet size. These works also study related models such as ad hoc and robust PSM. We focus on the regime of a growing number of parties with one input bit per party, and study both quantitative lower bounds and the effect of relaxing perfect privacy to support privacy.

\section{Preliminaries}
    Throughout, we use $[n]:=\{1,\dots,n\}$. We write $S\in\binom{[n]}{\geq k}$ when $S\subseteq[n]$ and $|S|\geq k$. For $S\subseteq[n]$, we take $\chi_S\in\zo^n$ to denote the characteristic vector of $S$: $(\chi_S)_i = 1 \iff i\in S$. Unless otherwise specified, $\log n$ denotes the base 2 logarithm of $n$. For a vector $x \in \{0, 1\}^n$, we use $\|x\|$ to denote the Hamming weight of $x$. We use $\mathbbm{1}[X]$ to denote the indicator function for the event $X$.

    Our bounds concern the following explicit functions. Here $x = (x_1, x_2, \ldots, x_n)$ where each $x_i \in \{0, 1\}$. 
    \begin{itemize}
    \item The OR function,  $\OR_n(x) := \bigvee_{i=1}^n x_i=1-(1-x_1)(1-x_2)\dots(1-x_n)$.
    \item The Majority function, $\MAJ_n(x) = \mathbbm{1} [\|x\| \geq n/2]$.
    \item The Modulo $q$ function, $\MOD_{q, n}(x) = \mathbbm{1}[\|x\| = 0 \pmod{q}]$.
    \item The Exact Majority function, $\EMAJ_n(x) = \mathbbm{1}[\|x\| = \lfloor n/2\rfloor]$.
    \end{itemize}

    \subsection{Private Simultaneous Messages}

    \begin{definition}[(Perfect) PSM for $f$]
        Let $n\in \N$. Let $\cR$ be a finite set of possible shared random strings. For each $i\in [n]$, let $\cM_i$ be a set of possible messages, and let $\mathsf{M}_i \colon \zo \times \cR\to\cM_i$ be the function which returns the message that party $i$ sends. Let $\mathsf{M} = (\mathsf{M}_1,\dots,\mathsf{M}_n)$. Let $\mathsf{Rec} \colon \cM_1\times \dots\times \cM_n \to \zo$ be some recovery function. To reduce notation, we occasionally will write $\mathsf{M}(x;r) := (\mathsf{M}_1(x_1;r),\dots,\mathsf{M}_n(x_n;r))$.

        We say $(\mathsf{M}, \mathsf{Rec})$ is a \emph{perfect PSM $(n,K)$-protocol for $f$} if $\max_i \abs{\cM_i} \leq K$ and the following hold:
        \begin{itemize}
            \item \textbf{(perfect correctness)} For all $r \in \cR$, and for all $x\in \zo^n$, $\mathsf{Rec}( \mathsf{M}(x;r)) = f(x)$.
            \item \textbf{(perfect privacy)} Let $R\ugets \cR$. For all $x,x'\in \zo^n$ such that $f(x)=f(x')$, the following are equal in distribution
            \begin{equation}
                (\mathsf{M}_1(x_1;R),\dots,\mathsf{M}_n(x_n;R)) \overset{d}{=} (\mathsf{M}_1(x'_1;R),\dots,\mathsf{M}_n(x'_n;R)) \ .
            \end{equation}
        \end{itemize} 
        
    \end{definition}


    Beimel et al.~\cite{beimel2025weaklyprivate} introduced a weaker notion inspired by differential privacy. A protocol $(\mathsf{M},\mathsf{Rec})$ is $p$-\emph{weakly private} if

    \begin{equation*}
        \Pr[\M(x;R) = y] \geq p \cdot \Pr[\M(x';R) = y]
    \end{equation*} for all $x, x'$ such that $f(x) = f(x')$ and all $y$. Observe that perfect privacy is $p$-weak privacy in the case that $p = 1$.  

    A protocol is \emph{weakly private} if it is $p$-weakly private for some $p>0$. For the finite protocols considered here, this is equivalent to equality of transcript supports on inputs with the same output. We call this property \emph{support privacy} and use the following equivalent formulation in our lower bounds:
    \begin{definition}[Support Privacy] Let $(\mathsf{M}, \mathsf{Rec})$ be a \emph{PSM $(n,K)$-protocol for $f$}. Let $\mathsf{M}(x, \mathcal{R})$ denote $\{\mathsf{M}(x, r) \mid r \in \mathcal{R}\}$. Then we say that $(\mathsf{M}, \mathsf{Rec})$ satisfies \emph{support privacy} if the following holds:
        \begin{itemize}
            \item \textbf{(Support Privacy)}
            For all $x,x'\in \zo^n$ such that $f(x)=f(x')$,
            $\mathsf{M}(x,\cR) = \mathsf{M}(x',\cR)$,
            or, equivalently, 
            for any $x$ and any $x'$ such that $f(x')=f(x)$ and any $r\in \cR$, there exists $r'\in \cR$ such that $\mathsf{M}(x;r) = \mathsf{M}(x';r')$.
        \end{itemize} 
    \end{definition}

    \subsection{Communication Complexity Metrics}
    \label{subsec:metrics-lifting}
        We discuss a few worst-case metrics for communication complexity in the PSM/DRE setting.

        We first measure the largest local message alphabet. In an $(n,K)$-PSM protocol, every party's alphabet has size at most $K$. The \emph{individual message complexity} of a Boolean function $f:\zo^n\to\zo$ is the logarithm of the smallest such $K$:

        \[
            C_\text{indiv}(f) = \min\left\{\log K\mid\mbox{ there exists an $(n,K)$-PSM for $f$}\right\}
        \]
        A slightly stronger metric, studied in~\cite{ball2020complexity,DBLP:conf/citc/BallR22}, considers communication complexity to be the sum of individual message complexities of all parties.
        \[
            C_\text{indiv-sum}(f) = \min\left\{\sum_{i=1}^n \log K_i \mid \mbox{ there exists a PSM for $f$ where }\mathsf{M}:\zo^n\to [K_1]\times \cdots\times[K_n]\right\}
        \]

        We also consider the worst-case total bit length of a transcript. Thus $f$ has \emph{communication complexity} at least $k$ if every PSM protocol $(\mathsf{M},\Rec)$ for $f$ has an input $x$ and randomness $r$ for which $|\mathsf{M}(x;r)|\geq k$:

        \[
            C(f) := \min_{(\mathsf{M},\Rec)\text{-PSM for }f } \max_{x,r}\sum_{i=1}^n |\mathsf{M}_i(x_i;r)|.
        \]
        Note that for any $f$ we have $C_\text{indiv-sum}(f) \geq C(f)$.


        The lifting lemma below aims to convert individual message lower bounds for restrictions of $f$ into a total communication lower bound for $f$.

        For a function $f:\zo^n\to\zo$, a set $I\subseteq [n]$, and assignment $z\in\zo^n$, denote by $f|_{\bar I=z}:\zo^{|I|}\to\zo$ the restriction obtained by fixing the coordinates outside $I$ according to $z$:
        \[
            f|_{\bar{I}=z}(x)= f(u) \mbox{ where }u_i=\left\{\begin{array}{rl}
                x_i & \mbox{if }i\in I, \\
                z_i & \mbox{if }i\notin I.
            \end{array} \right.
        \]
        
        \begin{lemma}
        \label{lem:lifting}
            Let $f:\zo^n\to\zo$ be such that there are integers $1\leq s\leq n$ and $B\geq0$ where
            for every
            \[
                I\subseteq [n] \mbox{ such that }|I|\ge s
            \]
            there exists $z\in\zo^n$ such that
            \[ C_\text{indiv}(f|_{\bar{I}=z})\geq B+1.\]
            Then,
            \[
                C(f)\ge (B+1)(n-s+1).
            \]
        \end{lemma}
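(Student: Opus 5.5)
The plan is to argue by contraposition: starting from any PSM protocol $(\mathsf{M},\Rec)$ for $f$, exhibit a single pair $(x,r)$ whose transcript has total length at least $(B+1)(n-s+1)$. The basic tool is a \emph{restriction-and-conditioning} operation. Given $I\subseteq[n]$ and $z\in\zo^n$, fix the inputs outside $I$ to $z$. The messages of the parties in $\bar I$ then depend only on $r$. Pick a value $t$ of $\mathsf{M}_{\bar I}(z_{\bar I};r)$ that occurs, and restrict the randomness to $\cR_t=\{r:\mathsf{M}_{\bar I}(z_{\bar I};r)=t\}$. The parties in $I$ keep their message functions (restricted to $\cR_t$), and the evaluator computes $\Rec(\cdot,t)$.

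This restricted protocol is a valid protocol for $f|_{\bar I=z}$:
\begin{itemize}
\item Correctness is inherited pointwise.
\item For perfect privacy, equality of the full transcript distributions for $x,x'$ with $f(x)=f(x')$ gives equal marginals on the $\bar I$-part. Conditioning on $t$ then gives equal distributions on the $I$-part.
\item For support privacy, the fibres over $t$ of two equal supports are equal.
\end{itemize}
The alphabet of each party $i\in I$ in the restricted protocol is a subset of its original message set.

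Next, translate the hypothesis into long messages. There are only $2^{B+1}-1$ bit strings of length at most $B$, so an alphabet with $\log K\ge B+1$ must contain a string of length at least $B+1$. Hence, for every $I$ with $|I|\ge s$, applying the hypothesis to the restricted protocol gives some $i\in I$, some $b\in\zo$ and some $r\in\cR_t$ with $|\mathsf{M}_i(b;r)|\ge B+1$.

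To combine parties, observe that party $i$'s length depends only on $(x_i,r)$. So for a fixed $r$, choosing each $x_i$ to maximise party $i$'s message gives
\[
\max_x \sum_i|\mathsf{M}_i(x_i;r)|=\sum_i L_i(r), \qquad L_i(r):=\max_b|\mathsf{M}_i(b;r)|.
\]
It therefore suffices to find one $r$ for which $L_i(r)\ge B+1$ for at least $n-s+1$ parties. I would build this $r$ greedily:
\begin{itemize}
\item Maintain a nonempty randomness set $\cR^{(k)}$ that is a fibre of the transcript of the already-chosen ``long'' parties $i_1,\dots,i_{k-1}$ together with fixed inputs.
\item At each stage, let $I$ be the parties not yet chosen. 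As long as $|I|\ge s$, apply the long-message step to obtain a new long party inside the current fibre.
\item Stop after $n-s+1$ rounds.
\end{itemize}

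The main obstacle is exactly this simultaneity. The hypothesis supplies its own $z$ for the coordinates outside $I$. That $z$ may disagree with inputs already fixed for earlier long parties, and the long message it certifies may occur at an $r$ outside the current fibre. My first attempt would be to choose the fixed inputs of earlier parties \emph{after} the fact: keep earlier long parties' inputs free, and only record that some value of their input makes them long for every $r$ in the current fibre. This means refining the fibre by conditioning only on the $\bar I$-transcript dictated by the new $z$, and checking that the length property $L_i(r)\ge B+1$, which is independent of $x_i$, survives refinement. If this fails, the fallback is a cruder conclusion: the set $H$ of parties having \emph{some} long message satisfies $|[n]\setminus H|<s$, because otherwise the hypothesis applied to $I=[n]\setminus H$ would be contradicted. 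This gives $|H|\ge n-s+1$ and the bound for $C_\text{indiv-sum}$. One would then need a padding or normalisation argument, for instance that under privacy message lengths can be made to depend on $r$ only, to upgrade this to the worst-case transcript length $C(f)$.
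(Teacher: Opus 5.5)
Your restriction-and-conditioning step is correct and is the device the paper uses. You have also located the crux correctly. However, the proposal does not get past that crux, so as written there is a genuine gap.

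\textbf{Why your first attempt fails.} Suppose the new $z$ differs from the inputs previously fixed for the earlier long parties $J$. Then the new fibre $\{r:\mathsf{M}_{J}(z_{J};r)=t\}$ is a level set of different functions of $r$. It is not a refinement of the old fibre, so there is no reason $L_j(r)\ge B+1$ survives on it. You also cannot intersect the two fibres. That would condition on $\mathsf{M}_j(1-z_j;r)$ for $j\notin I$, which is not part of the transcript, and the restricted protocol can stop being private. For example, in the $\mathsf{OR}$ protocol $m_i=r_i+x_is$, knowing both $r_j$ and $r_j+s$ reveals $s$.

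\textbf{Why the fallback is too weak.} It only shows that $n-s+1$ parties are each long at \emph{some} $(b,r)$, possibly different for each party. That is a sum-of-individual-maxima ($C_\text{indiv-sum}$-type) bound, which is weaker than the claim about $C(f)$. The padding upgrade is unjustified: padding to $L_i(r)$ attaches an $r$-dependent quantity to the transcript and can leak. It is also unnecessary.

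\textbf{The missing idea.} Carry forward a \emph{transcript}, not inputs, and use support privacy to re-anchor it to whatever $z$ the hypothesis supplies.
\begin{itemize}
\item Assume every transcript has length $<(B+1)(n-s+1)$. Take a transcript $v$ and let $I$ be its set of $B$-short coordinates, so $|I|\ge s$. Take $z$ from the hypothesis.
\item Since $C_\text{indiv}(f|_{\bar I=z})\ge B+1\ge 1$, the restriction is non-constant. So there is $y_b$ with $f(u(y_b))=\mathsf{Rec}(v)$, where $u(y_b)$ agrees with $z$ outside $I$.
\item Support privacy then says $v$ also occurs on input $u(y_b)$. Hence the event $\mathsf{M}_{\bar I}(z_{\bar I};R)=v_{\bar I}$ is nonempty.
\item In your restriction step, choose exactly $t=v_{\bar I}$, not merely ``some value that occurs.'' This freezes every previously long coordinate at its long value, whatever inputs produced it earlier.
\item Your restriction lemma and the hypothesis then give $(y^*,r^*)$ in this fibre with a $B$-long coordinate inside $I$. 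So $\mathsf{M}(u(y^*);r^*)$ has strictly more long coordinates than $v$.
\item Iterate until more than $n-s$ coordinates are long. That transcript has length at least $(B+1)(n-s+1)$, a contradiction.
\end{itemize}
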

        \begin{rem}
            Note that for any $I\subseteq[n]$ of size $s$,
            \begin{itemize}
                \item $\OR_n|_{\bar{I}=0} \equiv \OR_s$,
                \item $\MAJ_n |_{\bar{I} = z} \equiv \MAJ_s$ where $z$ has $\lceil n/2\rceil-\lceil s/2\rceil$ ones outside $I$,
                \item $\EMAJ_n |_{\bar{I} = z} \equiv \EMAJ_s$ where $z$ has $\lfloor n/2\rfloor-\lfloor s/2\rfloor$ ones outside $I$,
                \item $\MOD_{q, n}|_{\bar{I}=0} \equiv \MOD_{q, s}$ where $z$ is any string such that $\|z\| \equiv 0 \pmod{q}$.
            \end{itemize}
            These restrictions explain how individual alphabet bounds can supply the hypotheses of the lifting lemma.
            
            In general for a symmetric function $f(x)=g(\|x\|)$, if $\va$ is a string with $a$ ones $f|_{\bar{I}=\va}(u) = g(a+\|u\|)$. In other words, in order to lower bound the communication complexity of a symmetric function, it suffices to find a (not too large) slice on which you can prove an alphabet size bound.
        \end{rem}
        \begin{rem}
            In the Lemma above, the bound on $f$ will follow if one only knows that $f$ has the property that there are integers $s,B$ as above such that for every $J\subseteq [n]$ with $|J|\ge s$ there is $I\subseteq J$ and $z$ such that $C_{\text{indiv}}(f|_{\bar{I}=z})\geq B+1$.
        \end{rem}

        The idea of the proof is that no short transcript can have too many long messages. Thus, given a transcript with many short messages, we can induce a PSM with the same privacy property on the function restricted to the short message coordinates. Then, given an individual message bound on PSMs for the restricted function, we can show that there must be a transcript with fewer short messages, which (after iterating) yields a contradiction.
        \begin{proof}[Proof of Lemma~\ref{lem:lifting}]
            The same argument applies to perfect privacy and support privacy, using the same privacy requirement in both complexity measures.

            Suppose for the sake of contradiction that
            \[
                C(f) < (B+1)(n-s+1).
            \]
            Let $(\M,\Rec)$ be a protocol that realizes this inequality.
            
            Say coordinate $i$ is \emph{$B$-long} in a transcript $\vv=(v_1,\ldots,v_n)$ if  $|v_i|>B$. Otherwise say that $i$ is \emph{$B$}-short for $\vv$.

            Let $\vv$ be a longest transcript, with length $|\vv|=k<(B+1)(n-s+1)$.
            Suppose $\vv$ has $\ell$ $B$-long coordinates. Observe that $\ell\le n-s$, so we have $n-\ell\ge s$ $B$-short coordinates. Let $I$ ($|I|=n-\ell$) denote the set of $B$-short coordinates. To simplify the notation assume $I = [n-\ell]$.
            
            By the condition on $f$, we know there exists a restriction $z$ such that
            \[
                C_{\text{indiv}}(f|_{\bar{I}=z})\geq B+1.
            \]
            Note that this condition implies that $\Image(f|_{\bar{I}=z})=\zo$. Hence for either $b\in\zo$, there exists $y_b$ such that $f|_{\bar{I}=z}(y_b)=b$. 
            For $y \in \{0,1\}^{n-\ell}$ define $u(y)\in\zo^n$ such that $u(y)_i=(y)_i$ if $i\in I$ and $z_i$ otherwise.
            By perfect privacy, for \emph{any} $y$ such that $f|_{\bar{I}=z}(y)=b$,
            \[\M(u(y);R) \overset{d}{=} \M(u(y_b);R). \]
            Under support privacy, these transcript supports are equal instead.
            Now, we can define a new PSM with the same privacy property for $f|_{\bar{I}=z}$, $(\M',\Rec')$.
            We use $\M_i$'s as before, but instead of using $R$ directly, we condition on hitting $\vv_{\bar{I}}$ on inputs $z_{\bar{I}}$. Namely, let
            \[
                R'= (R\mid\M_i(z_i;R) = v_i, \forall i\notin I).
            \]
            The conditioning event has positive probability: choose $b=\Rec(\vv)$ and the corresponding $y_b$ above; support privacy ensures that $\vv$ occurs on $u(y_b)$.
            Then, to encode we use the following for $i\in I$:
            \[
                \M'_i(x_i;R') = \M_i(x_i;R').
            \]
            To reconstruct from $\vv'$ we simply embed $\vv'$ into $\vv_{\bar{I}}$:
            \[
                \Rec'(\vv') = \Rec(\hat{u}(\vv')) \mbox{ where } \hat{u}(\vv')_i = \left\{\begin{array}{rl}
                    v'_i & \mbox{if }i\in I, \\
                    v_i &  \mbox{if }i\notin I.
                \end{array}\right.
            \]
            Perfect correctness and the chosen privacy property are inherited from $(\M,\Rec)$.
            
            To verify privacy, fix $y,y'$ with $f|_{\bar{I}=z}(y)=f|_{\bar{I}=z}(y')$ and fix a transcript $\vv'$. Perfect privacy gives
            \begin{align*}
               \Pr[\M'(y)=\vv']\Pr[\M_{\bar{I}}(z_{\bar{I}})=\vv_{\bar{I}}] &= \Pr[\M_I(y)=\vv'\mid \M_{\bar{I}}(z_{\bar{I}})=\vv_{\bar{I}}]\Pr[\M_{\bar{I}}(z_{\bar{I}})=\vv_{\bar{I}}]\\
               &=\Pr[\M(u(y))=\hat{u}(\vv')]\\
               &=\Pr[\M(u(y'))=\hat{u}(\vv')]\\
               &= \Pr[\M_I(y')=\vv'\mid \M_{\bar{I}}(z_{\bar{I}})=\vv_{\bar{I}}]\Pr[\M_{\bar{I}}(z_{\bar{I}})=\vv_{\bar{I}}]\\
               &= \Pr[\M'(y')=\vv']\Pr[\M_{\bar{I}}(z_{\bar{I}})=\vv_{\bar{I}}].
            \end{align*}
            Dividing by $\Pr[\M_{\bar{I}}(z_{\bar{I}})=\vv_{\bar{I}}]$ yields $\Pr[\M'(y)=\vv'] =\Pr[\M'(y')=\vv']$. For support privacy, replace equality by equivalence of positivity in this display; the conditional supports are therefore equal.
            
            However, because $(\M',\Rec')$ is a PSM protocol for $f|_{\bar{I}=z}$, it is subject to $C_\text{indiv}(f|_{\bar{I}=z})\geq B+1$. If all its messages had length at most $B$, each local alphabet would have at most $2^{B+1}-1$ symbols, a contradiction. Hence there must be some input $y^*$ and $r^*\in \Supp(R') = \Supp(R\mid\M_i(z_i;R) = v_i, \forall i\notin I)$ such that $\M'(y^*;r^*)$ has at least one $B$-long coordinate. Hence if we consider
            \[\vv^*=\M(u(y^*);r^*)\] 
            we note that because the original $B$-long coordinates are unchanged, we now have more ($>\ell$) $B$-long coordinates in total. Let $\ell^*$ be the number of $B$-long coordinates in $\vv^*$. 
            
            If $\ell^*>n-s$, then we have $|\vv^*|\ge (B+1)(n-s+1)$. 
            
            Otherwise, $\ell\le n-s$ and hence $\vv^*$ has at least $s$ $B$-short coordinates. In this case, we can apply the argument above again iteratively. As the number of $B$-long coordinates strictly increases each time, eventually we will produce a transcript with $>n-s$ $B$-long coordinates and reach a contradiction.
        \end{proof}

        \subsection{Some Known Protocols}

    We use the following protocols as black boxes.

    \begin{theorem}[Beimel--Kushilevitz--Nissim~\cite{BeimelKN18}] 
\label{thm:bkn-psm}
Let \(k,N \geq 2\) be integers.
Every function
\[
    f:[N]^k \longrightarrow [N^k]
\]
admits a \(k\)-party private simultaneous messages protocol
with perfect correctness and perfect privacy, whose total
communication and shared-randomness complexity\vladimir{Do we need the amount of share randomness? I doubt that} are both
\[
    O\!\left(k^3 N^{k/2}\right).
\]
\end{theorem}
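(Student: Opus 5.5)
Since this theorem is quoted from~\cite{BeimelKN18} and used as a black box, it is enough to recall the shape of the construction. The plan is a randomized-shift reduction followed by a split of the $k$ parties into two halves.

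\textbf{Reduction to Boolean outputs and shifted inputs.} First I would handle the non-Boolean output by running one Boolean PSM per output bit. A generic product of independent PSMs leaks the joint value of all the bits, which is exactly $f(x)$, so it is perfectly private. This costs a factor $\lceil \log N^k\rceil=O(k\log N)$. The stated bound then needs the Boolean protocol to cost $O(k^2 N^{k/2}/\log N)$, or the bits must be packed; I would try to absorb the factor by sharing the masking structure across all output bits.

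For a Boolean $f$, the shared randomness contains a uniform shift $s_i\in\mathbb{Z}_N$ for each party and a uniform mask bit $\beta$. Party $i$ publishes $y_i=x_i+s_i$, which is uniform and independent of everything. Every party knows $r$ and can therefore compute the masked shifted truth table $T(y)=f(y-s)\oplus\beta$. The evaluator now needs the single entry $T(y)$ together with $\beta$ in a form from which only $f(x)$ can be recovered.

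\textbf{Two-block PSM.} Next I would split the parties into halves $A,B$ of size $k/2$ and view $T$ as an $N^{k/2}\times N^{k/2}$ matrix with row index $y_A$ and column index $y_B$. Mimicking the standard two-party PSM, the $A$-side should deliver the row $T(y_A,\cdot)$, further masked by a random vector $\rho$, and the $B$-side should deliver $\rho_{y_B}\oplus\beta$. Each side has size $O(N^{k/2})$, and the evaluator XORs the correct coordinate.

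The difficulty is that $y_A$ is spread across $k/2$ parties, none of whom knows it. Hence the row must be emitted as additive shares. I would expand the row as $\sum_{y'_A}\prod_{i\in A}[y'_i=y_i]\,T(y'_A,\cdot)$ and have each party contribute its indicator vector $e_{y_i}\in\zo^N$. The tensor product of these vectors is evaluated via a randomized encoding of iterated products, using telescoping masks. This keeps every message of length $\mathrm{poly}(k)\,N^{k/2}$, and the $k^3$ factor comes from the degree-$k/2$ product encodings. The $B$-side is handled symmetrically.

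\textbf{Main obstacle.} The hard step is producing the row without any single party's message growing to the full $N^{k/2}$ tensor while also keeping privacy. Naively sending $e_{y_i}$ reveals nothing, since $y_i$ is already public, but the evaluator must not learn other entries of $T$. That is why the random column mask $\rho$ is essential, and why it must be chosen independently of $s$.

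Privacy would be checked by simulation. Given $f(x)$, the simulator samples uniform $y$, uniform masked row entries, and the unique consistent value $\rho_{y_B}\oplus\beta$. It also samples uniform shares for the product encoding. All three are distributed identically to the real messages.
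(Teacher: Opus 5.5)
The paper does not prove this theorem. It imports it from \cite{BeimelKN18} as a black box and only applies it to the Boolean function $h=g\circ\CRT$. Read as a proof, your sketch has a genuine gap, and it sits exactly at the step that carries the content of the result.

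The random-shift reduction and the view of $T$ as an $N^{k/2}\times N^{k/2}$ matrix only reproduce the standard two-party protocol, whose cost is the row length. The whole difficulty is the next step, which you assert rather than construct. The $k/2$ parties of $A$, none of whom knows $y_A$, must deliver the masked row $T(y_A,\cdot)\oplus\rho$. That row has length $N^{k/2}$ and is one of $N^{k/2}$ correlated rows. It must be delivered without leaking any other row, using only $\mathrm{poly}(k)N^{k/2}$ bits.

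Purely additive shares cannot do this: each share depends on a single $y_i$, while the row is a product in $e_{y_1},\ldots,e_{y_{k/2}}$. So everything rests on your ``randomized encoding of iterated products with telescoping masks,'' and the standard versions of both tools cost much more.
\begin{itemize}
\item Write the row as $e_{y_{k/2}}^{\top}(e_{y_{k/2-1}}^{\top}\otimes I_N)\cdots(e_{y_1}^{\top}\otimes I_{N^{k/2-1}})M$. A Kilian-style encoding of this product forces some party to send a randomized factor adjacent to $M$ of dimension $N^{k/2-1}\times N^{k/2}$.
\item The telescoping chain of the paper's inner protocol works only because there the index is a sum in the small group $\ZZ_p$. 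Here the prefixes $(y_1,\ldots,y_j)$ range over $N^j$ values, so the last party must send $N^{k/2-1}$ masked rows, each of length $N^{k/2}$.
\end{itemize}
Either way you get roughly $N^{k-1}$ bits. That is precisely the earlier $\mathrm{poly}(k)N^{k-1}$ bound that \cite{BeimelKN18} improves on. Consequently the claimed message length, the origin of the $k^3$ factor, and the simulation step (``uniform shares for the product encoding'') are all unsupported. This point needs a genuinely new idea; otherwise you should simply cite the result, as the paper does.

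Separately, your handling of the non-Boolean range $[N^k]$ does not give the stated bound. Running one Boolean protocol per output bit with independent randomness is indeed private. However, it costs $O(k^4N^{k/2}\log N)$. Your suggestion to ``absorb the factor by sharing the masking structure'' is not an argument. Sharing the column mask $\rho$ across output bits would in fact leak XORs of rows of different output bits' tables. Since the paper only needs the Boolean case, restrict to Boolean $f$, where this issue does not arise.
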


    \begin{theorem}[\cite{beimel2025weaklyprivate}]
    \label{thm:weakly-private-psm}
        Let $f:\{0, 1\}^n \rightarrow \{0, 1\}$ be a function. Then there is a support-private, perfectly correct PSM for $f$ with complexity $\widetilde{O}(n^2)$.
    \end{theorem}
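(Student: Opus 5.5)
The statement is the universal weakly private construction of~\cite{beimel2025weaklyprivate}, which we use only as a black box. The plan below sketches how one could reprove it; I am not confident it reaches $\widetilde{O}(n^2)$ as stated. The approach is to exploit the fact that support privacy only constrains \emph{which} transcripts are reachable, not their frequencies. Concretely, I would build a protocol in which the evaluator sees a one-time-padded copy $z=x\oplus a$ of the input, for a shared uniformly random pad $a\in\zo^n$ with party $i$ sending $z_i=x_i\oplus a_i$. Alongside $z$, the evaluator receives a short ``hint'' $h$ from which $f(z\oplus a)$ can be decoded.

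The key design constraint is the following. Fix $x,x'$ with $f(x)=f(x')$ and a reachable transcript $(z,h)$ for $x$. Support privacy then demands that some $r'$ produce the same $(z,h)$ from $x'$. For the $z$-component this forces the pad $a'=a\oplus x\oplus x'$. So the hint must be generated with extra shared randomness $\rho$ so that its support depends only on $z$ and on $f(z\oplus a)$, and not on $a$ itself. The first step is to define the hint as an encoding of the bit $f(z\oplus a)$ in a ``conditional disclosure'' style: the evaluator should learn the bit and nothing about $a$ beyond the set of shifts that are consistent with it.

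The second step is to distribute the computation of this hint among the parties. Party $i$ knows only $x_i$ and $r$, not $z$ as a whole. The natural route is to decompose $f(z\oplus a)=\sum_{w} \mathbbm{1}[z=w]\,f(w\oplus a)$ into $n$ layers, one per coordinate. Each party then contributes a masked ``layer'' of a layered read-once structure (a decision-tree-like object of depth $n$) keyed by $z_i$. The evaluator follows the path selected by $z$. Because only supports matter, each layer can be randomized by a random relabelling of node names drawn from a set whose support is invariant under the allowed shifts. This is where I would try to achieve $\widetilde{O}(n)$ bits per party, for $\widetilde{O}(n^2)$ in total.

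The third step is to verify perfect correctness, which holds directly from path-following, and support privacy. For support privacy, given $r=(a,\rho)$ I would exhibit $r'=(a\oplus x\oplus x',\rho')$ with $\rho'$ a relabelled version of $\rho$ producing the identical transcript.

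The main obstacle is the second step. A naive layered structure for an arbitrary $f$ has exponential width, and compressing it to $\widetilde{O}(n)$ bits per layer while keeping the relabelling argument valid is exactly where the weak-privacy relaxation must be used essentially. I would first try to let each party send only the label of the \emph{single} node reachable on the evaluator's path. I would also try shrinking labels via random hashing into $\widetilde{O}(n)$-bit strings, checking that hash collisions only enlarge supports and never break correctness. If that fails, I would defer to the construction of~\cite{beimel2025weaklyprivate} as stated.
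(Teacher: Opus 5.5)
The paper does not prove this statement. It imports it from \cite{beimel2025weaklyprivate}, and the only added reasoning is that $p$-weak privacy for some $p>0$ implies support privacy: if $y$ is reachable from $x'$, then $\Pr[\mathsf{M}(x;R)=y]\geq p\cdot\Pr[\mathsf{M}(x';R)=y]>0$, and symmetrically. Your fallback of citing that construction is exactly the paper's route. Your reproof sketch, however, has a genuine gap, and it is structural rather than a missing calculation.

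Your first step is a lossless reformulation, not a reduction. Take any support-private PSM $\Pi$ for $f$ and append $z=x\oplus a$, with $a$ independent of $\Pi$'s randomness. The resulting transcripts have support $\zo^n\times\Supp(\Pi(x))$ on input $x$. So $\Pi$'s transcript is already a valid hint in your sense. Conversely, any valid $(z,h)$ is itself a support-private PSM for $f$. Designing the hint is therefore exactly as hard as the theorem, and all the content lies in your second step, which you do not supply.

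The tactics you suggest for that step do not work.
\begin{itemize}
\item \textbf{Single reachable node.} Party $i$ knows $a$ and $x_i$, hence $z_i$, but not $z_1,\ldots,z_{i-1}$. It therefore cannot send the label of ``the single node reachable on the path.'' For a fixed $r$, its one message must let the evaluator continue correctly from whichever layer-$i$ node any of the $2^{i-1}$ prefixes reaches. So it must encode the entire layer transition for its value of $z_i$.
\item \textbf{Layer width.} In any read-once layered structure for $z\mapsto f(z\oplus a)$, layer $i$ has at least as many nodes as there are distinct subfunctions after fixing the first $i$ variables. This is $2^{\Omega(n)}$ for most $f$: for a random $f$, fixing the first $n-2\log n$ variables gives, with high probability, about $2^n/n^2$ distinct subfunctions.
\item \textbf{Relabelling.} Relabelling hides node names but does not shrink layers. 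When layer maps are not permutations, it cannot even make the layer sent under $a$ coincide with the one sent under $a\oplus x\oplus x'$, which your third step requires. For $\OR$ and $i\geq 2$, the transition party $i$ supplies is a bijection when $x_i=0$ and a constant map when $x_i=1$. So party $2$'s message distinguishes the inputs $10\cdots0$ and $010\cdots0$, which both have output $1$.
\item \textbf{Hashing.} Hashing labels does not rescue this. The bottleneck is the number of labels a party must send, not their length. Moreover, collisions do not merely enlarge supports. Perfect correctness must hold for every fixed $r$, so two nodes with different residual functions that share a label under some $r$ force a decoding error.
\end{itemize}

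A reproof of the $\widetilde{O}(n^2)$ bound needs a genuinely different mechanism. Within this paper, the statement rests on the citation alone.
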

    The actual result from \cite{beimel2025weaklyprivate} proves $p$-weak privacy for some small $p$, but support privacy suffices for our results.  

    \begin{theorem}[Ishai \cite{ishai2013randomization}]
    \label{thm:folklore-or}
        There exists a PSM protocol for $\OR$ with total communication complexity $C(\OR) = O(n\log{n})$. 
    \end{theorem}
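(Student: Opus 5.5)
The approach is the standard randomized-encoding protocol for $\OR$ from~\cite[Claim 4.5]{ishai2013randomization}, instantiated over a prime field. Fix a prime $p$ with $n < p \le 2n$, which exists by Bertrand's postulate. The shared randomness $r$ consists of field elements $s_1,\dots,s_n \in \mathbb{F}_p$ with $\sum_i s_i = 0$, together with a uniformly random nonzero multiplier $\rho \in \mathbb{F}_p^{*}$. Party $i$ sends
\[
m_i = \rho\cdot x_i + s_i \in \mathbb{F}_p .
\]
The evaluator outputs $1$ if and only if $\sum_i m_i \neq 0$. Each message takes $\lceil \log p\rceil = O(\log n)$ bits, so the total communication is $O(n\log n)$.

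Correctness comes first. We have
\[
\sum_i m_i = \rho\,\|x\| \pmod p .
\]
Since $0 \le \|x\| \le n < p$ and $\rho \ne 0$, this sum is zero if and only if $\|x\| = 0$, that is, if and only if $\OR(x) = 0$. This holds for every choice of randomness, so correctness is perfect.

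Privacy is the second step, and it is the only point needing care.
\begin{itemize}
\item Conditioned on $\rho$, the additive shares $s_i$ make $(m_1,\dots,m_n)$ uniform over all vectors in $\mathbb{F}_p^n$ whose sum equals $\rho\|x\|$.
\item For $\OR(x) = 0$ the input is the single point $x = 0$, so there is nothing to check.
\item For $\OR(x) = 1$, the value $\rho\|x\|$ is uniform over $\mathbb{F}_p^{*}$, because $\|x\|$ is invertible mod $p$ and $\rho$ is uniform nonzero.
\end{itemize}
Hence the transcript distribution for $\OR(x) = 1$ is: pick $t$ uniform in $\mathbb{F}_p^{*}$, then a uniform vector with sum $t$. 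This is independent of $x$, which gives perfect privacy.

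The only subtle choice is the field size. Taking $p > n$ is exactly what prevents $\|x\|$ from vanishing mod $p$ for nonzero inputs. If one wants the randomness to be uniform over a set as in the definition, the pair $(\rho, (s_i))$ is simply sampled uniformly from the finite set $\mathbb{F}_p^{*} \times \{s : \sum_i s_i = 0\}$.
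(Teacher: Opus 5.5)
Your proposal is correct and matches the paper's proof: it uses the same protocol $m_i = r_i + x_i s$ over $\mathbb{Z}_p$ with $p>n$, zero-sum additive shares and a uniform nonzero multiplier, and the same correctness and privacy argument. The only cosmetic difference is that you invoke Bertrand's postulate to get $p\le 2n$, whereas the paper takes the smallest prime above $n$ and bounds the cost by $n(\log n+2)$.
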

    We recall the protocol for completeness.
    \begin{proof}
    Let $p>n$ be the smallest prime greater than $n$. All arithmetic that follows is modulo $p$. 
    \begin{enumerate}
        \item The shared randomness will consist of $r_1,\ldots,r_n$ uniformly random elements of $\bbZ_p$ such that $\sum_i r_i = 0$ and a uniformly random $s\in \bbZ_p^*$.
        \item The $i$th party, holding $x_i\in\zo$ then computes message $m_i = r_i + x_i\cdot s$.
        \item The evaluator then computes $y=\sum_i m_i$ and outputs $1$ if and only if $y\neq 0$.
    \end{enumerate} 

    When $\OR_n(x)=0$, the transcript is uniform over message tuples whose sum is zero. When $\OR_n(x)=1$, the sum is uniform over the nonzero field elements, and the transcript is uniform conditional on that sum. In either case, its distribution depends only on the output. In this protocol, each party sends a single element from $\bbZ_p$, yielding an overall communication cost of $n\lceil\log p\rceil\leq n(\log n+2)$.
    \end{proof}    

    \subsection{\texorpdfstring{$f$-Cube Systems}{f-Cube Systems}}
    \label{subsec:cube-systems}
        We represent transcript supports by a combinatorial object called an \emph{$f$-cube system}. This representation isolates the correctness and privacy properties used by our lower bounds.
        \begin{definition}[$f$-Cube System]
            Fix alphabets $\cM_1,\ldots,\cM_n$ and take 
            \[\Omega=\cM_1\times\cdots\times \cM_n.\]

            A \emph{Boolean Cube System} (with respect to $\Omega$) consists of a pair of disjoint sets
            \[
                Z_0,Z_1\subseteq \Omega
            \]
            and a nonempty collection of possibly degenerate $n$-dimensional Boolean cubes $\cQ$. A cube $Q\in \cQ$ is specified by two symbols at each coordinate $i\in[n]$
            \[
                \M_{i,0}(Q),\M_{i,1}(Q)\in\cM_i.
            \]
            The two symbols may coincide on an irrelevant coordinate; perfect correctness forces them to be distinct whenever $f$ depends on that coordinate. We denote the \emph{$S$-corner of a cube} $Q\in\cQ$ as
            \[
            Q(S)_i = \left\{\begin{array}{lr}
                \M_{i,1}(Q) & i\in S, \\
                \M_{i, 0}(Q) & i\notin S.
            \end{array} \right.
            \]
            We say $\cQ$ is an \emph{$f$-Cube System} if
            \begin{enumerate}
                \item For all $Q \in \mathcal{Q}$, $b \in \{0, 1\}$, and $S\subseteq [n]$, $f(\chi_S)=b \iff Q(S)\in Z_b$.
                \item For $b\in \{0,1\}$, $\forall \vv\in Z_b$ and $\forall S\subseteq[n]$ such that $f(\chi_S)=b$, there is a cube $Q\in\cQ$ with $S$-corner $\vv$: $Q(S)=\vv$.
            \end{enumerate}
        \end{definition}

\begin{figure}[t]
    \centering
        \begin{tikzpicture}[
    font=\small,
    >=Latex,
    cubeedge/.style={thick},
    zzero/.style={
        circle,
        draw=blue!70!black,
        fill=blue!20,
        inner sep=1.7pt
    },
    zone/.style={
        circle,
        draw=red!70!black,
        fill=red!20,
        inner sep=1.7pt
    },
    lab/.style={font=\scriptsize, inner sep=1pt},
    every node/.style={align=center}
]

\begin{scope}[shift={(0,0)}]
    \coordinate (A) at (0,0);
    \coordinate (B) at (3,0);
    \coordinate (C) at (0,3);
    \coordinate (D) at (3,3);

    \draw[cubeedge] (A)--(B)--(D)--(C)--cycle;

    \node[zzero] (a1) at (A) {};
    \node[zone]  (b1) at (B) {};

    \node[
        zone,
        label={[lab,above left]{$Q_1(\{2\})$}}
    ] (c1) at (C) {};

    \node[
        zone,
        label={[lab,above right]{$Q_1(\{1,2\})$}}
    ] (d1) at (D) {};

    \node[
        lab,
        anchor=north east
    ] at ([xshift=-3pt,yshift=-4pt]a1.south west)
        {$Q_1(\emptyset)$};

    \node[
        lab,
        anchor=north west
    ] at ([xshift=3pt,yshift=-4pt]b1.south east)
        {$Q_1(\{1\}) = v$};

    \node at (1.5,-0.75) {$Q_1$};
\end{scope}

\begin{scope}[shift={(7.2,0)}]
    \coordinate (A) at (0,0);
    \coordinate (B) at (3,0);
    \coordinate (C) at (0,3);
    \coordinate (D) at (3,3);

    \draw[cubeedge] (A)--(B)--(D)--(C)--cycle;

    \node[zzero] (a2) at (A) {};
    \node[zone]  (b2) at (B) {};

    \node[
        zone,
        label={[lab,above left]{$Q_2(\{2\}) = v$}}
    ] (c2) at (C) {};

    \node[
        zone,
        label={[lab,above right]{$Q_2(\{1,2\})$}}
    ] (d2) at (D) {};

    \node[
        lab,
        anchor=north east
    ] at ([xshift=-3pt,yshift=-4pt]a2.south west)
        {$Q_2(\emptyset)$};

    \node[
        lab,
        anchor=north west
    ] at ([xshift=3pt,yshift=-4pt]b2.south east)
        {$Q_2(\{1\})$};

    \node at (1.5,-0.75) {$Q_2$};
\end{scope}

\draw[
    dashed,
    very thick,
    green!60!black
]
(b1) --
node[
    sloped,
    above,
    lab,
    text=green!45!black
]
{same transcript $v$}
(c2);

\begin{pgfonlayer}{background}
    \draw[
        black!40,
        rounded corners=5pt
    ]
    (-1.30, -1.15)
    rectangle
    (12,3.85);

    \node[
        lab,
        fill=white,
        inner xsep=4pt
    ]
    at (5.0,3.85)
    {$\Omega=M_1\times M_2$};
\end{pgfonlayer}
\node[zzero] (leg0) at (-1,-1.65) {};
\node[
    anchor=west,
    lab
] at ([xshift=0.15cm]leg0.east)
    {vertex in $Z_0$};

\node[zone] (leg1) at (1.5,-1.65) {};
\node[
    anchor=west,
    lab
] at ([xshift=0.15cm]leg1.east)
    {vertex in $Z_1$};

\end{tikzpicture}
 \caption{Illustration of an $f$-cube system for $f=\OR_2$. Vertices in $Z_b$ are transcripts decoded as $b$. Since $Q_1(\{1\})=v$ is a $1$-vertex, support privacy requires a cube $Q_2$ with $Q_2(\{2\})=v$.}
    \label{fig:f-cube-system}
\end{figure}
        
Figure~\ref{fig:f-cube-system} gives an example of an $f$-cube system for the $\OR_2$ function. The next lemma states that an $f$-cube system follows from a PSM protocol for $f$.
        \begin{lemma}
            A perfectly correct support-private PSM protocol for $f$ with message space $\Omega=\cM_1\times\cdots\times\cM_n$ implies an $f$-cube system with respect to $\Omega$.
        \end{lemma}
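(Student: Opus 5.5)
The plan is to read the cube system directly off the protocol: one cube for each random string, with $Z_b$ the set of transcripts that occur on inputs of output $b$. Concretely, given a perfectly correct, support-private protocol $(\mathsf{M},\Rec)$ with message spaces $\cM_i$, we set
\[
Z_b = \bigcup_{x:\, f(x)=b} \mathsf{M}(x,\cR) \quad (b\in\zo),
\]
and for each $r\in\cR$ we define a cube $Q_r$ by $\M_{i,0}(Q_r)=\mathsf{M}_i(0;r)$ and $\M_{i,1}(Q_r)=\mathsf{M}_i(1;r)$. We let $\cQ=\{Q_r : r\in\cR\}$, which is nonempty because $\cR$ is nonempty. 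With this choice the $S$-corner is exactly $Q_r(S)=\mathsf{M}(\chi_S;r)$, so every corner of every cube is a genuine transcript of the protocol.

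The verification proceeds in the following order.
\begin{enumerate}
\item \textbf{Disjointness.} Suppose $\vv\in Z_0\cap Z_1$. Then $\vv=\mathsf{M}(x;r)=\mathsf{M}(x';r')$ with $f(x)=0$ and $f(x')=1$. Perfect correctness gives $\Rec(\vv)=0$ and $\Rec(\vv)=1$ simultaneously, a contradiction.
\item \textbf{Condition 1.} Fix $Q_r$ and $S$, and suppose $f(\chi_S)=b$. Then $Q_r(S)=\mathsf{M}(\chi_S;r)\in Z_b$ by the definition of $Z_b$. Conversely, if $Q_r(S)\in Z_b$, then disjointness together with $Q_r(S)\in Z_{f(\chi_S)}$ forces $b=f(\chi_S)$.
\item \textbf{Condition 2.} Take $\vv\in Z_b$, so $\vv=\mathsf{M}(x;r)$ for some $x$ with $f(x)=b$, and take any $S$ with $f(\chi_S)=b$. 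Support privacy applied to the pair $x,\chi_S$ gives $\mathsf{M}(x,\cR)=\mathsf{M}(\chi_S,\cR)$. Hence there is $r'$ with $\mathsf{M}(\chi_S;r')=\vv$, and then $Q_{r'}(S)=\vv$.
\end{enumerate}

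The remark in the definition, that the two symbols at coordinate $i$ must differ when $f$ depends on $i$, also follows. If $f(\chi_S)\neq f(\chi_{S\triangle\{i\}})$ but $\M_{i,0}(Q_r)=\M_{i,1}(Q_r)$, then the corners $Q_r(S)$ and $Q_r(S\triangle\{i\})$ are the same transcript. That transcript would lie in both $Z_0$ and $Z_1$, contradicting disjointness.

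No step here is a real obstacle; the argument is essentially unwinding definitions. The only point needing care is in the last verification: $Z_b$ must be defined as the union of supports over all inputs with output $b$, and support privacy is what lets any such transcript be realized at the prescribed corner $S$.
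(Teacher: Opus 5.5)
Your proposal is correct and follows essentially the same proof as the paper: one cube per random string with ordered local endpoints, $Z_b$ taken as the union of transcript supports over inputs with output $b$, disjointness (and hence the converse direction of condition 1) from perfect correctness, and condition 2 from support privacy applied to the pair $x,\chi_S$. Your check that the two endpoints differ on every coordinate $f$ depends on is a correct addition. The paper asserts this in the definition but does not verify it in the proof.
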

        \begin{proof}
            Let $R$ be the set of all possible random strings. Consider the set of cubes $\cQ=\{\Cube(r):r\in R\}$, where
            \[
            \Cube(r):=\{(\M_1(x_1;r),\ldots,\M_n(x_n;r))\mid x\in\zo^n\}.
            \]
            Intuitively, the vertices of $\Cube(r)$ correspond to the possible message patterns sent by the parties when the randomness is fixed to $r$ and the input varies over all $n$-bit strings. Here each cube retains its ordered local endpoints $(\M_i(0;r),\M_i(1;r))$, even when two cubes have the same vertex set.
            
            Let $\Cube_S(r)=\M(\chi_S;r)$ denote the $S$-corner.
            
            For $b\in\zo$, define $Z_b\subseteq \Omega$ to be
            \[
                Z_b := \{(\M_1(x_1;r),\ldots,\M_n(x_n;r))\mid f(x)=b,r\in \cR\} = \{\Cube_S(r) \mid f(\chi_S)=b,r\in\cR\}.
            \]
            In other words, $Z_b$ is the set of vertices in $\Cube(r)$ corresponding to a message pattern on which $\Rec$ outputs $b$.
            We claim the result is an $f$-cube system with respect to $\Omega$.

        Note that property 1 follows by definition: for any $S\subseteq [n]$, we have $f(\chi_S)=b \implies \forall r\in\cR, \Cube_S(r)\in Z_b$. Perfect correctness implies that $Z_0$ and $Z_1$ are indeed disjoint, as required.

            Finally, note that for any $\chi_S$ and $\vv=\M(\chi_S;r)\in Z_b$ we have $f(\chi_S)=b$. By support privacy, for any $S'\subseteq[n]$ such that it is also the case that $f(\chi_{S'})=b$, there exists some choice of randomness $r'$ such that $\M(\chi_{S'};r')=\vv$. Since each cube corresponds to some choice of randomness, this is exactly property 2. 

            Hence, $\cQ,Z_0,Z_1$ is indeed an $f$-cube system with respect to $\Omega$.
        \end{proof}

\section{Agreement Lemmas}
\label{sec:agreement}
The agreement lemmas below show that small local alphabets force compatible colorings on nested sets. Section~\ref{sec:lower-bounds} uses these agreements to produce transcripts with conflicting outputs.

\subsection{Agreement Lemma for OR}
The combinatorial lemma we need for $\OR$ says that for any finite set of colors, $\Sigma$, there is an integer $N$ such that if for each $S\subseteq [N]$, we define a coloring of its members using $\Sigma$, then there must exist non-empty $A,B\subseteq[N]$ such that $A$ is a strict subset of $B$ and the $B$-coloring agrees with the $A$-coloring on all members of $A$.

\begin{lemma}[Agreement Lemma for $\OR$]\label{lem:match}
    Let $\Sigma$ be a finite alphabet and suppose we have chosen for each $S \subseteq [n]$ a function $\alpha_S:S\to\Sigma$. Then there is an integer $N(\Sigma)$ such that if $n\ge N(\Sigma)$, there exist non-empty sets $T\subseteq [n]$ and $S\subseteq[n]$ such that 
        \begin{enumerate}
            \item $T\subsetneq S$,
            \item $\alpha_T = \alpha_S|_T$ (i.e., $\alpha_S$ is identical to $\alpha_T$ on $T$).
        \end{enumerate}
    \end{lemma}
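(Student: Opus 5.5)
The plan is to combine Ramsey's theorem, which makes the colorings depend only on the \emph{size} of the set, with Higman's lemma on subsequence embeddings of words. Throughout, I identify a coloring $\alpha_S:S\to\Sigma$ with a word $w(S)\in\Sigma^{|S|}$ by listing the colors of the elements of $S$ in increasing order.

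\emph{Step 1 (finite Higman via compactness).} First I will show there is an $m=m(\Sigma)$ with the following property: for every sequence of words $w_1,\dots,w_m$ with $|w_k|=k$, there are $i<j$ such that $w_i$ is a (scattered) subsequence of $w_j$. Suppose not. Then for every $m$ there is a bad sequence of length $m$, i.e.\ one with no such pair. Bad sequences are closed under taking prefixes, and there are finitely many words of each length. So the bad sequences form an infinite, finitely branching tree, and K\"onig's lemma yields an infinite bad sequence $w_1,w_2,\dots$. This contradicts Higman's lemma, which says $\Sigma^*$ is well-quasi-ordered under the subsequence order. I expect this to be the main point needing care: the quantitative bound is not explicit (it is Ackermann-like), but the lemma only asks for existence of $N(\Sigma)$.

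\emph{Step 2 (iterated Ramsey).} For each $k=1,\dots,m$, color the $k$-subsets $T$ of the current ground set by the word $w(T)\in\Sigma^k$; this uses at most $|\Sigma|^k$ colors. Set $H_0=[n]$. For $k=1,\dots,m$, apply finite Ramsey's theorem for $k$-uniform hypergraphs with $|\Sigma|^k$ colors to $H_{k-1}$. This gives $H_k\subseteq H_{k-1}$ on which every $k$-subset receives the same word $w_k$, provided $|H_{k-1}|$ exceeds the appropriate Ramsey number. Working backwards from the requirement $|H_m|\ge m$, this determines $N(\Sigma)$. At the end, $H:=H_m$ satisfies: for every $k\le m$ and every $k$-subset $T\subseteq H$, we have $w(T)=w_k$. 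This holds because homogeneity on $H_k$ is inherited by the subset $H\subseteq H_k$.

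\emph{Step 3 (extracting $T\subsetneq S$).} Apply Step 1 to $w_1,\dots,w_m$ to obtain $i<j\le m$ and positions $p_1<\dots<p_i$ in $[j]$ with $(w_j)_{p_t}=(w_i)_t$ for all $t$. Let $S$ consist of the $j$ smallest elements of $H$, listed as $s_1<\dots<s_j$, and let $T=\{s_{p_1},\dots,s_{p_i}\}$. Then $T\subsetneq S$, and $T$ is nonempty because $i\ge 1$. Both sets are subsets of $H$, so $w(S)=w_j$ and $w(T)=w_i$. The $t$-th smallest element of $T$ is $s_{p_t}$, hence
\[
\alpha_T(s_{p_t})=(w_i)_t=(w_j)_{p_t}=\alpha_S(s_{p_t}),
\]
that is, $\alpha_T=\alpha_S|_T$. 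This establishes the lemma with $N(\Sigma)$ the iterated Ramsey bound from Step 2.
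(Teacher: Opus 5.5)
Your proposal is correct and follows essentially the same route as the paper. It derives the finite Higman bound via K\"onig's lemma, then uses iterated finite Ramsey (with $N_m=m$ and $N_{k-1}=R_k(N_k;|\Sigma|^k)$) to make the ordered color-word of a set depend only on its size inside an $m$-element set $H$. Finally, it extracts $T\subsetneq S$ from a subsequence embedding $w_i\preceq w_j$, taking $S$ to be the $j$ smallest elements of $H$.
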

    
    The proof of this lemma in turn relies on variants of two classic results. We present derivations of these variations in Appendix~\ref{sec:combinatorial} for completeness.
    \begin{lemma}[Simplified version of Higman's Lemma]\label{lem:Higman+}
        Let $\Sigma$ be a finite alphabet. Then there is an integer $m=m(\Sigma)$ such that for every sequence of strings over $\Sigma$,
        \[w_1,\ldots,w_m \mbox{ with } w_i \in \Sigma^i \text{ for each } i \in [m]\]
        there must be some $i< j$ such that $w_i$ is a subsequence of $w_j$.
    \end{lemma}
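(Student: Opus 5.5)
We argue by contradiction via the classical Higman lemma and a compactness (König's lemma) argument. Suppose no such $m$ exists. Then for every $m$ there is a ``bad'' sequence $w_1,\ldots,w_m$ with $w_i\in\Sigma^i$ and no $i<j$ with $w_i$ a subsequence of $w_j$. We will turn these finite bad sequences into a single infinite bad sequence $w_1,w_2,\ldots$ with $w_i\in\Sigma^i$. This contradicts Higman's lemma in its standard form: the subsequence (scattered subword) order on $\Sigma^*$ is a well-quasi-order, so every infinite sequence contains $i<j$ with $w_i$ a subsequence of $w_j$.

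The extraction step uses König's lemma, which applies because each level is finite. Consider the tree whose nodes at depth $m$ are the bad sequences of length $m$ with $w_i\in\Sigma^i$. The parent of $(w_1,\ldots,w_m)$ is the truncation $(w_1,\ldots,w_{m-1})$. Badness is preserved under truncation, since it is a condition on pairs. Depth $m$ has at most $\prod_{i\le m}|\Sigma|^i<\infty$ nodes, and by assumption every depth is nonempty. So the tree is infinite and finitely branching, and König's lemma yields an infinite branch. That branch is an infinite sequence with $w_i\in\Sigma^i$ in which no $w_i$ embeds into a later $w_j$. This is exactly the contradiction above.

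To keep the appendix self-contained, we would also include the Nash-Williams minimal bad sequence proof of Higman's lemma.
\begin{itemize}
\item[(a)] If the lemma fails, choose a minimal bad infinite sequence greedily, at each step taking $u_k$ of minimum length among continuations of $u_1,\ldots,u_{k-1}$ that can still be extended to a bad sequence.
\item[(b)] No $u_k$ is empty, since the empty word embeds into everything. Write $u_k=a_kv_k$ with $a_k\in\Sigma$.
\item[(c)] By pigeonhole, some letter $a$ equals $a_k$ for infinitely many $k=k_1<k_2<\cdots$.
\item[(d)] The sequence $u_1,\ldots,u_{k_1-1},v_{k_1},v_{k_2},\ldots$ is strictly smaller than the minimal one at position $k_1$, so by minimality it is good.
\item[(e)] A good pair in it must be one of three kinds. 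A pair $u_i\preceq u_j$ is impossible because the original sequence is bad. A pair $u_i\preceq v_{k_l}\preceq u_{k_l}$ is impossible for the same reason. A pair $v_{k_l}\preceq v_{k_{l'}}$ gives $u_{k_l}=av_{k_l}\preceq av_{k_{l'}}=u_{k_{l'}}$, again a contradiction.
\end{itemize}
The base case $|\Sigma|=1$, or equivalently the pigeonhole step, is immediate.

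The main obstacle is not the combinatorics but the quantifier switch: the statement asks for a uniform finite $m$, while Higman's lemma is about infinite sequences. The König's lemma step bridges this, and it needs the constraint $w_i\in\Sigma^i$ (or any finite bound on each $w_i$) to make every level of the tree finite. The resulting $m(\Sigma)$ is non-constructive, which is consistent with the paper's $\omega(n)$ rather than explicit bound for $\OR$.
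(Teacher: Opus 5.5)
Your proposal is correct and follows the paper's proof essentially verbatim: assume bad sequences of every length exist, organize them into the tree of prefixes (finitely branching because $w_i\in\Sigma^i$), extract an infinite bad sequence with K\"onig's lemma, and contradict the infinite form of Higman's lemma. The only difference is that you also sketch the Nash-Williams minimal-bad-sequence proof of Higman's lemma itself; that sketch is correct, but the paper simply cites Higman's lemma rather than reproving it.
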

    \begin{theorem}[Iterated Finite version of Ramsey's Theorem]\label{thm:Ramsey+}
        Let $\Sigma$ be a finite alphabet and $m\ge 1$. Then there is an integer $N(\Sigma,m)$ such that the following holds:

        If for each integer $k\in[1,m]$, every $k$-element subset of a $N(\Sigma,m)$-element set is colored by a string in $\Sigma^k$, then there is an $m$-element subset $H$ such that for each $k\in[1,m]$, all $k$-element subsets of $H$ have the same $\Sigma^k$-color.
    \end{theorem}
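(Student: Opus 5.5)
The plan is to derive the statement from the ordinary finite hypergraph Ramsey theorem, applied once for each uniformity $k=1,\dots,m$ on successively smaller ground sets. The version I will invoke says: for all integers $k\ge 1$, $c\ge 1$ and $t\ge 1$ there is $R_k(t;c)$ such that every $c$-coloring of the $k$-element subsets of a set of size at least $R_k(t;c)$ admits a $t$-element subset all of whose $k$-subsets share one color. When $t<k$ the condition is vacuous and we may take $R_k(t;c)=t$. A coloring of $k$-subsets by strings in $\Sigma^k$ is just a $c_k$-coloring with $c_k:=|\Sigma|^k$, which is finite. So each single uniformity is handled by the classical theorem; the only issue is making all $m$ uniformities homogeneous at once.

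The key observation is that homogeneity is hereditary. If every $k$-subset of $X$ has the same color, then the same holds for every $Y\subseteq X$. This lets us handle the uniformities sequentially without undoing earlier work. The remaining bookkeeping is to choose the ground-set sizes backwards, defining
\[
s_{m+1}:=m,\qquad s_k:=R_k(s_{k+1};c_k)\quad (k=m,m-1,\dots,1),
\]
and setting $N(\Sigma,m):=s_1$.

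Given the colorings on an $N(\Sigma,m)$-element set $X_0$, I will proceed as follows.
\begin{enumerate}
\item For $k=1,\dots,m$ in order, maintain a set $X_{k-1}$ with $|X_{k-1}|\ge s_k$.
\item Restrict the $\Sigma^k$-coloring of $k$-subsets to $X_{k-1}$ and apply the Ramsey theorem. This yields $X_k\subseteq X_{k-1}$ with $|X_k|= s_{k+1}$ that is homogeneous for $k$-subsets.
\item By heredity, $X_k$ remains homogeneous for every $k'<k$.
\end{enumerate}
After step $m$ we obtain $H:=X_m$ with $|H|=s_{m+1}=m$. It is homogeneous for $k$-subsets for every $k\in[1,m]$, which is exactly the claim.

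I do not expect a genuine obstacle; the argument is routine once the sizes are chosen backwards. The only point needing care is the order of quantifiers. Each $s_k$ depends on the target size $s_{k+1}$ of the next stage, so the sizes must be fixed before looking at the colorings. The degenerate cases $k>|X_{k-1}|$ cannot arise, since $s_k\ge s_{k+1}\ge m\ge k$. If the finite hypergraph Ramsey theorem itself is to be included for completeness, I would prove it by the standard induction on $k$, via the Erdős–Rado stepping argument, with the graph case as base. Citing it should suffice here.
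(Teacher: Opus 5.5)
Your proof is correct and follows the same argument as the paper. Both fix the ground-set sizes backwards through $N_{k-1}=R_k(N_k;|\Sigma|^k)$ (your $s_k$ is the paper's $N_{k-1}$), apply the finite Ramsey theorem once for each uniformity $k=1,\dots,m$ on a nested chain of sets, and use that homogeneity passes to subsets.
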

    \begin{proof}[Proof of Lemma~\ref{lem:match}]
        Let $m=m(\Sigma)$ be given by Lemma~\ref{lem:Higman+}, and set $N(\Sigma)=N(\Sigma,m)$ using Theorem~\ref{thm:Ramsey+}.

        For any $k$-set $A=\{a_1<\cdots<a_k\}$ encode its local coloring by the ordered string
        \[
            \chi_k(A) = \left(\alpha_A(a_1),\ldots,\alpha_A(a_k)\right)\in\Sigma^k.
        \]
        Now by Theorem~\ref{thm:Ramsey+} there is an $m$-element set $B=\{b_1<\cdots<b_m\}$ on which for all $k\in[1,m]$, $\chi_k$ is constant; in other words, for each $k$, all $k$-subsets of $B$ received the same $\Sigma^k$-color. For each $k$, define $v_k\in\Sigma^k$ to be the common color of $\chi_k$. This gives a sequence of strings
        \[
            v_1,v_2,\ldots,v_m \mbox{ such that }\forall i, v_i\in\Sigma^i.
        \]
        In particular, $v_i=(\alpha_{B_i}(b_1),\ldots,\alpha_{B_i}(b_i))$ for $B_i=\{b_1,\ldots,b_i\}$.

        By Lemma~\ref{lem:Higman+}, there are $s,t\in[1,m]$ such that $t<s$ and $v_t$ is a subsequence of $v_s$. Let $1\le j_1<\cdots,j_t\le s$ be the indices of $v_s$ that realize the subsequence relation ($\forall i\in [t], (v_t)_i = (v_s)_{j_i}$). Consider the two sets
        \begin{align*} S &:= \{b_1,\ldots,b_s\}&\mbox{ (the first $s$ elements of $B$)},\\
        T&:= \{b_{j_1},\ldots,b_{j_t}\} &\mbox{ (the corresponding subseq.~from $S$).}
        \end{align*}
        By homogeneity, the color of $T$ is $v_t$ and the color of $S$ is $v_s$. By the subsequence relation, these colorings agree on the elements of $T$, or, in other words,
        \[
            \alpha_T = \alpha_S|_T.
        \]
    \end{proof}

\subsection{\texorpdfstring{Agreement Lemma for $\MAJ$ and Friends}{Agreement Lemma for MAJ and Friends}}
For majority, we only need approximate agreement. The combinatorial lemma we need for the majority function says that, for any $n$, if the number of colors is at most $c_0\sqrt{n}$ for a sufficiently small absolute constant $c_0>0$ and $n$ is sufficiently large, then for any family of colorings $\alpha_S:S\to\Sigma$ with $|\Sigma|=k$ and $S\in\binom{[n]}{\geq\lceil n/2\rceil}$, there must exist nonempty sets $A,B\in\binom{[n]}{\geq\lceil n/2\rceil}$ such that $A$ is a strict subset of $B$ and the $B$-coloring disagrees with the $A$-coloring on at most $p=\lceil n/2\rceil-1$ members of $A$.

\begin{lemma}[Agreement Lemma for Majority]\label{lem:match-maj}
        Consider sufficiently large $n$ and denote $p = \lceil \frac n2 \rceil -1$. Suppose for any alphabet $\Sigma$ with $|\Sigma|\leq c_0\sqrt n$ for a sufficiently small absolute constant $c_0>0$ we are given a family of functions
        \[
            \alpha_S:S\to\Sigma,
        \]
        where $S\in \binom{[n]}{\geq p+1}$. Then there exist non-empty sets $T\in \binom{[n]}{\geq p+1}$ and $S\in \binom{[n]}{\geq p+1}$ such that
        \begin{enumerate}
            \item $T\subsetneq S$,
            \item the functions $\alpha_T$ and $\alpha_S|_T$ disagree on at most $p$ points.
        \end{enumerate}
    \end{lemma}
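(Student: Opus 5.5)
We work with a single chain of sets and count agreements on its bottom element, a Plotkin-style averaging argument. Fix a parameter $L$ (to be chosen of order $|\Sigma|$) and a chain
\[
S_1\subsetneq S_2\subsetneq\cdots\subsetneq S_L\subseteq[n],\qquad |S_j|=p+j .
\]
This is possible as long as $p+L\le n$, i.e.\ $L\le n-p\approx n/2$, and every $S_j$ lies in $\binom{[n]}{\geq p+1}$.

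For $j<l$, take $T=S_j$ and $S=S_l$. The conclusion of the lemma holds for this pair exactly when $\alpha_{S_j}$ and $\alpha_{S_l}|_{S_j}$ agree on at least $|S_j|-p=j$ points of $S_j$. Suppose for contradiction that no pair works. Then for all $j<l$ the two colorings agree on at most $j-1$ points of $S_j$, and in particular on at most $j-1$ points of the bottom set $S_1$.

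Now let $w_j:=\alpha_{S_j}|_{S_1}\in\Sigma^{S_1}$, which is a word of length $p+1$ over an alphabet of size $q:=|\Sigma|$. Let $A$ be the total number of agreeing coordinates summed over all pairs $j<l$. The hypothesis gives
\[
A\le\sum_{j<l}(j-1)\le \sum_{j=1}^{L} (j-1)(L-j)\le L^3/6 .
\]
For the lower bound, fix a coordinate $x\in S_1$ and let $c_\sigma$ be the number of $j$ with $w_j(x)=\sigma$. The number of agreeing pairs at $x$ is $\sum_\sigma\binom{c_\sigma}{2}$. Since $\sum_\sigma c_\sigma=L$, Cauchy--Schwarz gives
\[
\sum_\sigma\binom{c_\sigma}{2}\ge \tfrac12\left(\tfrac{L^2}{q}-L\right).
\]
This is the standard step in the proof of Plotkin's bound. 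Summing over the $p+1$ coordinates of $S_1$ yields
\[
A\ge \tfrac{p+1}{2}\left(\tfrac{L^2}{q}-L\right).
\]

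The remaining step is to choose parameters so the two bounds on $A$ contradict each other. I would take $L=2q$. The lower bound then becomes roughly $\tfrac{p+1}{2}\cdot 2q\approx nq/2$, while the upper bound is $(2q)^3/6=4q^3/3$. So we get a contradiction as soon as $q^2<3(p+1)/8$, roughly $q^2\lesssim 3n/16$. This holds when $q\le c_0\sqrt n$ for a small enough absolute constant $c_0$. The constraint $L=2q\le n-p$ is also satisfied for large $n$.

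The main obstacle is the design choice behind this counting. The naive version compares only $T=S_1$ against higher sets, which needs only one agreement but gives no contradiction. The version that restricts to $S_1$ across a long chain fails because the required agreement $j$ grows with $j$. The point of the plan is to balance these. The chain length $L\sim q$ must be long enough that pigeonhole forces about $nL^2/q$ agreements on $S_1$, yet short enough that the allowed number of agreements, about $L^3$, stays below that. This balance is exactly where the $\sqrt n$ threshold comes from. I would double-check the constants and the edge effects from the $-L$ term, but $L=2q$ should leave enough room.
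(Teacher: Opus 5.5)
Your proof is correct and follows essentially the paper's route: a chain $S_1\subsetneq\cdots\subsetneq S_L$ with $|S_j|=p+j$, colorings restricted to a common base set, and a Plotkin-type contradiction. The difference is only in presentation: you inline Plotkin's double count, using the pair-dependent bound of $j-1$ agreements and a short chain $L=2q$, whereas the paper takes $\ell=\lfloor p/(2k)\rfloor$, uses the uniform distance $p+1-\ell$ on $[p]$, and applies Plotkin's bound as a black box (also, $(p+1)q>\binom{2q}{3}$ already holds once $q^2\le 3(p+1)/4$, so your threshold $3(p+1)/8$ is merely conservative).
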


    The proof of the lemma relies on Plotkin's bound for error-correcting codes. In it, we argue that if such sets $T$ and $S$ did not exist, we could use the $\alpha_S$'s to construct an error-correcting code whose efficiency violates Plotkin's bound when the alphabet size is small. 

    \begin{lemma}[Plotkin's bound (for $k$-ary codes)~\cite{Quistorff03}]
        For a code $C$ with alphabet size $k$, codeword length $n$, and distance $d > \left(1 - \frac 1k\right)n$, we have that $|C| \leq \frac{kd}{kd - (k-1)n}$. 
    \end{lemma}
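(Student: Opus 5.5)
The plan is the standard double-counting argument: compute the sum of pairwise Hamming distances over all ordered pairs of codewords in two ways. Let $C$ have $M=|C|$ codewords of length $n$ over an alphabet of size $k$, and consider
\[
D=\sum_{(c,c')\in C\times C} d_H(c,c').
\]
If $M\le 1$ the bound is trivial, because the right-hand side is at least $1$ when $kd>(k-1)n$. So assume $M\ge 2$.

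\emph{Lower bound on $D$.} Every ordered pair with $c\neq c'$ contributes at least $d$. The diagonal pairs contribute $0$. Hence $D\ge M(M-1)d$.

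\emph{Upper bound on $D$, coordinate by coordinate.} Fix a coordinate $j\in[n]$, and let $m_a$ be the number of codewords with symbol $a$ in position $j$, so that $\sum_a m_a=M$. The number of ordered pairs that differ in position $j$ is $M^2-\sum_a m_a^2$. By Cauchy--Schwarz over the $k$ symbols, $\sum_a m_a^2\ge M^2/k$. Therefore each coordinate contributes at most $M^2(1-\tfrac1k)$, and summing over the coordinates gives
\[
D\le nM^2\Bigl(1-\tfrac1k\Bigr).
\]
This per-coordinate estimate is the only step with real content. Everything else is bookkeeping.

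\emph{Combining.} The two bounds give $M(M-1)d\le nM^2(k-1)/k$. Dividing by $M$ and multiplying by $k$ yields $k(M-1)d\le (k-1)nM$, which rearranges to
\[
M\bigl(kd-(k-1)n\bigr)\le kd.
\]
The hypothesis $d>(1-\tfrac1k)n$ is exactly $kd-(k-1)n>0$, so we may divide by this quantity without reversing the inequality. This gives $|C|=M\le \frac{kd}{kd-(k-1)n}$, as claimed.

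I expect no real obstacle. The only points needing care are to count ordered pairs consistently in both estimates, and to invoke the strict inequality $kd>(k-1)n$ exactly at the final division.
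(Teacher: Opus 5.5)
Your double-counting argument is correct and complete. The ordered-pair lower bound $M(M-1)d$, the per-coordinate bound $M^2-\sum_a m_a^2\le M^2(1-\frac1k)$ via Cauchy--Schwarz, and the final division using $kd-(k-1)n>0$ are all right. The case $M\le 1$ is also covered, since the right-hand side is at least $1$.

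The paper gives no proof of this lemma; it only cites it from the literature. What you wrote is the standard proof of the $k$-ary Plotkin bound. It needs only that each coordinate takes at most $k$ values, which is exactly the situation in the proof of Lemma~\ref{lem:match-maj}, where the codewords lie in $\Sigma^p$ with $|\Sigma|=k$.
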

    
    \begin{proof}[Proof of Lemma~\ref{lem:match-maj}]
        Assume that such a pair of sets $S$ and $T$ does not exist and let $k \coloneq |\Sigma|$.
        Let $\ell=\lfloor\frac{p}{2k}\rfloor$ and for all $i = 1, \ldots, \ell$ let $S_i = [p+i]$. Consider the functions $\alpha_{S_i}|_{[p]}$ for all $i$ as vectors $v_i = (\alpha_{S_i}(1), \ldots, \alpha_{S_i}(p)) \in \Sigma^p$. By our assumption, any two of the functions $\alpha_{S_i}$ differ on at least $p+1$ coordinates. Since the compared functions have at most $\ell$ coordinates outside $[p]$, the vectors $v_i$ differ in at least $p+1-\ell$ coordinates. Thus, $\{v_i\}$ forms a code of size $\ell$, codeword length $p$, and distance $d \geq p+1-\ell \geq (1 - \frac{1}{2k})p$.

        Then, by Plotkin's bound and the fact that distance $d$ cannot exceed codeword length $p$, we have
        $$
        \ell \leq \frac{kd}{kd - (k-1)p} \leq \frac{kd}{k(1 - \frac{1}{2k})p - (k-1)p} = \frac{kd}{\frac{p}{2}} \leq 2k.
        $$
        Thus, by definition of $\ell$ we can write 
        $$
        \frac{p}{2k}-1 < \ell \leq 2k
        $$
        and conclude that
        $$
        k = \Omega(\sqrt{n}),
        $$
        contradicting $k\leq c_0\sqrt n$ when $c_0$ is sufficiently small.
    \end{proof}

    The same code argument gives the following generalization.

    \begin{lemma}[Generalized Agreement Lemma]\label{lem:match-generalized}
        Suppose for integers $n$ and $1\leq p\leq n$ and an alphabet $\Sigma$ we are given a family of functions
        \[
            \alpha_S:S\to\Sigma,
        \]
        where $S\subseteq [n]$. Consider a sequence of sets 
        \[
        S_1 \subsetneq S_2 \subsetneq \ldots \subsetneq S_l,
        \]
        with $l\geq a\sqrt{p}$ for a fixed constant $a>0$, such that $|S_i|=p+i$.
        Suppose that for all $i<j$ the functions $\alpha_{S_i}$ and $\alpha_{S_j}|_{S_i}$ disagree on more than $p$ points. Then $|\Sigma| = \Omega_a(\sqrt{p})$.
    \end{lemma}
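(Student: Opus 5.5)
The plan is to follow the proof of Lemma~\ref{lem:match-maj}, with the chain $S_1\subsetneq\cdots\subsetneq S_l$ in place of the initial segments $[p+i]$. Put $k=|\Sigma|$. We may assume $k\ge 2$, since for $k=1$ all colorings agree and the hypothesis fails. The common ground set is $S_1$, which has $p+1$ elements. Each codeword records the coloring of the full set $S_i$ restricted to $S_1$.

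\textbf{Step 1 (restrict to a short prefix of the chain).} Pick a parameter $\ell\le l$, to be fixed below, and keep only $S_1,\dots,S_\ell$. For $i\le \ell$ define the vector $v_i=\alpha_{S_i}|_{S_1}\in\Sigma^{p+1}$.

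\textbf{Step 2 (distance bound).} Fix $i<j\le\ell$. By hypothesis, $\alpha_{S_i}$ and $\alpha_{S_j}|_{S_i}$ disagree on at least $p+1$ points of $S_i$. At most $|S_i\setminus S_1|=i-1<\ell$ of these points lie outside $S_1$. Hence $v_i$ and $v_j$ differ in at least $p+2-\ell$ coordinates. So $\{v_1,\dots,v_\ell\}$ is a code of length $p+1$ and distance $d\ge p+2-\ell$. The $v_i$ are distinct as long as $\ell\le p+1$.

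\textbf{Step 3 (choice of $\ell$ and Plotkin).} Take $\ell=\min\bigl(l,\lfloor (p+1)/(2k)\rfloor\bigr)$. Then $d\ge (1-\tfrac1{2k})(p+1)>(1-\tfrac1k)(p+1)$, so Plotkin's bound applies. The same computation as in Lemma~\ref{lem:match-maj} gives
\[
\ell\le \frac{kd}{kd-(k-1)(p+1)}\le\frac{k(p+1)}{(p+1)/2}=2k.
\]

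\textbf{Step 4 (case analysis).} There are two cases.
\begin{itemize}
\item If $\ell=\lfloor (p+1)/(2k)\rfloor$, then $(p+1)/(2k)-1<2k$, so $k^2=\Omega(p)$ and $k=\Omega(\sqrt p)$.
\item If $\ell=l$, then $a\sqrt p\le l\le 2k$, so $k\ge (a/2)\sqrt p$.
\end{itemize}
In both cases $|\Sigma|=\Omega_a(\sqrt p)$.

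The main obstacle is bookkeeping rather than anything new. The restriction to $S_1$ must lose at most $\ell$ coordinates of disagreement, and this holds because the chain grows by exactly one element per step. The choice of $\ell$ must then balance against both the Plotkin threshold and the available chain length $l$. The hypothesis $l\ge a\sqrt p$ enters only in the second case of Step 4. Small $p$ is handled trivially because the constant is allowed to depend on $a$.
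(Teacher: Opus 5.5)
Your proof is correct and follows essentially the same route as the paper. Like the paper, you restrict the colorings of the first $\min\{l,\lfloor p/(2k)\rfloor\}$ sets of the chain to a common ground set, use the fact that the chain grows by one element per step to keep the code distance near the code length, and apply Plotkin's bound to get at most $2k$ codewords. The differences are cosmetic: you take $S_1$ (size $p+1$) as the ground set instead of a $p$-element subset, and you replace the paper's proof by contradiction with an explicit two-case analysis, which makes the dependence of the constant on $a$ visible.
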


    \begin{proof}
        Write $k=|\Sigma|$ and suppose it is a sufficiently small constant times $\sqrt p$. Then, for sufficiently large $p$, we have
        \[
            t=\min\{l,\lfloor p/(2k)\rfloor\}>2k.
        \]
        Apply the code argument from Lemma~\ref{lem:match-maj} to the first $t$ sets of the sequence $S_1,\ldots,S_l$, restricting the colorings to a common $p$-element set. It yields $t\leq2k$, a contradiction, and hence the claimed alphabet lower bound.
    \end{proof}

    We use this generalized agreement lemma for $\EMAJ$ and $\MOD_q$.

\section{Lower Bounds}
\label{sec:lower-bounds}

We use the cube-system representation from Section~\ref{subsec:cube-systems}. For a fixed shared random string $r\in\cR$, the possible transcripts form the cube
\[
    \Cube(r):=\{\mathsf{M}(x;r)\mid x\in\zo^n\}
    \subseteq\cM_1\times\cdots\times\cM_n.
\]
For $S\subseteq[n]$, its $S$-corner is $\Cube_S(r)=\mathsf{M}(\chi_S;r)$. Throughout this section, \emph{cubes} refer to the members of $\{\Cube(r)\mid r\in\cR\}$.

\subsection{A Superlinear Lower Bound for OR}






    We devote this section to proving the following theorem: 
    
    \begin{theorem}\label{thm:mainResultFromWeakerAssumptions}
        For any constant $K$ there exists $N$ such that for all $n\ge N$, no $(n,K)$-protocol $(\mathsf{M},\mathsf{Rec})$ satisfies perfect correctness and support privacy for $\OR$.
    \end{theorem}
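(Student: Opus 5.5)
The plan is to build, from a hypothetical $(n,K)$-protocol, a family of local colorings $\alpha_S\colon S\to\Sigma$ with $|\Sigma|=K$, and then apply the Agreement Lemma for $\OR$ (Lemma~\ref{lem:match}). For $n\ge N(\Sigma)$ it yields nonempty $T\subsetneq S$ with $\alpha_T=\alpha_S|_T$. I will then show that such an agreement produces one transcript lying in both $Z_0$ and $Z_1$. This contradicts perfect correctness, since $Z_0\cap Z_1=\emptyset$ in the associated $\OR$-cube system. So $N:=N([K])$ works.

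\textbf{Step 1 (cube system and a reference transcript).} By the lemma of Section~\ref{subsec:cube-systems}, the protocol gives an $\OR$-cube system $(\cQ,Z_0,Z_1)$. Fix any $1$-transcript $\vw\in Z_1$, for instance $\vw=\Cube_{[n]}(r)$ for some $r$. For every nonempty $S\subseteq[n]$ we have $\OR(\chi_S)=1$. Property~2 of the cube system then gives a cube $Q_S\in\cQ$ with $Q_S(S)=\vw$. Concretely, $\M_{i,1}(Q_S)=w_i$ for $i\in S$ and $\M_{i,0}(Q_S)=w_i$ for $i\notin S$.

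\textbf{Step 2 (colorings).} For nonempty $S$, define $\alpha_S(i):=\M_{i,0}(Q_S)\in\cM_i$ for $i\in S$; this is the symbol party $i$ would send on input $0$ in the cube that hits $\vw$ at corner $S$. For each $i$, fix an injection $\cM_i\hookrightarrow[K]$. Comparisons are always made at the same coordinate $i$, so equality of colors is equivalent to equality of messages. Define $\alpha_\emptyset$ arbitrarily. Lemma~\ref{lem:match} then gives nonempty $T\subsetneq S$ with $\alpha_T=\alpha_S|_T$.

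\textbf{Step 3 (the collision).} Consider $\vu:=Q_T(\emptyset)$. Since $\OR(\chi_\emptyset)=0$, property~1 gives $\vu\in Z_0$. Its coordinates are:
\begin{itemize}
\item $u_i=\M_{i,0}(Q_T)=w_i$ for $i\notin T$;
\item $u_i=\alpha_T(i)$ for $i\in T$.
\end{itemize}
Now consider $\vu':=Q_S(S\setminus T)$. Since $S\setminus T\neq\emptyset$, we have $\OR(\chi_{S\setminus T})=1$, so $\vu'\in Z_1$. Its coordinates are:
\begin{itemize}
\item $u'_i=\M_{i,0}(Q_S)=w_i$ for $i\notin S$;
\item $u'_i=\M_{i,1}(Q_S)=w_i$ for $i\in S\setminus T$;
\item $u'_i=\M_{i,0}(Q_S)=\alpha_S(i)=\alpha_T(i)$ for $i\in T$.
\end{itemize}
Hence $\vu=\vu'\in Z_0\cap Z_1$, which is the desired contradiction.

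The real difficulty is the Ramsey–Higman agreement, which is already packaged in Lemma~\ref{lem:match}. The step needing care here is choosing the coloring so that an agreement becomes a collision. Recording the $0$-symbols of the cube anchored at a common $1$-transcript is what does this. Both $T$ and $S\setminus T$ must be nonempty: the first guarantees that $Q_T$ exists, and the second that $\vu'$ is a $1$-corner. Both are ensured by the lemma's conclusion.
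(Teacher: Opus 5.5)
Your proposal is correct and follows essentially the paper's proof. Both arguments anchor a cube at corner $S$ on a common $1$-transcript for every nonempty $S$, color each $i\in S$ by that cube's $0$-symbol, apply Lemma~\ref{lem:match}, and conclude $Q_T(\emptyset)=Q_S(S\setminus T)$. The one difference is that you color with all of $\cM_i$ (alphabet $[K]$) rather than $\cM_i\setminus\{v_i\}$ (alphabet $[K-1]$); this costs one color but removes the need for the observation $\gamma^S_i\neq v_i$ for $i\in S$ and for the paper's separate $K=1$ case.
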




    The proof of Theorem~\ref{thm:mainResultFromWeakerAssumptions} uses the cube-system representation of perfectly correct, support-private PSMs developed in Section~\ref{subsec:cube-systems}. Perfect correctness implies that in any given cube $\Cube(r)$ in a cube system $\mathcal{Q}$ for $\OR$, exactly one vertex $v \in \Cube(r)$ will be such that $\mathsf{Rec}(v) = 0$. We refer to the unique vertex in a cube that causes the recovery function $\mathsf{Rec}$ to output 0 as the \emph{root} of that cube, and our argument exploits constraints on which positions roots may occupy in cubes in $\mathcal{Q}$. As an initial observation, we note that if some $v \in \Cube(r)$ is a root and $v$ is also a vertex in some other cube $\Cube(r')$, then $v$ must also be a root in $\Cube(r')$. Indeed, if it were not, $\Rec(v)$ would output a different value depending on whether the randomness is $r$ or $r'$, which violates perfect correctness. By the same reasoning, if $v \in \Cube(r)$ is a non-root, then $v$ is a non-root in all other cubes in which it appears.
    

    Support privacy implies that for any $b,b'\in \zo^n \setminus \{0^n\}$ and any $r\in \cR$, there exists $r'\in \cR$ such that $\mathsf{M}(b,r) = \mathsf{M}(b',r')$. Phrased in the language of cubes, if $v\in \Cube(r)$ is some non-root of cube $\Cube(r)$, for every other possible position in a cube, there must exist some other $\Cube(r')$ so that $v$ occupies that position within the cube $\Cube(r')$.


    For example, $\{\Cube_\emptyset(r)\mid r\in\cR\}$ is the set of all roots in the cube system $\mathcal{Q}$.
    
    \begin{proof}[Proof of Theorem~\ref{thm:mainResultFromWeakerAssumptions}]
    Correctness is clearly infeasible for $K=1$, so fix any $K\ge 2$. Let $N(K)$ be as promised in Lemma~\ref{lem:match}.
    
    Assume for the sake of contradiction there exists a perfectly correct support-private PSM protocol for $\OR_n$ for $n\ge N(K)$ with $|\cM_i|\le K$.
    
    Fix some $r^*\in\cR$ and consider $\Cube(r^*)$. Let $v$ be the particular non-root
    \[
        v:=\Cube_{[n]}(r^*).
    \]
    For every nonempty $S\subseteq [n]$, support privacy implies that there exists $r_S$ (and an associated cube $\Cube(r_S)$) such that
    \[\Cube_S(r_S) = v.\]
    Take $\gamma^S := \Cube_\emptyset(r_S)$ to be the root of this cube.
    Note that we have 
    \[ \gamma^S_i = v_i \iff i\notin S.\]
    Now for each $i\in[n]$, fix any injective (partial) coloring
    \[
        \lambda_{i}:\cM_i\setminus \{v_i\} \to [K-1].
    \]
    Finally, define a coloring of each nonempty $S$ as follows:
    \[
       \text{for }i\in S,\qquad  \alpha_S(i) = \lambda_i(\gamma^S_i).
    \]
    This is well defined because of our observation that $\gamma^S_i = v_i \iff i \notin S$.
    
    Applying Lemma~\ref{lem:match}, we get that there is a particular $S$ and nonempty $T$ such that $T\subsetneq S$ and
    \[ \alpha_T = \alpha_S|_T.\]
    By the injectivity of $\lambda_i$'s, this implies
    \[ \forall i\in T,\quad \gamma^T_i = \gamma^S_i.\]
    Now consider the $S\setminus T$ corner of $\Cube(r_S)$, namely $\Cube_{S\setminus T}(r_S)$. Because $S\setminus T$ is non-empty, this is a non-root; that is, $\Rec(\Cube_{S\setminus T}(r_S))=1$. On the other hand, we will show that $\gamma^T=\Cube_{S\setminus T}(r_S)$, hence $\Cube_{S\setminus T}(r_S)$ is also a root, which contradicts perfect correctness.
    
    To see this final claim, we give a case analysis:
    \begin{itemize}
        \item If $i\in T$, then $i\notin S\setminus T$ and $\Cube_{S\setminus T}(r_S)_i=\gamma^S_i$ but also for $i\in T$ we have $\gamma^S_i=\gamma^T_i$. Hence, $\Cube_{S\setminus T}(r_S)_i=\gamma^T_i$.
        \item If $i\in S\setminus T$, then $\Cube_{S\setminus T}(r_S)_i=v_i$, but because $i\notin T$ we also have $\gamma^T_i=v_i$.
        \item Finally, if $i \notin S$, then $i\notin T$ and $\Cube_{S\setminus T}(r_S)_i=\gamma^S_i = v_i = \gamma^T_i$.
    \end{itemize}
    \end{proof}

Applying Lemma~\ref{lem:lifting} with $s=\lceil n/2\rceil$ and any fixed integer $B$, we lift the lower bound on the individual message complexity $ C_{\text{indiv}}(\OR)$ given by Theorem~\ref{thm:mainResultFromWeakerAssumptions} to obtain the following lower bound on the PSM communication complexity $C(\OR)$:
\begin{corollary}
\label{cor:or-lower-bound}
    The PSM communication complexity $C(\OR) = \omega(n)$.
\end{corollary}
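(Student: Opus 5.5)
We prove that for every fixed constant $B$, $C(\OR_n)\ge (B+1)(n-\lceil n/2\rceil+1)$ for all sufficiently large $n$. Since $B$ is arbitrary, this gives $C(\OR_n)/n\to\infty$, i.e.\ $C(\OR)=\omega(n)$. The argument applies Lemma~\ref{lem:lifting} with $s=\lceil n/2\rceil$, and gets the individual-message hypothesis of that lemma from Theorem~\ref{thm:mainResultFromWeakerAssumptions}.

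Fix an integer $B\ge 0$ and set $K=2^{B+1}-1$, which is a constant. Theorem~\ref{thm:mainResultFromWeakerAssumptions} gives an $N=N(K)$ such that for every $m\ge N$ there is no perfectly correct, support-private $(m,K)$-protocol for $\OR_m$. Hence $C_{\text{indiv}}(\OR_m)>\log(2^{B+1}-1)$.

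We need this in the form $C_{\text{indiv}}\ge B+1$, which is how the proof of Lemma~\ref{lem:lifting} uses it. There, the hypothesis is applied only through the statement that an alphabet of messages of length at most $B$ has at most $2^{B+1}-1$ symbols, and that no protocol for the restriction has such alphabets. That is exactly the non-existence of an $(m,2^{B+1}-1)$-protocol, which is what we have. If one wants the literal inequality, we can instead take $K=2^{B+1}$, using $C_{\text{indiv}}$ defined via $\log K$ over integer lengths. Either way, the hypothesis holds with integer thresholds.

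Next we check the hypothesis of Lemma~\ref{lem:lifting}. Let $n\ge 2N$, and let $I\subseteq[n]$ with $|I|\ge s=\lceil n/2\rceil\ge N$. Take $z=0^n$. By the remark after Lemma~\ref{lem:lifting}, $\OR_n|_{\bar I=0}\equiv \OR_{|I|}$. Since $|I|\ge N$, the previous paragraph gives $C_{\text{indiv}}(\OR_n|_{\bar I=z})\ge B+1$. Lemma~\ref{lem:lifting} then yields
\[
C(\OR_n)\ \ge\ (B+1)(n-\lceil n/2\rceil+1)\ \ge\ (B+1)\,n/2 .
\]
So $\liminf_n C(\OR_n)/n\ge (B+1)/2$ for every $B$, and therefore $C(\OR_n)=\omega(n)$. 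The same holds for $\AND$ by complementing inputs, which is a relabeling of the local messages.

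The only delicate point is matching the integrality convention, i.e.\ the $\log K$ versus $B+1$ bookkeeping in the second paragraph. Everything else is a direct substitution into Lemma~\ref{lem:lifting}.
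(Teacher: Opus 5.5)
Your proposal is correct and follows the paper's own argument: apply Lemma~\ref{lem:lifting} with $s=\lceil n/2\rceil$ and an arbitrary fixed $B$, supplying its hypothesis via $\OR_n|_{\bar I=0}\equiv\OR_{|I|}$ and Theorem~\ref{thm:mainResultFromWeakerAssumptions} with the constant $K=2^{B+1}-1$. Your integrality bookkeeping also holds as stated: the nonexistence of an $(m,2^{B+1}-1)$-protocol forces some alphabet of size at least $2^{B+1}$ in every protocol, which already gives $C_{\text{indiv}}(\OR_m)\ge B+1$ literally.
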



\subsection{\texorpdfstring{Lower Bounds for $\MAJ$ and Friends}{Lower Bounds for MAJ and Friends}}

In this section, we prove the following theorem.



    
    \begin{theorem}\label{thm:mainResultFromWeakerAssumptions-maj}
        Any $(n,K)$-protocol $(\mathsf{M}, \mathsf{Rec})$ that satisfies perfect correctness and support privacy for $\MAJ$ must have $K = \Omega(\sqrt{n})$.
    \end{theorem}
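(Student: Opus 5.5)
The plan is to follow the structure of the proof of Theorem~\ref{thm:mainResultFromWeakerAssumptions}, replacing Lemma~\ref{lem:match} by Lemma~\ref{lem:match-maj}. Assume a perfectly correct, support-private $(n,K)$-protocol for $\MAJ_n$ with $K-1\le c_0\sqrt n$, where $c_0$ is the constant of Lemma~\ref{lem:match-maj}. Set $p=\lceil n/2\rceil-1$. Then $\MAJ(\chi_S)=1$ if and only if $|S|\ge p+1$.

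\textbf{Setting up the colorings.} Fix $r^*$ and let $v:=\Cube_{[n]}(r^*)$, a $1$-vertex. For every $S\in\binom{[n]}{\ge p+1}$ we have $\MAJ(\chi_S)=1$, so support privacy gives some $r_S$ with $\Cube_S(r_S)=v$. Let $\gamma^S:=\Cube_\emptyset(r_S)$. Every coordinate is relevant for $\MAJ$, so perfect correctness makes the two endpoints of each coordinate distinct. Hence $\gamma^S_i=v_i$ if and only if $i\notin S$. As in the $\OR$ proof, fix injections $\lambda_i:\cM_i\setminus\{v_i\}\to[K-1]$ and define $\alpha_S(i)=\lambda_i(\gamma^S_i)$ for $i\in S$. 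This uses an alphabet of size $K-1\le c_0\sqrt n$. Lemma~\ref{lem:match-maj} then yields $T\subsetneq S$, both of size at least $p+1$, such that the set $D\subseteq T$ of coordinates where $\alpha_T$ and $\alpha_S|_T$ disagree has $|D|\le p$. By injectivity, $\gamma^T_i=\gamma^S_i$ for all $i\in T\setminus D$.

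\textbf{Finding a conflicting vertex.} The key step is to pad $D$ to a set $D'$ with $D\subseteq D'\subseteq T$ and $|D'|=p$. This is possible because $|D|\le p<|T|$. Define $w$ by $w_i=\gamma^T_i=\gamma^S_i$ for $i\in T\setminus D'$, and $w_i=v_i$ otherwise. We then check coordinatewise that $w$ is a corner of both cubes.
\begin{itemize}
\item In $\Cube(r_T)$, $v_i$ is the $1$-endpoint for $i\in T$ and the $0$-endpoint for $i\notin T$. Therefore $w=\Cube_{D'}(r_T)$, a corner of weight $p<n/2$, so $\Rec(w)=0$.
\item In $\Cube(r_S)$, $v_i$ is the $1$-endpoint for $i\in S$ and the $0$-endpoint for $i\notin S$. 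On $T\setminus D'$ we have $w_i=\gamma^S_i$, the $0$-endpoint. Therefore $w=\Cube_{(S\setminus T)\cup D'}(r_S)$, a corner of weight $|S\setminus T|+p\ge p+1$, so $\Rec(w)=1$.
\end{itemize}
These two conclusions contradict perfect correctness, and therefore $K-1>c_0\sqrt n$, that is, $K=\Omega(\sqrt n)$.

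\textbf{Main obstacle.} The delicate point is exactly this weight bookkeeping. Using only the disagreement set $D$ gives corners of weights $|D|$ and $|D|+|S\setminus T|$, which need not straddle the threshold $n/2$. The remedy is the observation that adding agreeing coordinates of $T$ to $D$ keeps $w$ a common corner of both cubes. This lets us tune the weights to $p$ and $p+|S\setminus T|$, which lie on opposite sides of the threshold. I would also double-check the parity case (even $n$, where $p=n/2-1$), but the inequalities $p<n/2\le p+1$ hold for all $n$.
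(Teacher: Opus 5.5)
Your proof is correct and follows the paper's argument essentially step for step: your padded set $D'$ is the paper's $T\setminus X$, and you obtain the same coincident corners $\Cube_{D'}(r_T)=\Cube_{(S\setminus T)\cup D'}(r_S)$, of weights $p$ and $p+|S\setminus T|$. The only differences are minor. You handle general $n$ directly with $p=\lceil n/2\rceil-1$, where the paper reduces to odd $n$. You also justify explicitly that $\gamma^S_i\neq v_i$ for $i\in S$, because every coordinate is relevant for $\MAJ$.
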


    For simplicity of notation we only consider the case of odd $n = 2p+1$. The proof readily generalizes to arbitrary $n$. Indeed, fixing one input of $\MAJ_{2p}$ to zero gives $\MAJ_{2p-1}$; conditioning on an occurring message of the fixed party preserves support privacy and the local alphabets.
    
    For a cube $\Cube(r)$ and an input $x$ with $\|x\|\leq p$, consider the vertex $v=\mathsf{M}(x;r)$. By perfect correctness we have $\mathsf{Rec}(v) = 0$. In this case we call $v$ a \emph{0-vertex} of a cube. Similarly, for $x$ with $\|x\|\geq p+1$ and $v = \mathsf{M}(x,r)$ we have $\mathsf{Rec}(v) = 1$. These vertices are called \emph{$1$-vertices} of the cube. It is still convenient to denote by the \emph{root} of $\Cube(r)$ the vertex $v = \mathsf{M}(0^n,r)$.
    
    Support privacy implies that for any $b,b'\in \zo^n_{\geq p+1}$ and any $r\in \cR$, there exists $r'\in \cR$ such that $\mathsf{M}(b,r) = \mathsf{M}(b',r')$. Phrased in the language of cubes, if $v\in \Cube(r)$ is some 1-vertex of cube $\Cube(r)$, for every other possible position of a 1-vertex in a cube, there must exist some other $\Cube(r')$ so that $v$ has that position within the cube $\Cube(r')$. 

    \begin{proof}[Proof of Theorem~\ref{thm:mainResultFromWeakerAssumptions-maj}]
     
    Consider a perfectly correct support-private PSM for $\MAJ_n$ with $|\cM_i|\le K$.
    
    Fix some $r^*\in\cR$ and consider $\Cube(r^*)$. Let $v$ be the vertex that is opposite to the root:
    \[
        v:=\Cube_{[n]}(r^*).
    \]
    For every $S\in \binom{[n]}{\geq p+1}$, support privacy implies there exists $r_S$ (and an associated cube $\Cube(r_S)$) such that
    \[\Cube_S(r_S) = v.\]
    Take $\gamma^S := \Cube_\emptyset(r_S)$ to be the root of this cube.
    Note that we have 
    \[ \gamma^S_i = v_i \iff i\notin S.\]
    Now for each $i\in[n]$, fix any injective (partial) coloring
    \[
        \lambda_{i}:\cM_i\setminus \{v_i\} \to [K-1].
    \]
    Finally, define a coloring of each $S\in \binom{[n]}{\geq p+1}$ as follows:
    \[
       \text{for }i\in S,\qquad  \alpha_S(i) = \lambda_i(\gamma^S_i).
    \]
    (This is well defined because of our observation that $\gamma^S_i = v_i \iff i \notin S$.)
    
    Assuming $K \leq  C \sqrt{n}$ for small enough constant $C$ and applying Lemma~\ref{lem:match-maj} we get that there are sets $S,T\subseteq[n]$, each of size at least $p+1$, such that $T\subsetneq S$ and
    functions $\alpha_T$ and $\alpha_S|_T$ disagree on at most $p$ inputs.
    By the injectivity of the $\lambda_i$'s, this implies that
    \[
        \gamma^T_i\neq\gamma^S_i
    \]
    on at most $p$ coordinates $i\in T$. Equivalently, the roots $\gamma^T$ and $\gamma^S$ disagree on at most $p$ coordinates of $T$.

    Consider a subset $X \subseteq T$ such that $|T\setminus X|=p$ and $\gamma^T_i = \gamma^S_i$ for all $i \in X$. Note that $\Cube_{T\setminus X}(r_T)$ and $\Cube_{S\setminus X}(r_S)$ are the same point in the message space. At the same time $|T\setminus X| = p$ and $|S\setminus X| \geq p+1$, which contradicts perfect correctness.
    \end{proof}

    Using Lemma~\ref{lem:match-generalized}, this argument readily generalizes to $\EMAJ$ and $\MOD_q$ functions.

    \begin{theorem}\label{thm:mainResultFromWeakerAssumptions-emaj}
        Any $(n,K)$-protocol $(\mathsf{M}, \mathsf{Rec})$ that satisfies perfect correctness and support privacy for $\EMAJ$ must have $K = \Omega(\sqrt{n})$.
    \end{theorem}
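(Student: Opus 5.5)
We follow the template of the proof of Theorem~\ref{thm:mainResultFromWeakerAssumptions-maj}, replacing Lemma~\ref{lem:match-maj} by the Generalized Agreement Lemma~\ref{lem:match-generalized}. Take $n$ even (other cases are handled by fixing a coordinate and conditioning on an occurring message, as for $\MAJ$). Set $h=n/2$, so $\EMAJ(\chi_S)=1$ iff $|S|=h$.

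Fix $r^*$ and let $v$ be any $1$-vertex, say $v=\Cube_{[h]}(r^*)$. For every $S$ with $|S|=h$, support privacy gives $r_S$ with $\Cube_S(r_S)=v$. As before, the root $\gamma^S=\Cube_\emptyset(r_S)$ satisfies $\gamma^S_i=v_i$ iff $i\notin S$. Using injective partial colorings $\lambda_i:\cM_i\setminus\{v_i\}\to[K-1]$, define $\alpha_S(i)=\lambda_i(\gamma^S_i)$ for $i\in S$.

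The obstacle is that all these sets have the same size $h$, whereas Lemma~\ref{lem:match-generalized} needs a chain $S_1\subsetneq\cdots\subsetneq S_l$ with $|S_i|=p+i$ and $l\ge a\sqrt p$. To fix this, we also use corners of the cube $\Cube(r_S)$ other than the root. Choose $p=\lfloor n/4\rfloor$ and a chain $S_1\subsetneq\cdots\subsetneq S_l\subseteq[h]$ of sizes $p+i$, with $l=h-p=\Omega(n)$. For $Y\subseteq S_i$, let $W_i=S_i\cup D_i$, where $D_i$ is a fixed set of $h-|S_i|$ coordinates outside $[h]$; these coordinates are used for padding. Define $U_i=[h]\setminus S_i$ and take the cube $\Cube(r_{W_i})$ with corner $v$ at $W_i$.

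Consider the corner $\beta^i:=\Cube_{D_i}(r_{W_i})$, which has weight $|D_i|<h$, so $\Rec(\beta^i)=0$. It satisfies $\beta^i_j=v_j$ iff $j\in D_i$ or $j\notin W_i$. Color $S_i$ by $\alpha_{S_i}(j)=\lambda_j(\beta^i_j)$ for $j\in S_i$.

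Now suppose, for $i<j$, that $\alpha_{S_i}$ and $\alpha_{S_j}|_{S_i}$ agree on a set $X\subseteq S_i$ with $|S_i\setminus X|\le p$. We want to exhibit a common point of two cubes whose positions have different $\EMAJ$ values. In cube $r_{W_i}$ take the corner $(S_i\setminus X)\cup D_i\cup E$, and in cube $r_{W_j}$ take the corner $(S_j\setminus X)\cup D_j\cup E'$. Here $E,E'$ are chosen so that both corners equal the transcript that is $v$ off $X$ and $\beta$ on $X$. This requires the coordinates outside $W_i$ and $W_j$, and the mismatched padding $D_i\triangle D_j$, to take the value $v$ in both cubes. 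A careful choice of nested paddings $D_j\subseteq D_i$ forces this, since a coordinate in $D_i\setminus D_j$ lies outside $W_j$ and in $D_i$, so it is $v$ on both sides. Once the corners coincide, one corner has weight $n-|X|-\ldots$ exactly $h$ and the other a different weight. We will tune $E$ so that exactly one of them has weight exactly $h$, which contradicts perfect correctness. The delicate point is this weight bookkeeping: both positions are determined by which coordinates equal $v$, so the two weights differ by $|S_j\setminus S_i|\ne0$ minus the padding difference. The paddings must therefore be arranged so that this difference is nonzero. I would first try padding only one side, i.e.\ $W_i=W_j$ sizes fixed and comparing weights $|S\setminus X|$ versus $|S_i\setminus X|$, mirroring the $\MAJ$ proof exactly.

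Given this, every pair in the chain disagrees on more than $p$ points. Lemma~\ref{lem:match-generalized} with $l=\Omega(p)\ge a\sqrt p$ then yields $K-1\ge\Omega(\sqrt p)=\Omega(\sqrt n)$. The main obstacle is exactly the construction above: embedding a strictly increasing chain of colored sets into cubes whose common $1$-position has a fixed weight $h$, while preserving the contradiction.
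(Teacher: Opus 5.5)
Your proposal has a genuine gap. It never reaches a contradiction, and the padding construction cannot be repaired. Because you take $v$ to be a $1$-vertex, support privacy only places $v$ at corners of weight exactly $h$. So every cube you use satisfies $\Cube_{W}(r_W)=v$ with $|W|=h$. In such a cube, the corner $Y$ differs from $v$ exactly on the coordinates $Y\triangle W$.

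Suppose $\Cube_Y(r_{W_i})=\Cube_{Y'}(r_{W_j})$. Then $Y\triangle W_i=Y'\triangle W_j=:Z$, and $Z$ may only contain coordinates where the two cubes' non-$v$ symbols are known to coincide. Your coloring certifies this only on the agreement set, so $Z\subseteq X\subseteq S_i\subseteq W_i\cap W_j$. Hence $Y=W_i\setminus Z$ and $Y'=W_j\setminus Z$ both have weight $h-|Z|$, and $\EMAJ$ takes the same value on them. This covers every choice of $E,E'$.

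Concretely, your corners $(S_i\setminus X)\cup D_i$ and $(S_j\setminus X)\cup D_j$ both have weight $h-|X|$. The padding difference $|D_i|-|D_j|$ equals $|S_j\setminus S_i|$ exactly, so the quantity you need to be nonzero is identically zero. No arrangement of paddings changes this while $v$ is a $1$-vertex.

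The missing idea is to work on the other side of the function. Take $v$ to be a $0$-vertex, for instance $v=\Cube_{[n]}(r^*)$ exactly as in the $\MAJ$ proof, and set $p=\lfloor n/2\rfloor$. Since $\EMAJ$ is $0$ at every weight $p+1,\dots,n$, support privacy gives cubes with $\Cube_{S_i}(r_{S_i})=v$ for the genuine chain $S_i=[p+i]$, $i\in[n-p]$. No padding is needed, and you color by the roots $\gamma^{S_i}$ as before.

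Suppose that for some $i<j$ the colorings $\alpha_{S_i}$ and $\alpha_{S_j}|_{S_i}$ agree on a set $X\subseteq S_i$ with $|S_i\setminus X|=p$. Then $\Cube_{S_i\setminus X}(r_{S_i})=\Cube_{S_j\setminus X}(r_{S_j})$. These corners have weights $p$, where $\EMAJ=1$, and $p+(j-i)$, where $\EMAJ=0$, contradicting perfect correctness. So every pair disagrees on more than $p$ points. Lemma~\ref{lem:match-generalized} with $l=n-p\geq a\sqrt{p}$ then gives $K=\Omega(\sqrt{n})$. This is the paper's argument.
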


    \begin{theorem}\label{thm:mainResultFromWeakerAssumptions-mod}
        Any $(n,K)$-protocol $(\mathsf{M}, \mathsf{Rec})$ that satisfies perfect correctness and support privacy for $\MOD_q$ for $2\leq q \leq (1-c)n$ for some constant $c>0$ must have $K = \Omega(\sqrt{q})$.
    \end{theorem}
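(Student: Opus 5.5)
We run the same argument as for Theorem~\ref{thm:mainResultFromWeakerAssumptions-maj}, but we centre the chain of sets just above weight $q$, so that the collision it produces lands on the two sides of a multiple of $q$. Put $p:=q$ and $l:=\min\{q-1,\,n-q\}$. Since $q\le(1-c)n$ we have $n-q\ge cn\ge cq$, so $l\ge a\sqrt{q}$ for a constant $a=a(c)>0$ once $q$ is large. For bounded $q$ the bound $K=\Omega(\sqrt q)$ is trivial, because correctness already forces $K\ge 2$. Take the chain $S_i=[q+i]$ for $i=1,\ldots,l$. Then $q<|S_i|\le q+l<2q$, so $|S_i|\not\equiv 0\pmod q$ and $\MOD_q(\chi_{S_i})=0$ for every $i$. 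We also need $q+l\le n$, which holds by the choice of $l$.

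\textbf{Colourings.} Fix $r^*$ and set $v:=\Cube_{S_l}(r^*)$. This is a $0$-vertex. Support privacy then gives, for each $i$, a string $r_{S_i}$ with $\Cube_{S_i}(r_{S_i})=v$. Let $\gamma^{S_i}:=\Cube_\emptyset(r_{S_i})$. Perfect correctness forces the two local symbols to differ on every coordinate, since $\MOD_q$ depends on all of them. Hence $\gamma^{S_i}_k=v_k$ if and only if $k\notin S_i$. Fix injective maps $\lambda_k:\cM_k\setminus\{v_k\}\to[K-1]$ and define $\alpha_{S_i}(k)=\lambda_k(\gamma^{S_i}_k)$ for $k\in S_i$, exactly as before.

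\textbf{Dichotomy.} Either every pair $i<j$ has $\alpha_{S_i}$ and $\alpha_{S_j}|_{S_i}$ disagreeing on more than $p=q$ points, or some pair does not. In the first case Lemma~\ref{lem:match-generalized} gives $K-1\ge|\Sigma|=\Omega_a(\sqrt q)$, which is the claim. In the second case fix a pair $i<j$ with at most $q$ disagreements. Then $\gamma^{S_i}$ and $\gamma^{S_j}$ agree on at least $|S_i|-q$ coordinates of $S_i$. Choose $X\subseteq S_i$ with $|S_i\setminus X|=q$ on which they agree.

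\textbf{Collision.} Check coordinate by coordinate that $\Cube_{S_i\setminus X}(r_{S_i})=\Cube_{S_j\setminus X}(r_{S_j})$:
\begin{itemize}
\item on $X$ both corners equal the common root symbol;
\item on $S_i\setminus X$ both equal $v_k$;
\item on $S_j\setminus S_i$ the first is $\gamma^{S_i}_k=v_k$ and the second is $v_k$;
\item outside $S_j$ both roots equal $v$.
\end{itemize}
The two corners have weights $q$ and $q+(j-i)$ with $0<j-i<q$. So the first has $\MOD_q$-value $1$ and the second has value $0$, while they are the same transcript. This contradicts perfect correctness.

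The main obstacle is choosing the window of set sizes. Every set must have the same output, so that a single vertex $v$ serves as a common corner. At the same time, every pair must admit a split with one weight $\equiv 0\pmod q$ and the other not, and the agreement budget must equal the threshold $p$ used in Lemma~\ref{lem:match-generalized}. Placing all sizes strictly between $q$ and $2q$ achieves all three. It is exactly the hypothesis $q\le(1-c)n$ that makes the window long enough, $l=\Omega(q)\ge a\sqrt q$.
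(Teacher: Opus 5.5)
Your proposal is correct and is essentially the paper's own argument. The paper also takes $p=q$ and $\ell=\min\{q-1,n-q\}$, uses root colourings on the chain $S_i=[q+i]$ together with Lemma~\ref{lem:match-generalized}, and gets coincident corners at weights $q$ and $q+(j-i)$. Your write-up additionally makes explicit what the paper's sketch leaves implicit: the common vertex $v$ must be a $0$-output vertex (e.g.\ the $S_l$-corner), and every coordinate is relevant, so $\gamma^{S}_k=v_k$ exactly when $k\notin S$.
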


    For the proof of these results, note that the proof for $\MAJ$ generalizes to symmetric functions $f$ satisfying the following condition: there exist $1\leq p$ and $\ell=\Omega(\sqrt p)$ with $p+\ell\leq n$ such that $f(x) \neq f(y)$ when $\|x\|=p$ and $\|y\|=p+i$ for any $i \in [\ell]$. Use the same root colorings on $S_i=[p+i]$, $i\in[\ell]$. An agreement except at $p$ coordinates gives coincident corners at weights $p$ and $p+|S\setminus T|$, as in the majority proof. For $\EMAJ_n$, take $p=\lfloor n/2\rfloor$ and $\ell=n-p$; for $\MOD_{q,n}$, take $p=q$ and $\ell=\min\{q-1,n-q\}=\Omega_c(q)$. Bounded parameters follow from the function being non-constant.
    
    Using Lemma~\ref{lem:lifting}, we lift the lower bounds on the individual message complexities $C_{\text{indiv}}(\MAJ)$, $C_{\text{indiv}}(\EMAJ)$, and $C_{\text{indiv}}(\MOD_q)$ given by Theorems~\ref{thm:mainResultFromWeakerAssumptions-maj}, \ref{thm:mainResultFromWeakerAssumptions-emaj}, and \ref{thm:mainResultFromWeakerAssumptions-mod} to obtain the following lower bounds on the PSM communication complexity:
\begin{corollary}
\label{cor:maj-emaj-modq-lower-bound}
    The functions $\MAJ_n$ and $\EMAJ_n$ (when $n$ is even) have communication complexity $\Omega(n\log{n})$. The function $\MOD_q$ has communication complexity $C(\MOD_q)=\Omega(n\log q)$ when $q\leq(1-c)n$ for a fixed constant $c>0$.
\end{corollary}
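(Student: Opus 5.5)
The corollary follows by combining the individual alphabet bounds of Theorems~\ref{thm:mainResultFromWeakerAssumptions-maj}, \ref{thm:mainResultFromWeakerAssumptions-emaj} and \ref{thm:mainResultFromWeakerAssumptions-mod} with the lifting Lemma~\ref{lem:lifting}. The restrictions in the remark after that lemma supply its hypothesis. Throughout I take $s=\lceil n/2\rceil$ for $\MAJ$ and $\EMAJ$. For $\MOD_q$ I take $s=\max\{\lceil q/(1-c/2)\rceil, q+1\}$, adjusted so that $s\le n-\Omega(n)$; this is possible because $q\le (1-c)n$. In each case $n-s+1=\Omega(n)$, so it suffices to find $B$ with $B+1=\Omega(\log n)$ (respectively $\Omega(\log q)$) such that every $I$ with $|I|\ge s$ admits a restriction $z$ with $C_{\text{indiv}}(f|_{\bar I=z})\ge B+1$.

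For $\MAJ_n$, fix any $I$ with $|I|=t\ge s$. Choose $z$ with $\lceil n/2\rceil-\lceil t/2\rceil$ ones outside $I$, so that $\MAJ_n|_{\bar I=z}\equiv \MAJ_t$. Any support-private protocol for $\MAJ_t$ needs alphabet $K=\Omega(\sqrt t)=\Omega(\sqrt n)$ by Theorem~\ref{thm:mainResultFromWeakerAssumptions-maj}. Hence $C_{\text{indiv}}\ge \tfrac12\log n-O(1)$, and I set $B+1=\lfloor \tfrac12\log n - O(1)\rfloor$.

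For $\EMAJ_n$ I use the analogous restriction with $\lfloor n/2\rfloor-\lfloor t/2\rfloor$ ones outside $I$, giving $\EMAJ_t$. Theorem~\ref{thm:mainResultFromWeakerAssumptions-emaj} then gives the same bound. A small point to check: the theorem must hold for all $t$, not only even $t$. I expect this to follow since the generalized condition with $p=\lfloor t/2\rfloor$ and $\ell=t-p$ holds for every $t$. If not, I would use the second remark after Lemma~\ref{lem:lifting} to pass to an $I'\subseteq I$ of even size.

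For $\MOD_q$, restrict with $z=0$ outside $I$ to get $\MOD_{q,t}$ with $q\le (1-c')t$ for some constant $c'>0$; this is guaranteed by the choice of $s$. Theorem~\ref{thm:mainResultFromWeakerAssumptions-mod} gives $K=\Omega(\sqrt q)$, so $B+1=\Omega(\log q)$. Here $n-s+1=\Omega(n)$ because $s\le (1-c/2)n+O(1)$. The degenerate case of bounded $q$ only needs $C(\MOD_q)\ge n$ (each party's message must depend on its bit), so I handle it separately.

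The only real obstacle is the bookkeeping. The lifted bound must be applied at every size $t\ge s$, so it has to be uniform, with constants independent of $t$. For $\MOD_q$, one must also ensure the hypothesis $q\le(1-c')t$ survives the restriction. Beyond that, the argument is a direct substitution into Lemma~\ref{lem:lifting}, giving $C(f)\ge (B+1)(n-s+1)=\Omega(n\log n)$, respectively $\Omega(n\log q)$.
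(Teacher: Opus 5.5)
Your proposal is correct and is essentially the paper's argument: you feed the alphabet bounds of Theorems~\ref{thm:mainResultFromWeakerAssumptions-maj}--\ref{thm:mainResultFromWeakerAssumptions-mod} through the restrictions listed after Lemma~\ref{lem:lifting} into that lemma with $s=\lceil n/2\rceil$ for $\MAJ$ and $\EMAJ$, and you handle bounded $q$ by the trivial $\Omega(n)$ bound. Your parity worry for $\EMAJ_t$ is moot, since $p=\lfloor t/2\rfloor$, $\ell=t-p$ works for every $t$; the only other difference is cosmetic, namely that for $\MOD_q$ the paper takes $s=\lceil(1-c/2)n\rceil$ instead of your $s=\lceil q/(1-c/2)\rceil$, and both choices give $q\le(1-c')t$ for all $t\ge s$ together with $n-s+1=\Omega_c(n)$.
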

    For $\MAJ$ and $\EMAJ$, use $s=\lceil n/2\rceil$ and an integer $B=\Theta(\log n)$ below the alphabet lower bound minus one in Lemma~\ref{lem:lifting}. For $\MOD_q$ with growing $q$, use $s=\lceil(1-c/2)n\rceil$ and a sufficiently small $B=\Theta(\log q)$: every zero restriction to $m\geq s$ variables satisfies $q\leq(1-c')m$ for $c'=c/(2-c)>0$, and $n-s+1=\Omega_c(n)$. Bounded $q\geq2$ follows from the $\Omega(n)$ correctness bound, since at fixed randomness every relevant party can choose a nonempty message.

    For $\EMAJ$ and for $\MOD_q$ with prime $q$, matching upper bounds follow from variants of the $\OR$ protocol in Theorem~\ref{thm:folklore-or}. We include the constructions for completeness.

\begin{theorem} If $q$ is prime, there exists a perfectly correct, perfectly private PSM protocol for $\MOD_{q, n}$ of cost $O(n\log{q})$.
\end{theorem}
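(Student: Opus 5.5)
We adapt the $\OR$ protocol of Theorem~\ref{thm:folklore-or}, now working over the field $\bbZ_q$ instead of $\bbZ_p$ with $p>n$. The shared randomness consists of $r_1,\ldots,r_n\in\bbZ_q$, uniform subject to $\sum_i r_i=0$, together with a uniform $s\in\bbZ_q^*$. Party $i$ sends $m_i=r_i+x_i\cdot s \bmod q$. The evaluator computes $y=\sum_i m_i = s\cdot\|x\| \bmod q$ and outputs $1$ iff $y=0$. Each message is one element of $\bbZ_q$, so the total cost is $n\lceil\log q\rceil=O(n\log q)$.

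\emph{Correctness.} Since $q$ is prime and $s\neq 0$, we have $s\|x\|\equiv 0 \pmod q$ iff $\|x\|\equiv 0\pmod q$. This holds for every choice of randomness, so the protocol is perfectly correct.

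\emph{Privacy.} The argument has two steps.

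\begin{enumerate}
\item Fix $x$ and condition on $s$. The vector $(m_1,\ldots,m_n)$ is a fixed shift of $(r_1,\ldots,r_n)$, and $(r_i)$ is uniform on the hyperplane $\{\sum_i r_i=0\}$. Hence the transcript is uniform over all tuples in $\bbZ_q^n$ whose sum equals $s\|x\|$.
\item Consider the sum $y=s\|x\|$. If $\|x\|\equiv 0\pmod q$, then $y=0$ always, so the transcript is uniform on tuples with sum $0$. Otherwise $\|x\|$ is a unit mod $q$ (this is the only place primality is needed). Then $s\mapsto s\|x\|$ is a bijection of $\bbZ_q^*$, so $y$ is uniform on $\bbZ_q^*$. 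The transcript is therefore uniform over all tuples with nonzero sum, independent of the particular nonzero residue $\|x\| \bmod q$.
\end{enumerate}

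In both cases the transcript distribution depends only on $\MOD_{q,n}(x)$, which gives perfect privacy.

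The main subtlety is the second case of step 2: showing that distinct nonzero residues induce identical distributions. This is exactly where the multiplicative mask $s$ and the field structure of $\bbZ_q$ are used. For composite $q$, the value $s\|x\|$ would reveal $\gcd(\|x\|,q)$, and the argument would fail.
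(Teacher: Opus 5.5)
Your proposal is correct and follows the paper's proof: the $\mathbb{Z}_q$ version of the $\OR$ protocol, with zero-sum additive masks and a uniform multiplier $s\in\mathbb{Z}_q^*$. In both, privacy follows because the transcript is uniform either over zero-sum tuples or over nonzero-sum tuples. The paper calls this step ``not difficult to see''; your argument spells it out, namely that the map $s\mapsto s\|x\|$ is a bijection of $\mathbb{Z}_q^*$ when $\|x\|\not\equiv 0 \pmod q$ and that all affine hyperplanes $\{\sum_i m_i = y\}$ have the same size.
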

\begin{proof}
   All arithmetic that follows is modulo $q$. Consider the following protocol. 
    \begin{enumerate}
        \item The shared randomness will consist of $r_1,\ldots,r_n$ uniformly random elements of $\bbZ_q$ such that $\sum_i r_i = 0$ and a uniformly random $s\in \bbZ_q^*$.
        \item The $i$th party, holding $x_i\in\zo$ then computes message $m_i = r_i + x_i\cdot s$.
        \item The evaluator then computes $y=\sum_i m_i$ and outputs $1$ if and only if $y = 0$.
    \end{enumerate} 

    It is not difficult to see that if $\sum_{i = 1}^n x_i \equiv 0 \pmod{q}$, then the evaluator sees uniformly random messages that sum to 0. On the other hand, if $\sum_{i = 1}^n x_i \not\equiv 0 \pmod{q}$, then the evaluator sees uniformly random messages summing to a uniformly random non-zero element. In this protocol, each party sends a single element from $\bbZ_q$, yielding an overall communication cost of $n\lceil\log q\rceil$.
    \end{proof}    

\begin{theorem} There exists a perfectly correct, perfectly private PSM protocol for $\EMAJ_n$ of cost $O(n\log{n})$.
\end{theorem}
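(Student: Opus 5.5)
We follow the template of the $\OR$ protocol in Theorem~\ref{thm:folklore-or}, shifting the target so that the evaluator tests whether the Hamming weight equals $\lfloor n/2\rfloor$ rather than whether it is zero. Let $p$ be the smallest prime with $p>n$; by Bertrand's postulate $p\le 2n$, so field elements have $\lceil\log p\rceil = O(\log n)$ bits. Set $t=\lfloor n/2\rfloor$ and work in $\bbZ_p$. The shared randomness consists of $r_1,\dots,r_n\in\bbZ_p$, uniform subject to $\sum_i r_i=0$, together with a uniform $s\in\bbZ_p^*$.

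Party $i$ sends $m_i = r_i + (x_i - t/n)\cdot s$. Here $t/n$ is computed in $\bbZ_p$, which is legitimate since $n<p$ makes $n$ invertible. Equivalently, we may fold the constant into the randomness: party $1$ sends $r_1 + (x_1 - t)s$ and the others send $r_i + x_i s$. In either form the evaluator computes $y=\sum_i m_i = (\|x\|-t)\,s$ and outputs $1$ iff $y=0$.

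Correctness: $y=0$ iff $\|x\|\equiv t \pmod p$. Since $0\le \|x\|\le n<p$ and $0\le t\le n$, the difference $\|x\|-t$ lies strictly between $-p$ and $p$. So the congruence holds iff $\|x\|=t$, that is, iff $\EMAJ_n(x)=1$.

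Privacy: the argument is the same as for $\OR$. Conditioned on the value $y$ of the sum, the vector $(m_1,\dots,m_n)$ is uniform over tuples with that sum. This is because the $r_i$ are uniform on the hyperplane $\sum r_i=0$, and shifting by any fixed vector keeps the tuple uniform on the corresponding affine hyperplane. If $\EMAJ_n(x)=1$, then $y=0$ deterministically. If $\EMAJ_n(x)=0$, then $\|x\|-t$ is a nonzero element of $\bbZ_p$, so $y=(\|x\|-t)s$ is uniform on $\bbZ_p^*$ because $s$ is. Hence the transcript distribution depends only on the output.

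Total cost is $n\lceil\log p\rceil = O(n\log n)$. The only point needing care is the correctness step, where we must ensure no wrap-around modulo $p$; the choice $p>n$ handles this. The folded-constant variant is the version we would present, since it keeps each message as a single field element computed locally.
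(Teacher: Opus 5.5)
Your proposal is correct and is essentially the paper's protocol: the paper keeps $m_i=r_i+x_i s$ for every party and instead samples the $r_i$ uniformly subject to $\sum_i r_i=-s\lfloor n/2\rfloor$. Since $(r_1-ts,r_2,\ldots,r_n)$ with zero-sum $r$ has exactly that distribution given $s$, your folded variant produces the same transcript distribution. Your correctness argument (no wrap-around because $p>n$) and privacy argument (uniform on the affine hyperplane, with the sum uniform on $\bbZ_p^*$ when $\EMAJ_n(x)=0$) match the paper's.
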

\begin{proof}
   Let $p > n$ be the smallest prime greater than $n$. All arithmetic that follows is modulo $p$. Consider the following protocol. 
    \begin{enumerate}
        \item Sample $s$ uniformly from $\bbZ_p^*$. Then sample $(r_1,\ldots,r_n)$ uniformly from the tuples in $\bbZ_p^n$ satisfying $\sum_i r_i=-s\lfloor n/2\rfloor$.
        \item The $i$th party, holding $x_i\in\zo$ then computes message $m_i = r_i + x_i\cdot s$.
        \item The evaluator then computes $y=\sum_i m_i$ and outputs $1$ if and only if $y = 0$.
    \end{enumerate} 

    The message sum is $s(\|x\|-\lfloor n/2\rfloor)$. Since $p>n$ and $s\neq0$, this sum is zero if and only if $\EMAJ_n(x)=1$. In that case, the transcript is uniform over zero-sum tuples. Otherwise, its sum is uniform over $\bbZ_p^*$, and the transcript is uniform conditional on that sum. Its distribution therefore depends only on the output. Each party sends one element of $\bbZ_p$, for total communication $O(n\log n)$.
    \end{proof}

    \subsection{General Lower Bound}
    In this section we provide the general form of our lower bound approach.


    \begin{definition}[Consistent Coloring Family]
        Consider a function $f\colon \zo^n \to \zo$, a positive integer $K$, and a bit $b\in\zo$. For any $x \in \zo^n$ consider a coloring $\alpha_x$ of $[n]$ in colors $[K]$. We say that the family of colorings $\{\alpha_x \mid x \in \zo^n\}$ is a \emph{$b$-consistent $K$-coloring family} for $f$ if the following condition holds.
        
        For any two inputs $x,y \in \zo^n$ such that $f(x)=f(y)=b$ consider the agreement set $A_{xy} = \{i \in [n] \mid \alpha_{x}(i) = \alpha_y(i)\}$. Then we require that for any $z$ such that $z_i=1$ implies $i \in A_{xy}$ we have $f(x \oplus z) = f(y \oplus z)$.
    \end{definition}

    \begin{theorem} \label{thm:general}
        Consider a function $f\colon \zo^n \to \zo$. Suppose that for some $K$ and $b\in\zo$, there is no $b$-consistent $K$-coloring family for $f$. Then there is no $(n,K)$-protocol for $f$ satisfying perfect correctness and support privacy.
    \end{theorem}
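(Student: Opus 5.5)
We argue by contraposition: from a perfectly correct, support-private $(n,K)$-protocol for $f$ we will construct a $b$-consistent $K$-coloring family for $f$ (for the given $b$, assuming $f^{-1}(b)\neq\emptyset$; if $f^{-1}(b)$ is empty, the consistency condition is vacuous and any family works, so the hypothesis already fails). The construction mirrors the proofs of Theorems~\ref{thm:mainResultFromWeakerAssumptions} and~\ref{thm:mainResultFromWeakerAssumptions-maj}. Fix some $x^*$ with $f(x^*)=b$ and some $r^*\in\cR$, and set the anchor transcript $v:=\mathsf{M}(x^*;r^*)$, which lies in $Z_b$ of the cube system. By support privacy (property 2 of the $f$-cube system), for every $x$ with $f(x)=b$ there is $r_x$ such that $\mathsf{M}(x;r_x)=v$. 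Define the ``root'' $\gamma^x:=\mathsf{M}(0^n;r_x)$, or more conveniently work directly with the cube $\Cube(r_x)$. For each $i$ fix an injection $\lambda_i:\cM_i\to[K]$, which is possible since $|\cM_i|\le K$.

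The coloring will be $\alpha_x(i):=\lambda_i(\mathsf{M}_i(1-x_i;r_x))$ for $f(x)=b$, i.e.\ the color of the \emph{other} endpoint of coordinate $i$ in the cube $\Cube(r_x)$, the endpoint at $x_i$ being forced to equal $v_i$. For $x$ with $f(x)\neq b$ we take $\alpha_x$ arbitrary, since only $b$-inputs are constrained by the definition.

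To verify consistency, take $x,y$ with $f(x)=f(y)=b$ and $z$ supported in $A_{xy}$. For $i\notin \operatorname{supp}(z)$ we have
\[
\mathsf{M}_i((x\oplus z)_i;r_x)=\mathsf{M}_i(x_i;r_x)=v_i=\mathsf{M}_i(y_i;r_y)=\mathsf{M}_i((y\oplus z)_i;r_y).
\]
For $i\in\operatorname{supp}(z)\subseteq A_{xy}$, equality of colors and injectivity of $\lambda_i$ give $\mathsf{M}_i(1-x_i;r_x)=\mathsf{M}_i(1-y_i;r_y)$, which is exactly $\mathsf{M}_i((x\oplus z)_i;r_x)=\mathsf{M}_i((y\oplus z)_i;r_y)$. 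Hence $\mathsf{M}(x\oplus z;r_x)=\mathsf{M}(y\oplus z;r_y)$, and perfect correctness yields $f(x\oplus z)=\Rec(\cdot)=f(y\oplus z)$. So the family is $b$-consistent, contradicting the hypothesis.

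The main subtlety is not the calculation but getting the coloring right: one must color by the opposite endpoint, since the $x_i$-endpoint is pinned to $v_i$ and carries no information. One must also check that $K$ colors suffice; the earlier proofs used $K-1$ colors by excluding $v_i$, but the definition here allows $[K]$, so the plain injection is enough. The degenerate case $f^{-1}(b)=\emptyset$, handled above, should also be noted explicitly.
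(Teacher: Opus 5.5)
Your proposal is correct and matches the paper's proof. The paper also anchors a $b$-transcript $v$, colors each $b$-input $\chi_S$ by the opposite corner $\Cube_{\overline{S}}(r_S)$ (which is your $\mathsf{M}_i(1-x_i;r_x)$), and obtains the coinciding corners $\Cube_{S_x\oplus S}(r_{S_x})=\Cube_{S_y\oplus S}(r_{S_y})$ exactly as you do. The only difference is that it is phrased as a contradiction rather than a contrapositive construction.
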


    \begin{proof}
        Suppose there is an $(n,K)$-protocol for $f$ satisfying perfect correctness and support privacy.

        Consider the message space $\cM_1\times \dots\times \cM_n$ with the hypercube structure defined above. Relabel each $\cM_i$ injectively in $[K]$. The set of inputs with output $b$ is nonempty, since otherwise every coloring family is $b$-consistent.

        Consider $b \in \{0,1\}$ from the statement of the theorem and consider a $b$-vertex $v$ in the message space, that is $v \in Z_b$. By support privacy, for any $S$ with $f(\chi_S)=b$ there is a cube $\Cube(r_S)$ such that $\Cube_{S}(r_S)=v$. Define a coloring $\alpha_{\chi_S}(i) = \Cube_{\overline{S}}(r_S)_i$, and extend the family arbitrarily to inputs with $f(x)\neq b$.
        
        By the hypothesis in the theorem this coloring cannot be $b$-consistent. That is, there are inputs $x, y \in \{0,1\}^n$ with $f(x)=f(y)=b$ and there is $S\subseteq A_{xy}$ such that $f(x \oplus \chi_S) \neq f(y \oplus \chi_S)$. Consider sets $S_x$ and $S_y$ such that $\chi_{S_x}=x$ and $\chi_{S_y}=y$. In cubes $\Cube(r_{S_x})$ and $\Cube(r_{S_y})$ we have
        \[
        \Cube_{S_x}(r_{S_x}) = \Cube_{S_y}(r_{S_y}) = v.
        \]
        Then by the coloring agreement we also have
        \[
        \Cube_{S_x \oplus S}(r_{S_x}) = \Cube_{S_y  \oplus S}(r_{S_y}),
        \]
        which is a contradiction with perfect correctness since
        \[
        f(\chi_{S_x \oplus S}) = f(x \oplus \chi_S) \neq f(y \oplus \chi_S) = f(\chi_{S_y \oplus S}).
        \]
        
        \end{proof}

    We next relate the cube argument to the Nechiporuk bound of~\cite{ball2020complexity}.

    \begin{theorem}[Nechiporuk Bound] \label{thm:nechiporuk}
        Consider a function $f\colon \zo^n \to \zo$ and $S \subseteq [n]$. Denote by $N_S(f)$ the number of subfunctions of $f$ on $S$. Then every perfectly correct, support-private protocol for $f$ with local alphabet sizes $K_i$ satisfies $\sum_{i\in S}\log K_i\geq\log N_S(f)-1$.
    \end{theorem}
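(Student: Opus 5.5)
The plan is to use the cube-system representation and pigeonhole over the messages of the parties in $S$. Fix a perfectly correct, support-private protocol with alphabets $\cM_i$, $|\cM_i|=K_i$. Write $\bar S=[n]\setminus S$. A subfunction of $f$ on $S$ is a function $f|_{\bar S = w}:\zo^{|S|}\to\zo$, obtained by fixing the coordinates outside $S$ to some $w\in\zo^{\bar S}$. Here $N_S(f)$ counts the distinct such functions.

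The main step is to attach to each subfunction $g=f|_{\bar S=w}$ a tuple of messages for the parties in $\bar S$. The tuple should determine the entire function $g$ through the decoder. To build it, fix a single transcript $v$ that appears in the protocol, say $v\in Z_b$. By support privacy, for every assignment $w$ there are an input $x^w$ with $x^w_{\bar S}=w$ and randomness $r_w$ such that the corner $\M(x^w;r_w)$ lies in a fixed common set. More carefully, I will aim for the following property. Pick $b\in\zo$ such that for each $w$ some completion $x$ of $w$ has $f(x)=b$; if $g$ is constant $1-b$ I handle it separately, which is the source of the ``$-1$''. Then take $r_w$ with $\M(x^w;r_w)=v$. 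Now $\Cube(r_w)$ has the same $S$-coordinates as $v$ at the corner $x^w$. Along coordinates in $S$, the cube's endpoints $(\M_i(0;r_w),\M_i(1;r_w))$ for $i\in S$ form a pair in $\cM_i^2$ containing $v_i$.

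This is where the real obstacle lies. The family of message tuples $\{\M_S(y;r_w): y\in\zo^S\}$ together with the fixed $\bar S$-part $\M_{\bar S}(w;r_w)$ determines $g$ via $\Rec$. However, the number of possible $S$-endpoint pairs is $\prod_{i\in S}K_i(K_i-1)$, roughly $2\sum\log K_i$, which only gives a bound with a factor $2$. To recover $\sum_{i\in S}\log K_i$, I plan to use $v$ to pin down one endpoint per coordinate, so only the other endpoint (the ``color'' $\lambda_i$, as in the proofs above) varies, with at most $K_i$ choices including the degenerate one. Concretely, define $\alpha_w(i)=\M_i(1-x^w_i;r_w)$ for $i\in S$. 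This lies in $\cM_i$, so there are at most $\prod_{i\in S}K_i$ possible colorings.

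The comparison step follows the argument of Theorem~\ref{thm:general}. Suppose $w,w'$ receive identical colorings and their anchors $x^w_S,x^{w'}_S$ coincide; I will fix the anchor as a function of $b$ so that this holds. Then for every $y\in\zo^S$ the $S$-parts of $\M(y,w;r_w)$ and $\M(y,w';r_{w'})$ agree. I still need the $\bar S$-parts to agree. They do, since both cubes pass through $v$ at corners with $\bar S$-parts $w$ and $w'$, and the $\bar S$-coordinates of the corner $(y,w)$ equal $v_{\bar S}$ regardless of $y$. Hence $\M(y,w;r_w)=\M(y,w';r_{w'})$ for all $y$, and perfect correctness gives $f|_{\bar S=w}=f|_{\bar S=w'}$. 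So distinct subfunctions get distinct colorings.

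The remaining delicate point is choosing the anchor uniformly. I will take the anchor to be $x^w_S=y_0$ for a fixed $y_0$, and choose $v$ among $b$-vertices. This is possible for all subfunctions $g$ with $g(y_0)=b$. The others either satisfy $g(y_0)=1-b$, and are handled symmetrically with a $(1-b)$-vertex, or the split is absorbed into the factor $2$. This gives $N_S(f)\le 2\prod_{i\in S}K_i$, that is, $\sum_{i\in S}\log K_i\ge\log N_S(f)-1$.
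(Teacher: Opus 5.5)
Your proposal is correct and is essentially the paper's argument: fix a common anchor corner (the paper uses $y_0=\vec 0$), use support privacy to map it to a fixed $b$-transcript $v$, and pigeonhole on the opposite $S$-endpoints, since equal endpoints force every corner with fixed $\bar S$-part to coincide and so contradict perfect correctness. The only cosmetic difference is that the paper restricts to the majority class of $g(\vec 0)$ (at least $N_S(f)/2$ subfunctions), while you bound each class, $g(y_0)=b$ and $g(y_0)=1-b$, separately by $\prod_{i\in S}K_i$; this gives the same factor $2$.
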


    Before proving Theorem~\ref{thm:nechiporuk} we observe that the following corollary is immediate.
    \begin{corollary}
        Consider a function $f\colon \zo^n \to \zo$ and let $V = (V_1, \ldots, V_m)$ be a partition of $[n]$, that is $[n] = \bigsqcup_{i=1}^{m} V_i$. Then 
        $$
        C_\text{indiv-sum}(f) \geq \sum_{i=1}^m \log N_{V_i}(f) - m.
        $$
    \end{corollary}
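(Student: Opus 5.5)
The corollary should follow by applying Theorem~\ref{thm:nechiporuk} once to each block of the partition and adding the resulting inequalities. No new combinatorics is needed; the only work is keeping track of which protocols the minimum in $C_\text{indiv-sum}(f)$ ranges over.

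Fix an arbitrary PSM protocol $(\mathsf{M},\Rec)$ for $f$ that is admissible in the definition of $C_\text{indiv-sum}(f)$, with $\mathsf{M}:\zo^n\to[K_1]\times\cdots\times[K_n]$. Such a protocol is perfectly correct and perfectly private. Since perfect privacy implies support privacy (as noted in the footnote in the introduction), the protocol satisfies the hypotheses of Theorem~\ref{thm:nechiporuk}. If one instead reads $C_\text{indiv-sum}$ under support privacy, as the lifting lemma permits, the hypotheses hold directly.

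For each block $V_i$ of the partition, Theorem~\ref{thm:nechiporuk} with $S=V_i$ gives
\[
    \sum_{j\in V_i}\log K_j \;\geq\; \log N_{V_i}(f)-1 .
\]
The blocks $V_1,\ldots,V_m$ are disjoint and cover $[n]$, so summing these $m$ inequalities gives
\[
    \sum_{j=1}^n \log K_j \;=\; \sum_{i=1}^m\sum_{j\in V_i}\log K_j \;\geq\; \sum_{i=1}^m \log N_{V_i}(f) - m .
\]

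The right-hand side does not depend on the protocol. The bound therefore holds for every admissible protocol, and taking the minimum over protocols yields $C_\text{indiv-sum}(f)\geq \sum_{i=1}^m\log N_{V_i}(f)-m$.

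The only point needing care is the first step: checking that the protocols counted in $C_\text{indiv-sum}$ meet the privacy requirement of Theorem~\ref{thm:nechiporuk}. Everything after that is additivity over a disjoint union.
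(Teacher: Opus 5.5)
Your proposal is correct and is essentially the paper's argument: the paper calls the corollary immediate from Theorem~\ref{thm:nechiporuk}, meaning you apply that theorem to each block $V_i$ and sum over the disjoint blocks. Your check that every protocol in the minimum defining $C_\text{indiv-sum}(f)$ is perfectly private, hence support-private, is the one hypothesis that needs verifying, and you handle it correctly.
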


    To address the communication metric in Theorem~\ref{thm:nechiporuk}, we analyze the communication cubes directly rather than apply Theorem~\ref{thm:general}.

    \begin{proof}[Proof of Theorem~\ref{thm:nechiporuk}]
        Denote the size of $S$ by $k$ and for notational convenience assume that $S = [k]$.
    
        Without loss of generality assume that at least half of the subfunctions of $f$ on $S$ output $0$ on $0$-input. Denote the set of these subfunctions by $\mathcal F_S$. In particular, $|\mathcal F_S| \geq N_S(f)/2$. For each function $g \in \mathcal F_S$ fix an input $z \in \zo^{n-k}$ on $\overline S$ such that $g = f_{S \mid z}$. Denote the set of all of these inputs by $Z \subseteq \zo^{n-k}$. In particular, $|Z| = |\mathcal F_S| \geq N_S(f)/2$

        In the communication cube consider an arbitrary supported $0$-transcript $v\in Z_0$. By support privacy for any $0$-input $x \in \zo^n$ of $f$ there is a cube that has $v$ as an $x$-corner. For each $z \in Z$ consider input $(\vec 0, z)$ to $f$ (note that it is a $0$-input) and fix such a cube $C_z$.
    
        Now for each $z \in Z$ consider the input $(\vec 1, z)$.
        For the sake of contradiction suppose $\prod_{i\in S}K_i<N_S(f)/2$. Then there are fewer than $N_S(f)/2$ possible tuples of messages of these players. Thus, for two inputs in $Z$ these transcripts are the same. Denote these inputs by $z_1$ and $z_2$ and their corresponding functions in $\mathcal F_S$ by $g_1$ and $g_2$.

        There is an input $y \in \zo^k$ such that $g_1(y) \neq g_2(y)$. As a result,
        $$
        f(y,z_1) = g_1(y) \neq g_2(y) = f(y,z_2).
        $$
        At the same time, by the choice of $z_1$ and $z_2$ the corner $(y,z_1)$ in $C_{z_1}$ and the corner $(y,z_2)$ in $C_{z_2}$ are the same vertex in the communication cube. This is a contradiction with the perfect correctness.
    \end{proof}

    \section{Upper Bounds}
    \label{sec:upper-bounds}

    \subsection{\texorpdfstring{Perfect Privacy for any Symmetric Function in $\widetilde{O}(n^{3/2})$ bits}{Perfect Privacy for any Symmetric Function}}
    \label{subsec:upper-bound:perf-priv-sym}

    In this section, we prove the following theorem.

    \begin{theorem}
    \label{thm:upper-bound-perf-private-sym}
        For any symmetric Boolean function $f \colon \{0,1\}^n \to \{0,1\}$ there is a perfectly correct, perfectly private PSM of total complexity $\widetilde{O}(n^{3/2})$.
    \end{theorem}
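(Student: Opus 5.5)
We follow the composition strategy sketched in the overview: an outer universal protocol evaluates $g$ from residues of the Hamming weight, and inner protocols privately produce the outer messages. Fix $f(x)=g(\|x\|)$. Choose $k=\Theta(\log n/\log\log n)$ distinct primes $q_1,\dots,q_k$, each $\Theta(\log n)$, with $\prod_j q_j>n$. By the Chinese Remainder Theorem, $\|x\|$ is determined by the residue vector $(\|x\| \bmod q_j)_{j\in[k]}$. Hence there is a function $h:\prod_j\ZZ_{q_j}\to\zo$ with $f(x)=h(\|x\|\bmod q_1,\dots,\|x\|\bmod q_k)$; we set $h$ arbitrarily on residue vectors that are not realized. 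The residue vector is only an intermediate object and must not be revealed, since distinct weights with the same $f$-value have distinct residues.

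\textbf{Outer protocol.} Apply Theorem~\ref{thm:bkn-psm} to $h$ as a function of $k$ virtual parties. Virtual party $j$ holds $y_j\in\ZZ_{q_j}$, embedded into $[N]$ with $N=\max_j q_j=O(\log n)$. This gives messages $\mathsf{M}^{out}_j(y_j;\rho)$ with total communication $O(k^3N^{k/2})$. With $k\approx\log n/\log\log n$ we have $N^{k/2}\approx\sqrt{n}\cdot\mathrm{poly}\log n$. Perfect privacy and correctness of the composition then reduce to two facts. First, the outer transcript distribution depends only on $h(y)=f(x)$. Second, each inner protocol reveals exactly the outer message and nothing more.

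\textbf{Inner protocols.} For each $j$ and each fixed outer randomness $\rho$, the map $y_j\mapsto \mathsf{M}^{out}_j(y_j;\rho)$ is a function of $\|x\|\bmod q_j=\sum_i x_i\bmod q_j$. This is a cyclic-group sum followed by a fixed lookup table $T_{j,\rho}:\ZZ_{q_j}\to\cM^{out}_j$. We compute each output bit of this table privately. The message $\mathsf{M}^{out}_j$ has $L=O(k^3N^{k/2})$ bits in the worst case (per virtual party it is $O(k^2N^{k/2})$). Each bit is a symmetric function of $x$ that depends only on $\|x\|\bmod q_j$, so we need a PSM for ``$\|x\| \bmod q \in A$'' for an arbitrary $A\subseteq\ZZ_q$.

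To build this, we first use the generalization of the $\MOD_q$ protocol: party $i$ sends $r_i+x_i \bmod q$ with $\sum r_i=c$ random. This reveals $\|x\|+c$ but no other information. However, we need to hide the residue itself, so a random shift is not enough. Instead, write the predicate as an OR over $a\in A$ of $\MOD$-type tests, or better, use a permutation branching program of width $q$ that tracks $\|x\|\bmod q$. Such a program has length $n$ and width $q=O(\log n)$. The standard conversion of permutation branching programs to PSM, via randomizing the product of permutation matrices in $S_q$ as in Kilian or Ishai--Kushilevitz (the same conversion we use for $\MAJ$), gives each party an $O(q\log q)$-bit message. Inner cost per output bit is therefore $O(n\log n\log\log n)$.

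\textbf{Total cost and main obstacle.} Summing gives total communication $L\cdot\widetilde{O}(n)=\widetilde{O}(n^{3/2})$. The main obstacle is the privacy of the composition. The inner PSMs must depend on the shared outer randomness $\rho$, because the lookup tables $T_{j,\rho}$ vary with $\rho$. We handle this by putting $\rho$ inside the shared randomness and running the inner encodings for the Boolean functions $x\mapsto T_{j,\rho}(\|x\|\bmod q_j)_\ell$. The simulator argument then goes as follows. The joint view equals, in distribution, the inner simulators applied to the outer transcript. The outer transcript in turn is simulated from $f(x)$ alone by privacy of Theorem~\ref{thm:bkn-psm}. We must check that the inner simulators depend only on their output bits and not on $\rho$ beyond those bits. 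This holds for the branching-program encoding, since its output distribution is determined by the computed bit. A cleaner route to the same conclusion is to view each outer message bit as a function of $(x,\rho)$ with $\rho$ public to parties and hidden from the evaluator. Correctness is immediate. The remaining care is bookkeeping the prime sizes so that $N^{k/2}=\widetilde{O}(\sqrt n)$ with only polylogarithmic losses from the $k^3$ factor and the per-bit inner cost.
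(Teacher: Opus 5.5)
Your architecture is the paper's: a BKN outer protocol on the CRT residues of $\|x\|$, plus inner protocols that deliver each outer message. However, the step you yourself single out as the main obstacle fails for the inner encoding you chose. In Kilian's conversion the accepting set is sent as its masked image $A'$, so the inner transcript reveals $|A|$ in addition to the output bit. This is why the simulator in the proof of Lemma~\ref{lem:pbp-psm} samples a uniform $|A|$-element set, and why that lemma assumes a \emph{fixed} accepting set. In your use, $A=A_{j,\ell,\rho}=\{y\in\ZZ_{q_j}: T_{j,\rho}(y)_\ell=1\}$ depends on the hidden outer randomness $\rho$. The evaluator therefore sees the outer transcript together with all the counts $|A_{j,\ell,\rho}|$, and privacy of Theorem~\ref{thm:bkn-psm} says nothing about this joint view. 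It can genuinely leak. Suppose an outer message bit has the one-time-padded form $\mathbbm{1}[y_j=a]\oplus\beta$ with a secret uniform bit $\beta$. Then $|A|\in\{1,q_j-1\}$ reveals $\beta$, and hence whether $y_j=a$. So your claim that the branching-program encoding's distribution is determined by the computed bit is false here. The composition needs an inner encoding whose simulator uses only $T(\|x\|\bmod q)$, not the table $T$ itself. The paper's inner protocol has this property by construction. Parties send masked inputs $a_i=x_i+s_i$ and uniformly distributed tables $Q_i(y)=R_{i-1}(y-a_i)\oplus R_i(y)$, with the last table absorbing $T(y-s)$. The simulator needs only $t=T(\sum_i x_i)$, and the cost is $n\lceil\log p\rceil+npl$ for a whole table $T\colon\ZZ_p\to\zo^l$. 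Your route can also be repaired by fixing the accepting set. Use width $2q_j$ with states $(y,c)\in\ZZ_{q_j}\times\zo$. Append an input-independent layer $(y,c)\mapsto(y,c\oplus\mathbbm{1}[y\in A])$, randomized like every other layer. Accept on $\ZZ_{q_j}\times\{1\}$, which always has size $q_j$.

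Second, your parameters do not give $\widetilde{O}(n^{3/2})$. Write $N^k=\prod_j q_j\cdot\prod_j(N/q_j)$. Take $N=\Theta(\log n)$ and $k=\Theta(\log n/\log\log n)$. An interval $[(1-\delta)N,N]$ with constant $\delta$ contains only $O(\delta N/\log N)$ primes. Hence a constant fraction of your $k$ distinct primes lie below $(1-\Omega(1))N$, and $\prod_j(N/q_j)=2^{\Omega(\log n/\log\log n)}$, which is superpolylogarithmic. Your bound becomes $n^{3/2}\cdot 2^{\Theta(\log n/\log\log n)}$, and no bookkeeping within primes of size $\Theta(\log n)$ avoids this. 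Lemma~\ref{lem:small-primes} instead takes primes of size $\Theta(\log^3 n)$, all from one window of relative width $1/\lceil\log n\rceil$, stopping once the product exceeds $n$. Then $\prod_j(N/q_j)\le(1+1/\lceil\log n\rceil)^k=O(1)$ and $N^k=O(n\log^3 n)$. Your per-bit inner cost $O(nq\log q)$ remains $\widetilde{O}(n)$. With both repairs your per-bit route does reach $\widetilde{O}(n^{3/2})$, at an extra $O(\log q)$ factor over the paper's table protocol.
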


    \begin{proof}
        We compose an outer protocol, which evaluates the function from residues of the Hamming weight, with inner protocols that privately compute the outer messages.

        Write $f(x)=g(\|x\|)$ for $g:\{0,\ldots,n\}\to\{0,1\}$.

        The outer protocol uses Theorem~\ref{thm:bkn-psm} to evaluate $g$ on the weight reconstructed from its residues modulo small primes. The case $n=1$ is immediate, so assume $n\geq2$.

        Let $p_1, \ldots, p_k$ be distinct prime numbers and let $Q = \max_i p_i$. We want the following properties of this set of primes:
        \begin{enumerate}
            \item $p_1 \ldots p_k > n$,
            \item $Q^k \leq n \polylog(n)$.
        \end{enumerate}

        It is not hard to see that these properties are achievable with $k, Q = \polylog(n)$ (see Lemma~\ref{lem:small-primes} in Appendix~\ref{subsec:prime-choice}).

        Using Chinese Remainder Theorem we can define a function
        \[
        \CRT\colon \ZZ_{p_1} \times \ldots \times \ZZ_{p_k} \to \{0,1, \ldots, n\},
        \]
        such that $z=\CRT(y_1,\ldots,y_k)$ whenever $z\in\{0,\ldots,n\}$ satisfies $z=y_i\pmod{p_i}$ for all $i$; set the value to $0$ if no such $z$ exists.

        The outer protocol will be computing function
        \[
        h = g \circ \CRT \colon \ZZ_{p_1} \times \ldots \times \ZZ_{p_k} \to \{0,1\}.
        \]
        By Theorem~\ref{thm:bkn-psm}, extending this function arbitrarily to $[Q]^k$, there is a protocol for this function, in which $k$ players with inputs $y_1, \ldots, y_k$ use shared randomness $R$ and send messages $m_1(y_1,R), \ldots, m_k(y_k,R)$ to the referee, who is then able to compute $h(y_1, \ldots, y_k)$ without learning any other information about $y_1, \ldots, y_k$. Encode each outer message at a fixed length by rounding up the logarithm of its alphabet size. If we denote the length of the message $m_i$ by $l_i$, then for this protocol we have
        \[
        l_1 + \ldots + l_k = O(k^3 Q^{k/2}) = \widetilde{O}(\sqrt{n}).
        \]

        However, in our setting we have $n$ players that have inputs $x_1, \ldots, x_n \in \{0,1\}$. The obstacle is that no individual player knows the residues $y_i = \sum_j x_j \pmod{p_i}$.

        Each inner protocol lets the referee recover one outer message $m_i(y_i,R)$ without revealing the residue $y_i$. Each invocation uses fresh inner randomness, independent of $R$ and the other inner invocations. The referee then applies the outer decoder to these messages.

        For the inner protocol we arrive at the following more general problem. We have $n$ players holding private bits $x_1, \ldots, x_n$, and also sharing a secret function $T \colon \ZZ_p \to \{0,1\}^l$ (in the outer protocol, $T=m_i(\cdot,R)$, where $R$ is unknown to the referee). The goal of the players is to reveal $T\left(\sum_i x_i \pmod{p}\right)$ to the referee without revealing any other information.

        We now describe the inner protocol. The players sample independent uniform elements $s_1,\ldots,s_n\in\ZZ_p$ to mask the input bits. Denote $s = \sum_i s_i \pmod{p}$. The players also generate the family of uniformly random functions
        \[
        R_i \colon \ZZ_p \to \{0,1\}^l
        \]
        for $i = 1, \ldots, n-1$.

        The first player sends $a_1 = x_1 + s_1$ and $R_1(a_1)$.

        The $i$-th player for $i=2, \ldots, n-1$ sends $a_i = x_i + s_i$ as well as the whole table of the function $Q_i(y) = R_{i-1}(y - a_i) \oplus R_{i}(y)$.

        The last player sends $a_n = x_n + s_n$ and the whole table of the function $Q_n(y) = R_{n-1}(y - a_n) \oplus T(y - s)$.

        Each message depends only on the sender's input bit, the shared function $T$, and the shared randomness, so the parties can compute their messages simultaneously.

        The referee computes
        \[
        R_1(a_1) \oplus Q_2(a_1 + a_2) \oplus Q_3(a_1 + a_2 + a_3) \oplus \ldots \oplus Q_n(a_1 + \ldots + a_n),
        \]
        which after expansion simplifies to
        \[
        T(a_1 + \ldots + a_n - s) = T(x_1 + \ldots + x_n).
        \]

        To verify perfect privacy, fix the input bits and the table $T$. Conditional on the masked inputs $a_1,\ldots,a_n$, the map from $(R_1,\ldots,R_{n-1})$ to $(Q_2,\ldots,Q_n)$ is a bijection: recover $R_{n-1}$ from $Q_n$ and $T$, then recover the remaining $R_i$ backwards. Hence the $Q_i$ are independent uniform tables. Given only the output $t=T(\sum_i x_i)$, a simulator samples the masked inputs and tables uniformly and sets the first party's remaining message to
        \[
            b=t\oplus\bigoplus_{i=2}^n Q_i(a_1+\cdots+a_i).
        \]
        The telescoping identity above gives $b=R_1(a_1)$ in the real protocol, so this simulation has exactly the real transcript distribution. Composing it with the outer simulator proves perfect privacy of the full protocol.

        For communication, sending all masked inputs $a_i$ requires at most $n\lceil\log p\rceil$ bits of communication and sending the function tables requires $npl$ bits of communication. In total, to implement the protocol for one function $T$ they send at most
        $n\lceil\log p\rceil+npl$ bits.

        Applying the inner protocol to all messages $m_1,\ldots,m_k$ of the outer protocol gives total communication at most
        $n k\lceil\log Q\rceil+nQ\sum_i l_i=\widetilde{O}(n^{3/2})$ bits.
    \end{proof}

    \subsection{Support-Private Protocols for Symmetric Functions}

    In this section, we prove the following theorem.

    \begin{theorem}
    \label{thm:support-priv-upper-bound}
        For any symmetric Boolean function $f \colon \{0,1\}^n \to \{0,1\}$ there is a perfectly correct, support-private PSM of total complexity $\widetilde{O}(n)$.
    \end{theorem}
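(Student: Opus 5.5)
We follow the same composition as in the proof of Theorem~\ref{thm:upper-bound-perf-private-sym}, but replace the outer protocol. Instead of the perfectly private protocol of Theorem~\ref{thm:bkn-psm}, the outer protocol will be the support-private universal protocol of Theorem~\ref{thm:weakly-private-psm}. Write $f(x)=g(\|x\|)$ and choose primes $p_1,\ldots,p_k$ with $\prod_i p_i>n$ and $k,Q=\polylog(n)$, as in Lemma~\ref{lem:small-primes}. Let $h=g\circ\CRT$ on $\ZZ_{p_1}\times\cdots\times\ZZ_{p_k}$.

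The first step is to turn $h$ into a Boolean function on few bits. Encode each residue $y_i$ in binary using $\lceil\log p_i\rceil$ bits, so that $h$ becomes a Boolean function $h'$ on $m=\sum_i\lceil\log p_i\rceil=O(k\log Q)=\polylog(n)$ input bits; bit strings outside the valid range are assigned arbitrary values. Theorem~\ref{thm:weakly-private-psm} then gives a perfectly correct, support-private PSM for $h'$ with $m$ parties and total communication $\widetilde O(m^2)=\polylog(n)$. Group the $m$ bit-parties into $k$ super-parties, with super-party $i$ holding $y_i$ and sending the concatenated messages of its bits. The message of super-party $i$ is then a function $T_i=m_i(\cdot,R)\colon\ZZ_{p_i}\to\{0,1\}^{l_i}$, and $\sum_i l_i=\polylog(n)$.

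The second step is to deliver each $T_i(y_i)$, where $y_i=\|x\|\bmod p_i$, using the inner protocol from the proof of Theorem~\ref{thm:upper-bound-perf-private-sym}, with fresh inner randomness for each $i$. That inner protocol is perfectly private given its output: there is a simulator that, from $T_i(y_i)$ alone, produces exactly the real inner transcript distribution. Its cost is $n\lceil\log p_i\rceil+np_il_i$. Summing over $i$ gives total communication at most $nk\lceil\log Q\rceil+nQ\sum_i l_i=n\cdot\polylog(n)=\widetilde O(n)$. Correctness is immediate: the referee recovers all outer messages exactly, and the outer decoder is perfectly correct because $\CRT$ recovers $\|x\|$.

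The main obstacle is checking that support privacy survives the composition, since we cannot simply invoke an outer simulator with identical distributions. We argue directly with supports. Fix inputs $x,x'$ with $f(x)=f(x')$, and let $y,y'$ be their residue vectors, so that $h(y)=h(y')$. A full transcript consists of inner transcripts $\tau_1,\ldots,\tau_k$. Given the outer messages $(m_1,\ldots,m_k)$, the inner transcripts are produced independently by the inner simulators. Therefore $\tau$ lies in the support for $x$ if and only if there is an outer message tuple $\mu$ in the outer support for $y$ such that each $\tau_i$ lies in the support of the $i$th inner simulator on output $\mu_i$.

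The outer support for $y$ depends only on the binary encodings of the $y_i$, and it equals the outer support for $y'$ by support privacy of the outer protocol, since $h'$ takes the same value on both encodings. The inner simulator's output distribution depends only on $\mu_i$. Hence the two full supports coincide. One subtle point is that each inner protocol receives its table $T_i=m_i(\cdot,R)$ for the super-party, while the outer support statement concerns only the messages actually sent on $y_i$. This is fine, because the inner transcript reveals only the value $T_i(y_i)$, through its simulator.
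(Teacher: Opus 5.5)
Your proposal is correct and follows essentially the same route as the paper. You binarize $g\circ\CRT$, use the support-private protocol of Theorem~\ref{thm:weakly-private-psm} as the outer protocol with bit-parties grouped by residue, and deliver each grouped message through the inner table protocol with independent randomness. You then observe that the full transcript support is the union, over outer transcripts in the (equal) outer support, of products of inner-simulation supports. The only detail you leave implicit, which the paper states, is that the virtual outer messages must be encoded at a fixed length, so that each $T_i$ genuinely maps into $\{0,1\}^{l_i}$ and the referee can parse the concatenation.
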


    We use the same composition as in Section~\ref{subsec:upper-bound:perf-priv-sym}, replacing the outer protocol with the support-private protocol of Theorem~\ref{thm:weakly-private-psm}.

    \begin{proof}
        As in the proof of Theorem~\ref{thm:upper-bound-perf-private-sym}, write $f(x)=g(\|x\|)$ and begin by choosing the distinct primes from Lemma~\ref{lem:small-primes}, such that the following properties hold
        \begin{enumerate}
            \item $p_1 \ldots p_k > n$,
            \item $Q^k \leq n \polylog(n)$
        \end{enumerate}
        where $Q = \max_i{p_i}$.
      As in that construction, we use the Chinese Remainder Theorem to define a function
        \[
        \CRT\colon \ZZ_{p_1} \times \ldots \times \ZZ_{p_k} \to \{0,1, \ldots, n\},
        \]
        that returns the unique representative in $\{0,\ldots,n\}$ when one exists, and $0$ otherwise. Let $\CRT'$ denote a version of this function with binarized inputs, assigning $0$ to invalid residue encodings. In other words, define
        \[
        \CRT' \colon \{0, 1\}^{\lceil \log p_1\rceil} \times \ldots \times \{0, 1\}^{\lceil \log p_k\rceil} \to \{0,1,\ldots,n\}.
        \]

        The outer protocol will be computing the function

         \[
        h = g \circ \CRT' \colon \{0, 1\}^{\lceil \log p_1\rceil} \times \ldots \times \{0, 1\}^{\lceil \log p_k\rceil} \to \{0,1\}.
        \]

        Now since $Q^k \leq n\polylog(n)$, we have that $L=\sum_{i=1}^k\lceil\log p_i\rceil=O(\log n)$. By Theorem~\ref{thm:weakly-private-psm}, there is a protocol for $h$ with communication $\widetilde{O}(L^2) = \widetilde{O}(\log^2{n})$. The protocol provided by Theorem~\ref{thm:weakly-private-psm} will contain $L$ messages, since it assumes each party holds only one bit. First give each virtual message a fixed-length encoding, with a length prefix if needed; the possible factor $L$ is polylogarithmic. However, we will concatenate components associated with the same prime residue to obtain $k$ messages $m(y_1, R), \ldots, m(y_k, R)$, where $m(y_i, R)$ is a concatenation of the messages for all the players whose bit encodes the value of the residue $y_i$.

    As in the proof of Theorem~\ref{thm:upper-bound-perf-private-sym}, we cannot just use the protocol above because the players do not individually know the inputs $y_i = \sum_j x_j \pmod{p_i}$. Instead of implementing the outer protocol above explicitly, our players will use an inner protocol to compute messages $m_i$ for all $i$. More specifically, they will send the information to the referee to compute each message $m_i$ (by inner protocol), and after that the referee will be able to use the outer protocol to compute $h(y_1,\ldots, y_k)$. Letting $\ell_i$ denote the length of $m(y_i,R)$, we have from the upper bound on the outer protocol that $\sum_i \ell_i = \polylog(n)$. The inner protocol from Theorem~\ref{thm:upper-bound-perf-private-sym} suffices to send the information for all messages $m_1, \ldots, m_k$, so we omit the details here. Use independent inner randomness for these invocations. For an outer transcript $m$, let $K(m)$ denote the distribution of its inner simulations. The full transcript support is the union of $\operatorname{Supp}K(m)$ over possible outer transcripts $m$; equal outer supports therefore imply equal full supports. This gives us an upper bound of $n k\lceil\log Q\rceil+nQ\sum_i\ell_i=\widetilde{O}(n)$ bits of communication.
    \end{proof}
    
    \subsection{Upper Bound for Majority}
    Combining a known permutation branching program for $\MAJ_n$ with a standard PSM construction gives communication $O(n\log^3 n\log\log n)$.

    The following lemma is implicit in Klauck and Podder~\cite{KlauckP14-conf} (the proof can be found in Appendix 10.5 of the full version of the paper~\cite{KlauckP14-full}).
    \begin{lemma}[Klauck and Podder \cite{KlauckP14-conf}]
        The function $\MAJ_n$ can be computed by a strict permutation branching program of width $O(\log n)$ and length $O(n \log^2 n)$.
    \end{lemma}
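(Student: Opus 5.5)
The approach is to build the program in two layers: an \emph{arithmetic layer} that tracks the Hamming weight through a few small cyclic counters, and a \emph{combination layer} that turns those counters into the threshold decision using Barrington-style commutator composition. Throughout, a (strict) permutation branching program of width $w$ reads one variable per layer and applies one of two permutations of $[w]$. It outputs a fixed cycle $\sigma$ on $1$-inputs and the identity on $0$-inputs.

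\emph{Arithmetic layer.} For a prime $p$, the weight modulo $p$ is computed by a width-$p$, length-$n$ program: layer $j$ applies the shift $y\mapsto y+1$ when $x_j=1$ and the identity otherwise. Composing this with a fixed permutation turns any predicate ``$\|x\|\equiv c \pmod p$'' into a strict program. The target is then the cycle $\sigma$ after renormalizing, as in Barrington's conjugation trick, which lets us change the target cycle at no cost in length. I would choose primes $p_1,\ldots,p_k=O(\log n)$ with $\prod_i p_i>n$, using Lemma~\ref{lem:small-primes}. Each residue is then available through programs of width $O(\log n)$ and length $n$.

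\emph{Combination layer.} $\MAJ_n(x)$ is a fixed Boolean function of the residue vector $(\|x\|\bmod p_1,\ldots,\|x\|\bmod p_k)$. I would express it as a shallow formula whose leaves are strict programs for predicates ``$\|x\|\equiv c\pmod{p_i}$''. The gates are combined as follows. Negation costs only a multiplication by $\sigma^{-1}$. AND is realized by the commutator $[\,P,Q\,]=P Q P^{-1} Q^{-1}$ of programs with suitably conjugated noncommuting cycles. The commutator needs a nonabelian group of cycles, which is available in $S_w$ for any $w\ge5$, so the width stays $O(\log n)$. Each level of the formula multiplies the length by $4$, so a formula of depth $d$ gives length $4^{d}\cdot n$. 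To reach $O(n\log^2 n)$ we need $d=\log\log n+O(1)$.

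\emph{Main obstacle.} The hard step is precisely this depth budget. A generic function of the $O(\log n)$-bit residue vector needs depth about $\log n$, which would give length about $n^{3}$. To get around this, I would not decide the threshold from the residues directly. Instead I would use the width $O(\log n)$ to run a counter with many states simultaneously. A counter that keeps only the weight modulo one $p=\Theta(\log n)$, together with a coarse estimate of $\|x\|$, can be obtained by splitting the input into $O(\log n)$ blocks. The threshold comparison then reduces to an $\OR$ or $\AND$ over $O(\log n)$ leaf predicates, with a single level of branching on the carry. A balanced formula of fan-in two for this has depth $\log\log n+O(1)$, giving the claimed $O(\log^2 n)$ blow-up.

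I expect verifying that this block decomposition keeps every leaf a genuine permutation program (the strictness requirement) to be the delicate point. If it fails, the fallback is to follow Klauck and Podder's argument in~\cite{KlauckP14-full} directly.
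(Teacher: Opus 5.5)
\textbf{Verdict: there is a genuine gap.} The paper does not prove this lemma; it imports it from Klauck and Podder (Appendix 10.5 of~\cite{KlauckP14-full}). Your fallback is therefore exactly the paper's route. The self-contained Barrington-style construction you sketch has an error in its first step and leaves the decisive step unsupported.

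\textbf{First, the leaves.} You adopt the semantics ``product $\sigma$ on $1$-inputs, identity on $0$-inputs.'' Under it, the mod-$p$ counter is not a strict program for $[\|x\|\equiv c \pmod p]$.
\begin{itemize}
\item Its product is the shift $s^{\|x\|}$, which takes $p$ distinct values. Composing or conjugating with fixed permutations cannot collapse these to two values.
\item The residues $\not\equiv c$ give \emph{different} $p$-cycles, not one fixed cycle, so the commutator identity does not apply to these leaves as written.
\end{itemize}
This can be repaired. Work on $[5]\times[Q]$:
\begin{enumerate}
\item run the counter in the second coordinate;
\item apply $\sigma$ to the first coordinate exactly when the second coordinate equals $c$;
\item run the counter backwards.
\end{enumerate}
The product fixes every block $[5]\times\{t\}$ and acts on $[5]\times\{0\}$ as $\sigma$ or the identity according to $[\|x\|\equiv c\pmod p]$. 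Since commutators act blockwise, you can compose on that block, with width $5Q$ and leaf length $2n$. Separately, Lemma~\ref{lem:small-primes} produces primes of size $\Theta(\log^3 n)$, not $O(\log n)$. Primes of size $O(\log n)$ with product exceeding $n$ do exist by the prime number theorem, but that lemma is not their source.

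\textbf{Second, and more seriously, the combination step is not supplied.}
\begin{itemize}
\item Under commutator composition, a leaf at depth $d_\ell$ is copied $2^{d_\ell}$ times, and $\sum_\ell 2^{-d_\ell}\le 1$.
\item By Cauchy--Schwarz, a formula with $L$ leaves of length $\Theta(n)$, as in your arithmetic layer, has length at least $L^2 n$.
\item Your target length $O(n\log^2 n)$ therefore forces you to write $\MAJ_n$ as a formula over only $O(\log n)$ length-$O(n)$ leaf predicates.
\end{itemize}
Your last paragraph asserts such a formula without exhibiting the predicates, and its proposed mechanism does not work:
\begin{itemize}
\item A ``coarse estimate of $\|x\|$'' is equivalent to a family of thresholds $[\|x\|\ge t]$. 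These are the very functions you are trying to compute, so obtaining them this way is circular.
\item Splitting the input into $O(\log n)$ blocks does not help. Each block weight ranges over $\{0,\ldots,n/\log n\}$, which a width-$O(\log n)$ permutation program can only observe modulo small numbers.
\item The ``single level of branching on the carry'' is never defined.
\end{itemize}
Switching to the weaker notion of strictness used in Lemma~\ref{lem:pbp-psm} (fixed initial state and fixed accepting set) makes the leaves trivial. However, it also removes commutator composition, so it does not close the gap.

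As it stands, the proposal does not establish width $O(\log n)$ and length $O(n\log^2 n)$. The only complete route it contains is the citation to Klauck and Podder.
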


    The next lemma summarizes the standard conversion of permutation branching programs into perfectly private PSM protocols. The construction goes back to Kilian~\cite{Kilian1988}; see also~\cite[Section 4.1.3]{ishai2013randomization} for its formulation over general groups. Papakonstantinou and Yang~\cite{PapakonstantinouY14} give the permutation version for width $5$; the same construction generalizes to arbitrary width. We give the details in Appendix~\ref{subsec:pbp-proof}.

    \begin{lemma}[Kilian \cite{Kilian1988}]\label{lem:pbp-psm}
        If $f$ is computable by a strict permutation branching program of width $w$ and length $l$, with a fixed initial state and a fixed final accepting set, then there is a perfectly private simultaneous messages protocol for $f$ with total communication $O(wl \log w)$.
    \end{lemma}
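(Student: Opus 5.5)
We will prove Lemma~\ref{lem:pbp-psm} with Kilian's randomization of group programs, carried out over the symmetric group $S_w$. Write the strict permutation branching program as a sequence of $l$ instructions $(j_t,\sigma_{t,0},\sigma_{t,1})$ for $t\in[l]$, where $j_t\in[n]$ and $\sigma_{t,b}\in S_w$. On input $x$ the program computes the product $P(x)=\sigma_{1,x_{j_1}}\cdots\sigma_{l,x_{j_l}}$. It starts at the fixed initial state $s_0$ and accepts exactly when $P(x)(s_0)\in F$, the fixed accepting set. We may assume $f$ is non-constant; otherwise the protocol is trivial.

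\textbf{Randomization.} The shared randomness consists of uniform, independent permutations $\rho_1,\ldots,\rho_{l-1}\in S_w$. Set $\rho_0=\rho_l=\mathrm{id}$, and define the randomized instruction $\widehat\sigma_{t,b}=\rho_{t-1}^{-1}\sigma_{t,b}\rho_t$. Party $i$ sends $\widehat\sigma_{t,x_i}$ for every $t$ with $j_t=i$. Each permutation costs $\lceil\log w!\rceil=O(w\log w)$ bits, so the total communication is $O(lw\log w)$. Each message depends only on the sender's own bit and the shared randomness, so the protocol is simultaneous.

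\textbf{Correctness.} The referee multiplies the received permutations in order. The product telescopes to $\rho_0^{-1}P(x)\rho_l=P(x)$. The referee then outputs $\mathbbm{1}[P(x)(s_0)\in F]$.

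\textbf{Privacy.} This step needs care, because revealing the full product $P(x)$ would leak more than $f(x)$. A standard fact handles the randomized part: for a fixed $x$, the tuple $(\widehat\sigma_{1},\ldots,\widehat\sigma_{l})$ is uniform over all $l$-tuples in $S_w$ whose product equals $P(x)$. The reason is that the map $(\rho_1,\ldots,\rho_{l-1})\mapsto(\widehat\sigma_1,\ldots,\widehat\sigma_{l-1})$ is a bijection for fixed $x$, and $\widehat\sigma_l$ is then determined by the product. So it remains to make the distribution of the product depend only on $f(x)$.

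\textbf{Randomizing the product.} Sample two more shared permutations. Let $\alpha$ be uniform among permutations fixing $s_0$, and let $\beta$ be uniform among permutations preserving $F$ setwise. Then use $\rho_0=\alpha$ and $\rho_l=\beta$ in place of the identities. The transcript now telescopes to $\alpha^{-1}P(x)\beta$, and the referee decides by testing whether $\alpha^{-1}P(x)\beta$ maps $s_0$ into $F$. This test is valid because $\alpha^{-1}P(x)\beta(s_0)\in F$ holds iff $P(x)(\alpha(s_0))\in\beta(F)$, i.e.\ iff $P(x)(s_0)\in F$, since $\alpha(s_0)=s_0$ and $\beta(F)=F$.

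It remains to check that the product is uniform on the right set. The set of products $\alpha^{-1}\pi\beta$ is the double coset $\mathrm{Stab}(s_0)\,\pi\,\mathrm{Stab}(F)$. This double coset consists exactly of the permutations $\pi'$ with $\pi'(s_0)\in F$ when $\pi(s_0)\in F$, and of those with $\pi'(s_0)\notin F$ otherwise. The reason is that $\mathrm{Stab}(F)$ acts transitively on $F$ and on its complement. Since $\alpha$ and $\beta$ are uniform over their subgroups, the product $\alpha^{-1}\pi\beta$ is uniform on this double coset. Combining this with the uniform-tuple fact, the transcript is uniform over tuples whose product lies in the double coset determined by $f(x)$. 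Its distribution therefore depends only on $f(x)$, which is perfect privacy.

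If the paper's convention for the program already forces $\rho_0=\rho_l=\mathrm{id}$, the double-coset step is exactly what replaces it. I expect this step, handling the product, to be the only real obstacle. The remaining steps are bookkeeping and the communication count given above.
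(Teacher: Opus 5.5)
Your proof is correct. It differs from the paper's only in how the end masks are handled. Both arguments use Kilian's layer-by-layer relabeling and the bijection from masks to randomized instructions. Your $\widehat\sigma_t=\rho_{t-1}^{-1}\sigma_t\rho_t$, read with the left factor acting first, is the same map as the paper's $E_t=\rho_tP_t\rho_{t-1}^{-1}$.

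\textbf{The paper's route.} It keeps all masks $\rho_0,\ldots,\rho_l$ uniform in $S_w$, so the product of the randomized instructions is unconstrained. One party also publishes the relabeled start state $\rho_0(s_0)$ and the relabeled accepting set $\rho_l(A)$. Privacy follows because, given the $E_t$, the pair $(\rho_0(s_0),\rho_0(P^{-1}A))$ is a uniformly relabeled point and set. Its law depends only on whether the point lies in the set.

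\textbf{Your route.} You restrict $\rho_0$ to $\mathrm{Stab}(s_0)$ and $\rho_l$ to the setwise stabilizer of $F$, so nothing extra is sent. You get privacy from uniformity on the double coset $\mathrm{Stab}(s_0)\,\pi\,\mathrm{Stab}(F)$, which you correctly identify with the set of permutations having the same accept/reject behaviour.

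\textbf{How they relate.} Your protocol is exactly the paper's protocol conditioned on the published pair being $(s_0,F)$. That event is determined by the transcript and its probability does not depend on $x$, so your privacy claim also follows from the paper's. Your version saves the additive $\lceil\log w\rceil+w$ bits and sends only permutations. The paper's version avoids the double-coset characterization. Both rest on the same fact: $S_w$ acts transitively on pairs consisting of a point and an $|F|$-element set with a prescribed incidence.

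\textbf{A notational caution.} Your correctness check and double-coset step are consistent only if products are read with the left factor acting first. Your definition $P(x)=\sigma_{1,x_{j_1}}\cdots\sigma_{l,x_{j_l}}$ applied to $s_0$ implicitly requires this. Under the usual right-to-left composition, the roles of $\alpha$ and $\beta$ would swap. Likewise, the intermediate condition should read $P(x)(\alpha^{-1}(s_0))\in\beta^{-1}(F)$. This is harmless, since $\alpha$ fixes $s_0$ and $\beta$ preserves $F$.
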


    These two facts readily give us the following corollary.

    \begin{corollary}
    \label{cor:maj-upper-bound}
        There is a perfectly private simultaneous messages protocol for $\MAJ_n$ with total communication $O(n \log^3 n \log\log n)$.
    \end{corollary}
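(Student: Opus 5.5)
The plan is to plug the Klauck--Podder branching program directly into Lemma~\ref{lem:pbp-psm}. By the Klauck--Podder lemma, $\MAJ_n$ has a strict permutation branching program of width $w=O(\log n)$ and length $l=O(n\log^2 n)$. Before invoking Lemma~\ref{lem:pbp-psm}, I will check that the program has the form that lemma requires: a fixed initial state and a fixed final accepting set. A strict permutation branching program satisfies this by definition, since it starts at a designated state and accepts exactly when the composed permutation maps that state into an accepting set. If the Klauck--Podder program is instead stated with a fixed target permutation, I will convert it by fixing a start state and taking the accepting set to be the images of that state under the accepting permutations. This conversion does not change $w$ or $l$.

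Lemma~\ref{lem:pbp-psm} then gives a perfectly private PSM protocol for $\MAJ_n$ with total communication
\[
O(wl\log w)=O\bigl(\log n\cdot n\log^2 n\cdot\log\log n\bigr)=O(n\log^3 n\log\log n).
\]

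The protocol must also fit the $n$-party, one-bit-per-party model. In the branching program, each layer reads a single variable $x_i$. In the Kilian-style PSM, the message for a layer reading $x_i$ is the randomized permutation matrix for that layer, selected by $x_i$. Party $i$ therefore concatenates the messages of all layers that read $x_i$. Concatenation preserves perfect privacy and correctness, and the total communication is the sum over all layers.

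The only step I expect to need care is confirming that the program's format matches the hypotheses of Lemma~\ref{lem:pbp-psm}, in particular the fixed initial state and accepting set. The rest is arithmetic.
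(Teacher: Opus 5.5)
Your proposal is correct and follows the paper's argument: plug the Klauck--Podder program (width $w=O(\log n)$, length $l=O(n\log^2 n)$) into Lemma~\ref{lem:pbp-psm} to get $O(wl\log w)=O(n\log^3 n\log\log n)$, with each party concatenating the layer messages for its own variable. Your fallback conversion from a target-permutation program is not needed here; if you did use it, the start state would have to be chosen so that its images under accepting and rejecting products are disjoint, for example a state not fixed by the target permutation.
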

    
\bibliographystyle{alpha} 
\bibliography{mybib}

    \appendix
    \section{Derivation of Combinatorial Results}\label{sec:combinatorial}
    \newcommand{\subseq}{\preceq}
    \subsection{Simple Higman's Lemma Proof}
    For strings $u,v$, write $u\subseq v$ when $u$ is a subsequence of $v$.

    We first recall Higman's Lemma (for finite alphabets and ordered by equality).
    \begin{lemma}[Higman's Lemma~\cite{higman,beigel1990}]\label{lem:higman}
        If $\Sigma$ is a finite alphabet, then every infinite sequence of finite strings
        \[w_1,w_2,w_3,\ldots\]
        contains $i<j$ such that $w_i\subseq w_j$.
    \end{lemma}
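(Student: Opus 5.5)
We prove Higman's Lemma by the standard minimal-bad-sequence argument of Nash-Williams. Call an infinite sequence of strings \emph{bad} if there are no $i<j$ with $w_i\subseq w_j$. Suppose, for contradiction, that a bad sequence exists.

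The first step builds a minimal bad sequence inductively. Choose $w_1$ of minimal length among all first terms of bad sequences. Having fixed $w_1,\ldots,w_{i}$, choose $w_{i+1}$ of minimal length among all strings $u$ such that $w_1,\ldots,w_i,u$ extends to some bad sequence. At every stage such extensions exist, because the prefix chosen so far was taken from a bad sequence. The resulting infinite sequence $(w_i)$ is itself bad, since any witnessing pair $i<j$ already lies in a finite prefix, and every finite prefix extends to a bad sequence. No $w_i$ is the empty string, because the empty string is a subsequence of everything that follows it.

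The second step uses the finiteness of $\Sigma$. Write each $w_i=a_iu_i$ with $a_i\in\Sigma$. By the pigeonhole principle, some letter $a$ occurs as $a_i$ for infinitely many indices $i_1<i_2<\cdots$. Consider the sequence
\[
w_1,\ldots,w_{i_1-1},u_{i_1},u_{i_2},u_{i_3},\ldots
\]
Its term $u_{i_1}$ is strictly shorter than $w_{i_1}$, so by the minimality in the construction this sequence cannot be bad. Hence it contains a good pair, and we check each case against the badness of $(w_i)$.
\begin{itemize}
\item A pair $w_j\subseq w_{j'}$ with $j<j'<i_1$ is impossible, since $(w_i)$ is bad.
\item A pair $w_j\subseq u_{i_l}$ with $j<i_1\le i_l$ gives $w_j\subseq u_{i_l}\subseq w_{i_l}$, again a contradiction.
\item A pair $u_{i_l}\subseq u_{i_{l'}}$ with $l<l'$ gives $w_{i_l}=au_{i_l}\subseq au_{i_{l'}}=w_{i_{l'}}$, a contradiction.
\end{itemize}
Every case is impossible, so no bad sequence exists. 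This proves the lemma.

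The main delicate point is the construction of the minimal bad sequence. It uses dependent choice, and one must verify that the limit sequence is genuinely bad. This follows from the finite character of badness noted in the first step. The case analysis in the second step is routine.
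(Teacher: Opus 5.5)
Your proof is correct. It is the standard Nash-Williams minimal-bad-sequence argument, and every step holds:
\begin{itemize}
\item The limit sequence is bad by the finite character of badness.
\item No term can be empty.
\item If the spliced sequence $w_1,\ldots,w_{i_1-1},u_{i_1},u_{i_2},\ldots$ were bad, it would contradict the minimality of $|w_{i_1}|$ at stage $i_1$. This includes the case $i_1=1$.
\item Your three cases cover every possible good pair, because each $w_j$ with $j<i_1$ precedes each $u_{i_l}$.
\end{itemize}

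The paper, by contrast, does not prove this statement at all. It cites Higman and Beigel and uses the infinite lemma only as a black box. It then applies K\"onig's lemma to the tree of finite bad sequences to extract the finite form (Lemma~\ref{lem:Higman+}), with $|w_i|=i$, which is what the agreement lemma for $\OR$ needs. So your route makes the appendix self-contained: together with the paper's K\"onig step, it gives a complete elementary derivation of the bound $m(\Sigma)$. The paper's route is shorter because it outsources exactly the part you prove. Neither route gives an explicit value of $m(\Sigma)$, but the paper only needs its existence.

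One small refinement to your closing remark: dependent choice is not actually needed. $\Sigma$ is finite, so there are only finitely many strings of each length. At each stage you can therefore take, say, the lexicographically first shortest admissible string, which makes the minimal bad sequence canonically defined.
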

    We use K\"onig's lemma to derive a finite bound from Higman's infinite-sequence theorem.
    \begin{lemma}[K\"onig's Lemma~\cite{konig1927schlussweise}; see also Lemma 8.1.2 in~\cite{diestel2017graph}]\label{lem:konig}
        Let $\cT$ be an infinite rooted tree in which every vertex has finitely many children. Then $\cT$ contains an infinite path starting at the root.
    \end{lemma}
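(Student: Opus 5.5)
The plan is the standard greedy construction: build the path one vertex at a time, always stepping to a child whose subtree is still infinite. For a vertex $u$ of $\cT$, write $\cT_u$ for the set of descendants of $u$, with $u$ itself included. Since every vertex is a descendant of the root $\rho$, we have $\cT_\rho=\cT$, which is infinite by hypothesis. Call a vertex $u$ \emph{good} if $\cT_u$ is infinite; so $\rho$ is good.

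The key step is the following claim: every good vertex has at least one good child. Let $u$ be good with children $c_1,\ldots,c_d$; here $d$ is finite by hypothesis. Every descendant of $u$ other than $u$ passes through exactly one child, so
\[
\cT_u=\{u\}\cup\cT_{c_1}\cup\cdots\cup\cT_{c_d}.
\]
If every $\cT_{c_j}$ were finite, the right-hand side would be a finite union of finite sets, hence finite. That contradicts the assumption that $u$ is good. (In particular $d\geq 1$, since otherwise $\cT_u=\{u\}$.) So some child $c_j$ is good. This is the one place where the finite-branching hypothesis is used, and it is essentially the whole content of the lemma.

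With the claim in hand, I define the path recursively. Set $u_0=\rho$. Given a good vertex $u_k$, let $u_{k+1}$ be a good child of $u_k$, which exists by the claim. To make the choice canonical, fix in advance a linear order on each of the finitely many children sets and take the least good child. By induction every $u_k$ is good, so the recursion never halts. Each $u_{k+1}$ is a child of $u_k$, so $u_0,u_1,u_2,\ldots$ is an infinite path starting at the root.

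I do not expect a genuine obstacle. The only subtle point is foundational: the recursive choice uses a countable sequence of choices from finite sets, which is harmless here. For the application in this paper, the tree consists of finite strings over the finite alphabet $\Sigma$, and these can be ordered lexicographically, so no choice principle is needed at all.
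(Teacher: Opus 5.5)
Your proof is correct and is the standard argument. The paper does not prove this lemma itself but cites it from K\"onig and from Diestel (Lemma 8.1.2), and the cited proof rests on the same pigeonhole step: there are only finitely many options at each level, so infinitely many descendants must pass through one of them. Your remark that the paper's application needs no choice principle is a correct extra observation: there the children of a bad sequence differ only in their last string, which can be ordered lexicographically.
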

    The finite variant follows by applying K\"onig's lemma to the tree of finite bad sequences.
    \begin{proof}[Proof of (Simple) Higman's Lemma, Lemma~\ref{lem:Higman+}]
        Call a sequence $w_1,\ldots,w_\ell$ such that $|w_i|=i$ ``bad'' if for all $i<j$, $w_i \not\subseq w_j$.

        Suppose the lemma is false, so a bad sequence of length $\ell$ exists for every $\ell$.

        Note that if a sequence is ``bad,'' any prefix of that sequence is also ``bad.''

        Thus we can build an arbitrarily deep tree of ``bad'' sequences by letting vertices at depth $\ell$ be ``bad'' sequences of length $\ell$, $w_1,\ldots,w_\ell$, with parent being the $(\ell-1)$-length prefix, $w_1,\ldots,w_{\ell-1}$ (which is also ``bad''). Note that this tree is finitely branching because a node at depth $\ell$ can only have $|\Sigma^{\ell+1}|$ many children (because we required $|w_i|=i$).
        Hence by Lemma~\ref{lem:konig}, $\cT$ contains an infinite path from which we can derive an infinite ``bad'' sequence, a sequence of strings
        \[w_1,w_2,w_3,\ldots \qquad \mbox{(with $|w_i|=i,\forall i)$}\]
        such that there does not exist $i<j$ such that $w_i\subseq w_j$, contradicting Lemma~\ref{lem:higman}.
    \end{proof}

    \subsection{Iterated Finite Ramsey}
        We first recall the finite Ramsey theorem.
    \begin{theorem}[Finite Ramsey's Theorem~\cite{ramsey,Graham2015Ramsey}]\label{thm:ramsey}
        For all positive integers $k,m,r$ there is an integer $R_k(m;r)$ such that whenever the $k$-subsets of a set $X$ of size at least $R_k(m;r)$ are colored with $r$ colors, then there is a subset
        \[ Y\subseteq X,\qquad |Y|=m,\]
        such that all $k$-element subsets of $Y$ have the same color.
    \end{theorem}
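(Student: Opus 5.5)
We prove the statement by induction on $k$, using a pigeonhole argument for the base case and the standard ``greedy sequence'' construction for the inductive step. Degenerate cases are immediate. If $m<k$, every set of size $m$ works vacuously. If $r=1$, we may take $R_k(m;1)=m$.

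\emph{Base case $k=1$.} A coloring of the $1$-subsets of $X$ is just a coloring of the points of $X$. By pigeonhole, if $|X|\ge r(m-1)+1$, some color class has at least $m$ points. So we set $R_1(m;r)=r(m-1)+1$.

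\emph{Inductive step.} Assume $R_{k-1}(\cdot;r)$ exists for all targets, and set $t:=r(m-1)+1$. Fix a coloring $c$ of the $k$-subsets of a large set $X$. We build points $x_1,\dots,x_t$ and nested sets $X=X_0\supseteq X_1\supseteq\cdots\supseteq X_t$ by the following step, applied for $j=1,\dots,t$.
\begin{enumerate}
\item Pick any $x_j\in X_{j-1}$.
\item Color each $(k-1)$-subset $A\subseteq X_{j-1}\setminus\{x_j\}$ by $c(\{x_j\}\cup A)$.
\item Apply the induction hypothesis to obtain a homogeneous subset $X_j\subseteq X_{j-1}\setminus\{x_j\}$ of a prescribed size $N_j$, with common color $c_j$.
\end{enumerate}
By construction, for every $j$ and every $(k-1)$-set $A\subseteq\{x_{j+1},\dots,x_t\}\subseteq X_j$, we have $c(\{x_j\}\cup A)=c_j$. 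The color thus depends only on the smallest-indexed element of the $k$-set.

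To finish, apply pigeonhole to the sequence $c_1,\dots,c_t$. This gives $m$ indices $j_1<\dots<j_m$ with the same value $c_{j_1}=\cdots=c_{j_m}$. Set $Y=\{x_{j_1},\dots,x_{j_m}\}$. Every $k$-subset of $Y$ has its minimal-index element among the $x_{j_i}$, so it receives that common color. Hence $Y$ is homogeneous.

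\emph{Main obstacle: the size bookkeeping.} We must guarantee that each $X_{j-1}$ is large enough to run the step. Define the sizes backwards: $N_t:=1$ (and more generally at least $\max(1,k-1)$ as needed), and
\[
N_{j-1}:=R_{k-1}(N_j;r)+1 .
\]
Then $R_k(m;r):=N_0$ suffices. The one check needed is that $X_{j-1}\setminus\{x_j\}$ has at least $R_{k-1}(N_j;r)$ elements. This holds because $|X_{j-1}|\ge N_{j-1}=R_{k-1}(N_j;r)+1$.

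The only subtlety is the final stages, when fewer than $k-1$ points remain after $x_j$. There the induced coloring is vacuous, and any choice of $c_j$ is harmless, since no $k$-subset of $Y$ uses $x_j$ as its minimum together with too few later points.

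Alternatively, one could derive the statement from infinite Ramsey by compactness (K\"onig's Lemma~\ref{lem:konig}, as in the Higman derivation). The direct induction above, however, gives an explicit bound and is the route I would write out.
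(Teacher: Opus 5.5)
Your argument is correct. It is the standard induction on $k$: you build a greedy sequence $x_1,\dots,x_t$ so that the color of a $k$-set depends only on its minimal element, then apply pigeonhole to $c_1,\dots,c_t$. The recursion $N_{j-1}=R_{k-1}(N_j;r)+1$ does guarantee $|X_{j-1}\setminus\{x_j\}|\ge R_{k-1}(N_j;r)$ at every step. Your handling of the late stages is also right: since $\{x_{j+1},\dots,x_t\}\subseteq X_j$, if $|X_j|<k-1$ then no $k$-subset of $\{x_1,\dots,x_t\}$ has $x_j$ as its minimum, so $c_j$ is irrelevant.

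The paper does not prove this statement. It imports it from \cite{ramsey,Graham2015Ramsey} as a black box and proves only the iterated variant, Theorem~\ref{thm:Ramsey+}. That proof applies the theorem successively for uniformities $k=1,\dots,m$, with sizes defined backwards by $N_{k-1}=R_k(N_k;|\Sigma|^k)$. Your nested-set bookkeeping is the same device, applied along the greedy sequence at a fixed uniformity rather than across uniformities.

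Citing suffices for the paper, because Lemma~\ref{lem:match} needs only the existence of $N(\Sigma)$. Your proof makes the appendix self-contained and gives an explicit, if very fast-growing, bound on $R_k(m;r)$. That alone would not make $N(\Sigma)$ or the $\omega(n)$ bound for $\OR$ quantitative, since the finite Higman lemma is derived non-constructively via K\"onig's lemma.

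One cosmetic point: the parenthetical about taking $N_t\ge\max(1,k-1)$ is unnecessary, because $c_t$ never matters.
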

    The desired form follows by induction.
    \begin{proof}[Proof of Theorem~\ref{thm:Ramsey+}]
        Let $r_k=|\Sigma|^k$ be the number of possible colors for $k$-sets. Define
        \begin{align*}
            N_m&:=m,
            N_{k-1} := R_k(N_k;r_k) \mbox{ for $k=m,m-1,\ldots,1$}
        \end{align*}
        Take $N(\Sigma,m) = N_0$.
    Now, we iteratively invoke Theorem~\ref{thm:ramsey}. Start with $X_0$ a set of size $N_0$. Then inductively, at step $k$ we have a set $X_{k-1}$ of size $N_{k-1}=R_k(N_k;r_k)$ so by Theorem~\ref{thm:ramsey} we have that there must be some
    \[
        X_k \subseteq X_{k-1}, \qquad |X_k|=N_k \mbox{ where all $k$-sets have the same color $u_k\in\Sigma^k$}.
    \]
    
    Moreover, by the inductive hypothesis, for $\ell$ such that $1\le \ell \le k-1$, $\ell$-subsets of $X_k$ have the same color because $X_k$ is a subset of $X_{k-1}$. Hence $\ell$-subsets of $X_k$ are the same color for any $1\le \ell\le k$.
    
    Thus, we have constructed a tower of sets
        \[
            X_0\supseteq X_1\supseteq\cdots\supseteq X_m, \qquad |X_k| = N_k,
        \]
        where $H=X_m$ has size $N_m = m$ and for any $k\le m$, all $k$-subsets have the same color.
    \end{proof}

    \subsection{Choice of Small Primes}\label{subsec:prime-choice}
    \begin{lemma}\label{lem:small-primes}
        For all sufficiently large $n$, there are distinct primes $p_1,\ldots,p_k$, with $Q=\max_i p_i$, such that $p_1\cdots p_k>n$, $Q^k=O(n\log^3 n)$, and $k,Q=\polylog(n)$.
    \end{lemma}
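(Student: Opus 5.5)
The plan is to take the first several primes $2=p_1<p_2<\cdots<p_k$ and choose $k$ minimal such that $p_1\cdots p_k>n$. Then $Q=p_k$ and all three conditions follow from standard prime estimates. We use Chebyshev-type bounds on the primorial: $\vartheta(x)=\sum_{p\le x}\log p=\Theta(x)$, or equivalently $\prod_{p\le x}p = 2^{\Theta(x)}$, together with $\pi(x)=\Theta(x/\log x)$. Both are classical (Chebyshev), and we will cite them rather than reprove them.

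\emph{Step 1: bound $Q$.} By minimality of $k$, $p_1\cdots p_{k-1}\le n$. Since $\prod_{p\le x}p\ge 2^{cx}$ for some absolute $c>0$ and large $x$, the primes up to $p_{k-1}$ have product at most $n$, so $p_{k-1}=O(\log n)$. Bertrand's postulate then gives $Q=p_k\le 2p_{k-1}=O(\log n)$.

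\emph{Step 2: bound $k$ and $Q^k$.} From $\pi(Q)=\Theta(Q/\log Q)$ we get $k=O(\log n/\log\log n)$, so $k,Q=\polylog(n)$. The real content is the bound $Q^k=O(n\log^3 n)$. Write $Q^k=\prod_{i\le k} (Q/p_i)\cdot \prod_{i\le k}p_i$. We have $\prod_{i\le k}p_i = p_k\prod_{i<k}p_i\le Qn=O(n\log n)$, so it suffices to show $\prod_{i\le k}(Q/p_i)=\exp\bigl(k\log Q-\vartheta(Q)\bigr)$ is $O(\log^2 n)$. Now $k\log Q-\vartheta(Q)=\sum_{p\le Q}\log(Q/p)=\int_2^Q \pi(t)\,dt/t$. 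Using $\pi(t)\le C t/\log t$, this is $O(Q/\log Q)=O(\log n/\log\log n)$.

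This gives only $\exp(O(\log n/\log\log n))$, which is $n^{o(1)}$ and weaker than $\log^2 n$. This is the main obstacle, and the crude choice above may not yield the stated $O(n\log^3 n)$. I would try two fixes in order. First, use sharper forms of the prime number theorem, $\vartheta(Q)=Q+O(Q/\log^2 Q)$ and $\pi(Q)=\mathrm{li}(Q)+O(Q/\log^2Q)$. These give $k\log Q-\vartheta(Q)=Q/\log Q+O(Q/\log^2Q)$, which is still too big, so they confirm the issue. Second, and this is the likely intended route, choose the primes not as the first $k$ primes but as all primes in a window $(Q/2,Q]$, or more generally $(Q(1-\varepsilon),Q]$. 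Each such prime satisfies $Q/p_i\le 1/(1-\varepsilon)$. Taking the primes from the top of an interval, and adding primes in decreasing order from $Q$ until the product exceeds $n$, gives $Q^k\le (1-\varepsilon)^{-k}\prod p_i$. With $\prod p_i\le Qn$ and the ratio loss controlled, I would pick $\varepsilon\approx 1/\log n$ and use that $(Q(1-\varepsilon),Q]$ contains about $\varepsilon Q/\log Q$ primes. This requires primes in short intervals, so take $Q$ somewhat larger, polylogarithmic, say $Q=\log^{C}n$, with $k\approx\log n/(C\log\log n)$. Then $(1-\varepsilon)^{-k}=O(1)$ and $Q^k\le O(1)\cdot Qn=O(n\log^{C}n)$. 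The existence of enough primes in $(Q-Q/\log n,Q]$ follows from the prime number theorem with error term when $Q$ is a large enough power of $\log n$. Tuning the exponents, possibly with a slightly different window, should recover the $\log^3 n$ factor. Since only $Q^k\le n\polylog(n)$ is used in Theorems~\ref{thm:upper-bound-perf-private-sym} and~\ref{thm:support-priv-upper-bound}, I would settle for that bound if the exact exponent resists.
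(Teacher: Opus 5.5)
Your analysis of the first-$k$-primes choice is correct. Indeed $\sum_{p\le Q}\log(Q/p)\sim Q/\log Q$, so that choice only gives $Q^k=n^{1+o(1)}$. Your second choice also has the right scale: a window of relative width about $1/\log n$ near $\log^3 n$ is exactly what the paper uses.

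The gap is your justification that the \emph{prescribed} window $(Q-Q/\log n,Q]$, with $Q=\log^C n$, contains at least $k\approx\log n/(C\log\log n)$ primes. That window has length $Q^{1-1/C}$. The best unconditional error term in the prime number theorem, $O\bigl(Q\exp(-c(\log Q)^{3/5}(\log\log Q)^{-1/5})\bigr)$, is much larger than the main term $Q^{1-1/C}/\log Q$ for every fixed $C$. So the prime number theorem with error term cannot even show that this window is nonempty, and taking $C$ larger does not help. Counting primes in $(x-x^{\theta},x]$ for a fixed $\theta<1$ requires Hoheisel-type zero-density theorems. To land on the exponent $3$ you would need $\theta=2/3$, i.e.\ Huxley's $7/12$ theorem, which you do not invoke. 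As written, the proposal establishes neither the prime count in the window nor the $\log^3 n$ factor, and you concede you might give up the latter.

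The missing idea is that the window need not sit at a prescribed place. Any short interval that happens to be dense in primes will do, and pigeonhole produces one from a Chebyshev-type estimate. The paper takes $t=\lceil\log n\rceil$ and $X=t^3$, uses only $\pi(2X)-\pi(X)=\Omega(X/\log X)$, and splits $(X,2X]$ into $t$ consecutive intervals of length $X/t=t^2$. One of them contains $\Omega(t^2/\log t)$ primes. That is more than the at most $\lfloor\log n/\log X\rfloor+1\le t$ primes needed, since every prime there exceeds $X$. Pick primes from that interval until the product $P$ first exceeds $n$. Then $P\le nQ$ and $Q/p_i\le 1+t^2/X=1+1/t$, so
\[
Q^k=P\prod_i(Q/p_i)\le nQ(1+1/t)^t\le e\,nQ\le 2e\,nX=O(n\log^3 n),
\]
with $k\le t$ and $Q\le 2X$ both polylogarithmic. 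This is your window argument with $\varepsilon=1/t$, except that the window's location is found by averaging rather than fixed in advance. That change makes the whole proof elementary.
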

    \begin{proof}
        Put $t=\lceil\log n\rceil$ and $X=t^3$. Standard prime-counting estimates, for example~\cite[Theorem 6.9]{Dusart2010}, give $\pi(2X)-\pi(X)=\Omega(X/\log X)$. Partition $(X,2X]$ into $t$ intervals of length $X/t$. One interval contains $\Omega(X/(t\log X))$ primes, more than $\lfloor\log n/\log X\rfloor+1$ for sufficiently large $n$. Select primes in this interval until their product $P$ first exceeds $n$. Then
        \[
            k\leq\lfloor\log n/\log X\rfloor+1,\qquad Q\leq2X,\qquad P\leq nQ.
        \]
        Since all selected primes lie in one such interval, $Q/p_i\leq1+1/t$, and hence
        \[
            Q^k=P\prod_i(Q/p_i)\leq nQ(1+1/t)^k=O(nX).
        \]
    \end{proof}

    \subsection{Permutation Branching Program Conversion}\label{subsec:pbp-proof}
    \begin{proof}[Proof of Lemma~\ref{lem:pbp-psm}]
        Write the transitions as permutations $P_j(x_{i_j})$ of $[w]$, with initial state $s_0$ and accepting set $A$. Sample independent uniform permutations $\rho_0,\ldots,\rho_l$. The owner of $x_{i_j}$ sends $E_j=\rho_jP_j(x_{i_j})\rho_{j-1}^{-1}$; one party also sends $\alpha=\rho_0(s_0)$ and $A'=\rho_l(A)$. The evaluator tests whether $E_l\cdots E_1(\alpha)\in A'$, giving perfect correctness.

        Put $P=P_l\cdots P_1$ and $E=E_l\cdots E_1$. The map from the masks to $(\rho_0,E_1,\ldots,E_l)$ is bijective, so the $E_j$ and $\rho_0$ are independent uniform permutations. Conditional on the $E_j$, the pair
        \[
            (\alpha,E^{-1}A')=(\rho_0(s_0),\rho_0(P^{-1}A))
        \]
        is a uniformly relabeled point and set, whose law depends only on whether $P(s_0)\in A$. A simulator therefore samples the $E_j$ and $\alpha$ uniformly, then a uniform $|A|$-element set $A'$ containing $E(\alpha)$ exactly when the output is one. Communication is
        \[
            l\lceil\log(w!)\rceil+\lceil\log w\rceil+w=O(wl\log w).
        \]
    \end{proof}

\end{document}